\crefname{lemma}{Lemma}{Lemmas}
\crefname{fact}{Fact}{Facts}
\newcommand{\colorconstraints}{\text{Color Constraints}}
\crefname{colorconstraints}{(color constraints)}{Color Constraints}
\crefname{indsetconstraints}{(indset constraints)}{IndSet Constraints}
\crefname{theorem}{Theorem}{Theorems}
\crefname{mtheorem}{Theorem}{Theorems}
\crefname{corollary}{Corollary}{Corollaries}
\crefname{claim}{Claim}{Claims}
\crefname{example}{Example}{Examples}
\crefname{algorithm}{Algorithm}{Algorithms}
\crefname{problem}{Problem}{Problems}
\crefname{definition}{Definition}{Definitions}
\newtheorem{theorem}{Theorem}[section]
\newtheorem{mtheorem}{Theorem}
\newtheorem*{theorem*}{Theorem}
\newtheorem*{proposition*}{Proposition}
\newtheorem{lemma}[theorem]{Lemma}
\newtheorem*{lemma*}{Lemma}
\newtheorem*{conjecture*}{Conjecture}
\newtheorem{fact}[theorem]{Fact}
\newtheorem*{fact*}{Fact}
\newtheorem*{hypothesis*}{Hypothesis}
\theoremstyle{definition}
\newtheorem{definition}[theorem]{Definition}
\newtheorem*{definition*}{Definition}
\newtheorem{algorithm}[theorem]{Algorithm}
\theoremstyle{remark}
\newtheorem{claim}[theorem]{Claim}
\newtheorem*{claim*}{Claim}
\newtheorem{remark}[theorem]{Remark}
\newtheorem*{remark*}{Remark}
\newtheorem*{observation*}{Observation}
\let\mathbb\varmathbb
\definecolor{petergreen}{rgb}{0, 0.75, 0}
\newcommand{\FormatAuthor}[3]{
\begin{tabular}{c}
#1 \\ {\small\texttt{#2}} \\ {\small #3}
\end{tabular}
}
\newcommand{\keywords}[1]{\bigskip\par\noindent{\footnotesize\textbf{Keywords\/}: #1}}
\newcommand{\R}{{\mathbb R}}
\newcommand{\N}{{\mathbb N}}
\newcommand{\norm}[1]{\lVert #1 \rVert}
\newcommand{\abs}[1]{\lvert #1 \rvert}
\newcommand{\floor}[1]{\lfloor #1 \rfloor}
\newcommand{\eps}{\varepsilon}
\newcommand{\E}{{\mathbb E}}
\newcommand{\Bits}{\{0,1\}}
\newcommand{\Fits}{\{-1,1\}}
\newcommand{\Var}{\mathrm{Var}}
\newcommand{\poly}{\mathrm{poly}}
\DeclareMathOperator*{\val}{\mathrm{val}}
\newcommand{\mper}{\,.}
\newcommand{\mcom}{\,,}
\DeclareMathOperator*{\polylog}{\mathrm{polylog}}
\newcommand{\Code}{\mathcal{C}}
\newcommand{\Dec}{\mathrm{Dec}}
\newcommand{\defeq}{\coloneqq}
\renewcommand{\Xi}{\xi}
\newcommand\MYcurrentlabel{xxx}
\newcommand{\MYstore}[2]{%
  \global\expandafter \def \csname MYMEMORY #1 \endcsname{#2}%
}
\newcommand{\MYload}[1]{%
  \csname MYMEMORY #1 \endcsname%
}
\newcommand{\MYnewlabel}[1]{%
  \renewcommand\MYcurrentlabel{#1}%
  \MYoldlabel{#1}%
}
\newcommand{\MYdummylabel}[1]{}
\newcommand{\torestate}[1]{%
  \let\MYoldlabel\label%
  \let\label\MYnewlabel%
  #1%
  \MYstore{\MYcurrentlabel}{#1}%
  \let\label\MYoldlabel%
}
\newcommand{\restatetheorem}[1]{%
  \let\MYoldlabel\label
  \let\label\MYdummylabel
  \begin{theorem*}[Restatement of \cref{#1}]
    \MYload{#1}
  \end{theorem*}
  \let\label\MYoldlabel
}
\newcommand{\restatelemma}[1]{%
  \let\MYoldlabel\label
  \let\label\MYdummylabel
  \begin{lemma*}[Restatement of \cref{#1}]
    \MYload{#1}
  \end{lemma*}
  \let\label\MYoldlabel
}
\newcommand{\restateprop}[1]{%
  \let\MYoldlabel\label
  \let\label\MYdummylabel
  \begin{proposition*}[Restatement of \cref{#1}]
    \MYload{#1}
  \end{proposition*}
  \let\label\MYoldlabel
}
\newcommand{\subalign}[1]{%
  \vcenter{%
    \Let@ \restore@math@cr \default@tag
    \baselineskip\fontdimen10 \scriptfont\tw@
    \advance\baselineskip\fontdimen12 \scriptfont\tw@
    \lineskip\thr@@\fontdimen8 \scriptfont\thr@@
    \lineskiplimit\lineskip
    \ialign{\hfil$\m@th\scriptstyle##$&$\m@th\scriptstyle{}##$\hfil\crcr
      #1\crcr
    }%
  }%
}
\newcommand{\restatefact}[1]{%
  \let\MYoldlabel\label
  \let\label\MYdummylabel
  \begin{fact*}[Restatement of \prettyref{#1}]
    \MYload{#1}
  \end{fact*}
  \let\label\MYoldlabel
}
\newcommand{\restate}[1]{%
  \let\MYoldlabel\label
  \let\label\MYdummylabel
  \MYload{#1}
  \let\label\MYoldlabel
}
\begin{document}

\title{A $k^{\frac{q}{q-2}}$ Lower Bound for Odd Query Locally Decodable Codes from Bipartite Kikuchi Graphs}

\author{
\begin{tabular}[h!]{cc}
      \FormatAuthor{Oliver Janzer}{oliver.janzer@epfl.ch}{EPFL} &
      \FormatAuthor{Peter Manohar\thanks{This material is based upon work supported by the National Science Foundation
under Grant No.\ DMS-1926686.}}{pmanohar@ias.edu}{The Institute for Advanced Study}
\end{tabular}
} %
\date{\today}

\maketitle

\begin{abstract}
A code $\Code \colon \Bits^k \to \Bits^n$ is a $q$-query locally decodable code ($q$-LDC) if one can recover any chosen bit $b_i$ of the message $b \in \Bits^k$ with good confidence by querying a corrupted string $\tilde{x}$ of the codeword $x = \Code(b)$ in at most $q$ coordinates. For $2$ queries, the Hadamard code is a $2$-LDC of length $n = 2^k$, and this code is in fact essentially optimal~\cite{KerenidisW04,GoldreichKST06}. For $q \geq 3$, there is a large gap in our understanding: the best constructions achieve $n = \exp(k^{o(1)})$, while prior to the recent work of~\cite{AlrabiahGKM23}, the best lower bounds were $n \geq \tilde{\Omega}(k^{\frac{q}{q-2}})$ for $q$ even and $n \geq \tilde{\Omega}(k^{\frac{q+1}{q-1}})$ for $q$ odd.

The recent work of~\cite{AlrabiahGKM23} used techniques from semirandom XOR refutation to prove a lower bound of $n \geq \tilde{\Omega}(k^3)$ for $q = 3$, thus achieving the ``$k^{\frac{q}{q-2}}$ bound'' for an odd value of $q$. However, their proof does not extend to any odd $q \geq 5$.
In this paper, we prove a $q$-LDC lower bound of $n \geq \tilde{\Omega}(k^{\frac{q}{q-2}})$ for \emph{any} odd $q$. Our key technical idea is the use of an imbalanced bipartite Kikuchi graph, which gives a simpler method to analyze spectral refutations of \emph{odd} arity XOR without using the standard ``Cauchy--Schwarz trick'' --- a trick that typically produces random matrices with nontrivially correlated entries and makes the analysis for odd arity XOR significantly more complicated than even arity XOR.
\keywords{Locally Decodable Codes, Spectral Refutation, Kikuchi Matrices}
\end{abstract}
\thispagestyle{empty}
\clearpage
 \microtypesetup{protrusion=false}
  \tableofcontents{}
  \microtypesetup{protrusion=true}

\thispagestyle{empty}
\clearpage

\pagestyle{plain}
\setcounter{page}{1}

\section{Introduction}
A (binary) locally decodable code (LDC) $\Code \colon \Bits^k \to \Bits^n$ is an error correcting code that admits a local decoding algorithm --- for any message $b \in \Bits^k$ and any string $\tilde{x} \in \Bits^n$ obtained by corrupting the codeword $x = \Code(b)$ in a small constant fraction of coordinates, the local decoder is able to recover any bit $b_i$ of $b$ with good confidence while only reading a small number of coordinates of the corrupted codeword $\tilde{x}$. More formally, we say that $\Code$ is $(q, \delta, \eps)$-locally decodable if the decoder only reads at most $q$ bits of the corrupted string, and for any $\tilde{x}$ with Hamming distance $\Delta(x, \tilde{x}) \defeq \abs{\{u \in [n] : x_u \ne \tilde{x}_u\}} \leq \delta n$ and any input $i \in [k]$ to the decoder, the decoder recovers $b_i$ with probability at least $\frac{1}{2} + \eps$. Locally decodable codes were first formally defined in the work of~\cite{KatzT00}, although they were instrumental components in the earlier proof of the PCP theorem \cite{AroraS98, AroraLMSS98}, and have connections to complexity theory (see \cite{Trevisan04} and~\cite[Section 7]{Yekhanin12}). Example applications include worst-case to average-case reductions \cite{Trevisan04}, private information retrieval \cite{Yekhanin10}, secure multiparty computation \cite{IshaiK04}, derandomization \cite{DvirS05}, matrix rigidity \cite{Dvir10}, data structures \cite{Wolf09,ChenGW10}, and fault-tolerant computation \cite{Romashchenko06}.

A central question in coding theory is to determine the optimal blocklength $n$ of a $(q,\delta,\eps)$-LDC as a function of $k$, the length of the message, and $q$, the number of queries, in the regime where $q$ is constant and $\delta, \eps$ are also constant. The work of~\cite{KatzT00} shows that there are no $1$-query locally decodable codes unless $k$ is constant, so the first nontrivial setting of $q$ is $q = 2$, and they additionally prove a superlinear lower bound for all $q \geq 2$. For $2$-query locally decodable codes, we have an essentially complete understanding: the Hadamard code gives a $2$-LDC with $n = 2^k$, and the works of~\cite{KerenidisW04,GoldreichKST06} show a lower bound of $n \geq 2^{\Omega(k)}$, which is therefore tight up to a constant in the exponent.

Unlike the case of $q = 2$, for $q \geq 3$ there is a large gap between the best-known upper and lower bounds on $n$. The best-known upper bound, i.e., construction, comes from matching vector codes~\cite{Yekhanin08,Efremenko09}, and achieves, in the case of $q = 3$, a blocklength of $n = \exp(\exp(O(\sqrt{\log k \log \log k})))$. This is $2^{k^{o(1)}}$, i.e., subexponential in $k$, which is substantially smaller than the Hadamard code, the code of optimal length for $q = 2$. More generally, for any constant $q = 2^r$, the works of~\cite{Yekhanin08,Efremenko09} construct $q$-query locally decodable codes of length $n = \exp(\exp(O( (\log k)^{1/r} (\log \log k)^{1 - 1/r})))$, which has a similar qualitative subexponential behavior.

On the other hand, the known lower bounds for $q \geq 3$ are substantially weaker. The original work of~\cite{KatzT00} proves that a $q$-LDC has blocklength $n \geq \Omega(k^{\frac{q}{q-1}})$. This was later improved by the work of~\cite{KerenidisW04}, which showed that for even $q$, a $q$-LDC has blocklength $n \geq \tilde{\Omega}(k^{\frac{q}{q-2}})$. For odd $q$, they observe that a $q$-LDC is also a $(q+1)$-LDC where $q+1$ is now even, and so their lower bound for even $q$ trivially yields a bound of $n \geq \tilde{\Omega}(k^{\frac{q+1}{q-1}})$ for odd $q$.

The lower bounds of~\cite{KerenidisW04} remained, up to $\polylog(k)$ factors, the best lower bounds known until the recent work of~\cite{AlrabiahGKM23}, which used spectral methods developed in the work of~\cite{GuruswamiKM22} for refuting constraint satisfaction problems to prove a lower bound of $n \geq \tilde{\Omega}(k^3)$ for $3$-LDCs. This lower bound of~\cite{AlrabiahGKM23} was the first improvement in any LDC lower bound by a $\poly(k)$ factor since the work of~\cite{KerenidisW04}, and achieves the ``$k^{\frac{q}{q-2}}$ bound'' established for even $q$ for the odd value of $q = 3$.
However, the proof of~\cite{AlrabiahGKM23} does not extend to any odd $q \geq 5$, and while the spectral method approach of~\cite{AlrabiahGKM23} was used in recent work of~\cite{KothariM23} (and follow-ups \cite{Yankovitz24,AlrabiahG24,KothariM24}) to prove an exponential lower bound for $3$-query locally \emph{correctable} codes (LCCs)\footnote{These lower bounds are for both linear $3$-LCCs (\cite{KothariM23,Yankovitz24,AlrabiahG24}) and nonlinear $3$-LCCs (\cite{KothariM24}), as well as the more specialized case of ``design $3$-LCCs'' (\cite{Yankovitz24,KothariM24}). The work of~\cite{AlrabiahG24} also proves a slightly stronger lower bound of $n \geq \tilde{\Omega}(k^{\frac{q-1}{q-3}})$ for $q$-LCCs for odd $q \geq 5$.}  --- a stronger variant of an LDC where the decoder must additionally be able to correct any bit $x_u$ of the uncorrupted codeword --- there have been no improvements in $q$-LDC lower bounds since~\cite{AlrabiahGKM23}. In particular, because the proof of~\cite{AlrabiahGKM23} does not extend to all odd $q \geq 5$, the best known lower bounds for $q$-LDCs are: \begin{inparaenum}[(1)] \item $n \geq \tilde{\Omega}(k^{\frac{q}{q-2}})$, if $q$ is even or $q = 3$, and \item $n \geq \tilde{\Omega}(k^{\frac{q+1}{q-1}})$ if $q \geq 5$ is odd\end{inparaenum}. Thus, a natural question to ask is: \emph{can we close this gap and prove a $n \geq \tilde{\Omega}(k^{\frac{q}{q-2}})$ lower bound for $q$-LDCs for all constant $q$?}

As the main result of this paper, we prove the following theorem, which establishes this lower bound.
\begin{mtheorem}
\label{mthm:oddqldc}
Let $\Code \colon \Bits^k \to \Bits^n$ be a $(q, \delta, \eps)$-locally decodable code with $q \geq 3$ and $q$ odd. Then, $k \leq O_q(n^{1 - \frac{2}{q}} \eps^{-6 - \frac{2}{q}} \delta^{-2 - \frac{2}{q}} \log n)$. In particular, if $q,\delta, \eps$ are constants, then $n \geq \Omega\left(\left(k/\log k\right)^{\frac{q}{q-2}} \right)$.
\end{mtheorem}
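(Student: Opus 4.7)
The overall plan is to follow the spectral refutation framework of~\cite{AlrabiahGKM23}, building on~\cite{GuruswamiKM22}, but to replace the standard symmetric Kikuchi graph with an imbalanced bipartite variant built to accommodate odd arity without the Cauchy--Schwarz trick. First, I invoke the standard Katz--Trevisan normal-form reduction: a $(q,\delta,\eps)$-LDC yields $k$ matchings $M_1,\ldots,M_k$ of $q$-uniform hyperedges on $[n]$ with $|M_i|\ge \Omega(\delta\eps n)$, such that for every $b\in\{-1,+1\}^k$ there is $x\in\{-1,+1\}^n$ satisfying
$$\sum_{i=1}^k b_i \sum_{C\in M_i} \prod_{u\in C} x_u \;\ge\; \Omega(\eps)\sum_i|M_i| \;=\; \Omega(\eps^2 \delta n k).$$
The theorem then reduces to showing this XOR system is refutable in expectation over random $b$ once $k$ exceeds $n^{1-2/q}$ (up to polylogs).

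Next, build the bipartite Kikuchi matrices. Fix a parameter $\ell\ge (q-1)/2$, set $V_1=\binom{[n]}{\ell}$, $V_2=\binom{[n]}{\ell+1}$, and define $A_i\in\{0,1\}^{V_1\times V_2}$ by $(A_i)_{S,T}=1$ iff $S\oplus T\in M_i$. The key arithmetic fact is that $|S|+|T|=2\ell+1$ has the same parity as odd $q$, so every edge has $|S\cap T|=t\defeq\ell-(q-1)/2\in\Z_{\ge 0}$ and splits each $C$ as $|S\cap C|=(q-1)/2$, $|T\cap C|=(q+1)/2$. A symmetric Kikuchi matrix on $\binom{[n]}{\ell}$ instead requires $|S\cap C|=q/2$, which is non-integral for odd $q$ and is precisely the obstruction that forces the Cauchy--Schwarz trick in~\cite{AlrabiahGKM23}. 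The spectral lower bound is then immediate: setting $z_S=\prod_{u\in S}x_u$ and $w_T=\prod_{u\in T}x_u$, each edge $(S,T)$ from $C$ satisfies $z_S w_T=\prod_{u\in C}x_u$, and each $C$ contributes $E_c\defeq\binom{n-q}{t}\binom{q}{(q-1)/2}$ edges, so
$$z^\top\Bigparen{\sum_i b_i A_i}w \;=\; E_c\sum_i b_i\sum_{C\in M_i}\prod_{u\in C}x_u \;\ge\; \Omega(E_c\cdot \eps^2\delta n k),$$
and dividing by $\|z\|_2\|w\|_2=\sqrt{|V_1||V_2|}$ yields a lower bound on $\|\sum_i b_i A_i\|_{\mathrm{op}}$.

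The crux is the matching upper bound: by matrix Khintchine, $\E_b\|\sum_i b_i A_i\|_{\mathrm{op}}\lesssim\sqrt{\log N}\cdot\max(\|\sum_i A_i A_i^\top\|, \|\sum_i A_i^\top A_i\|)^{1/2}$ with $N=|V_1|+|V_2|$. The bipartite design is what makes this tractable: the diagonal of $A_iA_i^\top$ at $S$ equals the $A_i$-degree of $S$, which by matching disjointness is bounded by the number of $C\in M_i$ with $|S\cap C|=(q-1)/2$, itself at most $2\ell/(q-1)$; and an off-diagonal entry $(A_iA_i^\top)_{S,S'}$ counts ordered pairs $(C_1,C_2)$ of disjoint constraints in $M_i$ with $S\oplus S'=C_1\cup C_2$, which is again controlled by the matching. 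A standard row/column pruning argument discards ``heavy'' vertices in $\sum_i A_iA_i^\top$ to bring the max row sum down to (a polylog of) its expectation $k\cdot|M_i|\cdot E_c/|V_1|$, at only a constant-factor loss in the lower bound of the previous step by averaging against the LDC satisfiability. Crucially, no Cauchy--Schwarz doubling is used, so the entries of the variance matrices remain uncorrelated and the moment-method accounting reduces to a clean degree count.

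Balancing the lower bound against the concentration upper bound and optimizing $\ell$ yields $k\le\tilde O_q(n^{1-2/q}\cdot\poly(1/\delta,1/\eps)\cdot\log n)$, equivalently $n\ge\tilde\Omega((k/\log k)^{q/(q-2)})$. The main technical obstacle is the regularized spectral bound on the variance matrices $\sum_i A_iA_i^\top$ and $\sum_i A_i^\top A_i$: keeping them close to their average row sum without incurring a spurious $\poly(\ell)$ overhead in the trace-moment accounting. The bipartite construction is most useful here, since the non-bipartite odd-arity Cauchy--Schwarz analysis in effect passes to a doubled matrix whose entries are correlated across the two copies, forcing delicate bookkeeping; the imbalanced bipartite matrix resolves the parity issue at the level of the graph's degree sequences rather than through algebraic doubling, and so avoids these correlations altogether.
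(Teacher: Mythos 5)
There is a genuine gap: the bipartite matrix you propose is exactly the ``naive imbalanced Kikuchi matrix'' of \cref{def:naiveoddkikuchi} (rows $\binom{[n]}{\ell}$, columns $\binom{[n]}{\ell+1}$, edge iff $S\oplus T\in M_i$), and this construction provably yields only the old $n\ge\tilde\Omega(k^{\frac{q+1}{q-1}})$ bound, not $k^{\frac{q}{q-2}}$. The quantitative obstruction is the asymmetry of the degrees. Every edge splits $C$ as $\abs{S\cap C}=\frac{q-1}{2}$, $\abs{T\cap C}=\frac{q+1}{2}$, so the average right (column) degree of $A_i$ is $\approx \delta n(\ell/n)^{\frac{q+1}{2}}$, which at the target value $\ell=n^{1-2/q}$ equals $\delta n^{-1/q}\ll 1$. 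A column that touches any edge has degree at least $1$, so the maximum column degree exceeds the average by a factor $n^{1/q}$ no matter how you prune, and no subgraph retaining a constant fraction of edges can be approximately biregular on the right. Carrying this loss through matrix Khintchine (where $\sigma^2\ge\norm{\sum_i A_iA_i^\top}_2\ge k\,\delta n(\ell/n)^{\frac{q-1}{2}}$ from the diagonal alone) gives at best $k\le O(\log N/(\eps^4 d_R^{\mathrm{avg}}))= O(n^{1/q}\cdot\ell\log n/(\delta\eps^4))$, i.e.\ $k\lesssim n^{1-1/q}$, and optimizing $\ell$ under the constraint $d_R^{\mathrm{avg}}\ge 1$ forces $\ell\ge n^{1-\frac{2}{q+1}}$ — precisely reproducing the Kerenidis–de Wolf exponent. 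Your claim that pruning brings the max row sums of $\sum_i A_iA_i^\top$ and $\sum_i A_i^\top A_i$ ``down to a polylog of their expectation at only a constant-factor loss'' is exactly the step that fails; the worst-case bound $\deg_i(S)\le 2\ell/(q-1)$ you cite from the matching property is correct but is larger than the average degree $\delta n(\ell/n)^{(q-1)/2}=\delta n^{1/q}$ by a polynomial factor, and substituting it into $\sigma^2$ destroys the bound entirely.

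Two further ingredients from the paper are missing. First, even the left-side concentration requires that no set $Q$ with $2\le\abs{Q}\le\frac{q+1}{2}$ is contained in too many hyperedges of $\cup_i M_i$; a union of $k$ matchings can easily contain such heavy tuples, and this is why the paper performs a hypergraph decomposition (\cref{lem:decomp}) with thresholds $d_s=(\ell/n)^{s-3/2}k$ before any spectral argument. Second, the bipartite Kikuchi graph that actually achieves $\ell=n^{1-2/q}$ is not the $\ell$ vs.\ $\ell+1$ matrix on $[n]$: it is the matrix of \cref{def:bipartitekikuchi}, defined only for the \emph{decomposed} instances $H^{(s)}_i\subseteq\binom{[n]}{q-s}\times P_s$, with left vertices $(S_1,S_2)\in\binom{[n]}{\ell}\times\binom{P_s}{\ell}$ and right vertices $(T_1,T_2)\in\binom{[n]}{\ell+1-s}\times\binom{P_s}{\ell+1}$. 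The smallness of $P_s$ (at most $O(nk/d_s)$) makes $N_R\approx n^{1/q}N_L$ rather than $N_R\approx n^{2/q}N_L$, which is exactly what restores $d_R^{\mathrm{avg}}\ge 1$ at $\ell=n^{1-2/q}$ (\cref{claim:bipartitedegbound}). Meanwhile the ``regular'' residual instance $H'$ is refuted via the Cauchy--Schwarz trick and a balanced Kikuchi matrix (\cref{thm:regularref}) — so the trick is not avoided everywhere, only on the decomposed pieces. Without the decomposition and without the $P_s$-coordinate in the bipartition, the argument as you have written it cannot reach the claimed exponent.
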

The main contribution of \cref{mthm:oddqldc} is that it improves the $q$-LDC lower bound for $q \geq 5$ from $n \geq \tilde{\Omega}(k^{\frac{q+1}{q-1}})$ to $n \geq \tilde{\Omega}(k^{\frac{q}{q-2}})$, which is a $\poly(k)$ factor improvement. However, we additionally note that for $q = 3$, \cref{mthm:oddqldc} has a better dependence on the lower order terms of $\log k, \delta, \eps$ hidden in the $\tilde{\Omega}(\cdot)$ as compared to the result of~\cite{AlrabiahGKM23}, which showed the weaker bound of $n \geq \Omega\left(\frac{\eps^{32} \delta^{16} k^3}{ \log^6 k} \right)$. This improvement in the $\log k$ factor for $q = 3$ was also recently obtained by~\cite{HsiehKMMS24} for the special case of linear codes. We note that \cref{mthm:oddqldc} additionally implies lower bounds for $q$-LDCs over larger alphabets via a known generic reduction to the binary case; see~\cite[Lemma 2]{KerenidisW04} and~\cite[Appendix A]{AlrabiahGKM23}.

As stated in~\cite{AlrabiahGKM23}, the proof techniques of~\cite{AlrabiahGKM23} extend to any $q \geq 5$ \emph{under the additional assumption that the code $\Code$ satisfies some extra nice regularity properties.}\footnote{In fact, as we mention in \cref{sec:agkmodd}, it turns out that the proof strategy of~\cite{AlrabiahGKM23} extends easily to the case of $q = 5$ without the need for any additional assumption, contrary to what is claimed in~\cite{AlrabiahGKM23}. The fact that this was missed by~\cite{AlrabiahGKM23} appears to be an oversight on their part. Nonetheless, their approach does break down for $q \geq 7$.} This condition arises for $q$ odd but not $q$ even for the following reason: in the proof, one defines a matrix where we would like to ``evenly split'' a degree-$q$ monomial $x_{v_1} \dots x_{v_q}$ across the rows and columns of the matrix. When $q$ even this is possible, as we can divide the monomial into two ``halves''. This property allows us to define a matrix with independent bits of randomness and obtain a somewhat simple proof. However, for $q$ odd, the best possible split is of course $(\frac{q-1}{2}, \frac{q+1}{2})$, which is slightly imbalanced. To handle this issue of imbalance, \cite{AlrabiahGKM23} uses the standard ``Cauchy--Schwarz trick'' developed in the context of spectral refutation algorithms for constraint satisfaction problems precisely to tackle this issue of imbalance. The ``Cauchy--Schwarz trick'' produces degree $2(q-1)$ monomials, which are even, but at the cost of making the randomness \emph{dependent}. This dependence in the randomness makes the analysis for odd $q$ considerably more technical than the more straightforward analysis for even $q$, and is where the aforementioned additional assumption on the code is needed. The fact that even $q$ is substantially more easier to handle from a technical perspective compared to odd $q$ is a reoccurring theme in the CSP refutation literature that has appeared in many prior works~\cite{CojaGL07,AllenOW15,BarakM16,RaghavendraRS17, AbascalGK21, GuruswamiKM22, HsiehKM23, KothariM23}.

Our main technical contribution is the introduction of an \emph{imbalanced} matrix, or equivalently a \emph{bipartite} graph for odd arity instances, that allows us to refute certain odd arity instances \emph{without} using the Cauchy--Schwarz trick. 
By using an imbalanced matrix and bypassing the Cauchy--Schwarz trick, we maintain independence in the randomness in our matrix (as opposed to introducing correlations), which makes the proof considerably simpler. The simpler proof has the additional advantage that, as mentioned earlier, in the case of $q = 3$ we can improve on the lower bound of~\cite{AlrabiahGKM23} by a $\log k \cdot \poly(1/\eps, 1/\delta)$ factor. Our use of a bipartite graph in the proof is perhaps surprising, as it is contrary to the conventional wisdom that symmetric matrices (i.e., normal, non-bipartite graphs) ought to produce the best spectral certificates. Indeed, the purpose of the ``Cauchy--Schwarz trick'' is to turn an odd arity instance into an even arity instance so that we can represent it with a balanced matrix, with the expectation (that is true in many cases) that the balanced matrix will produce a better spectral certificate.

The bipartite graph that we produce is a Kikuchi graph, i.e., a carefully chosen induced subgraph of a Cayley graph on the hypercube. As we note in \cref{rem:bipartitekikuchi}, bipartite Kikuchi graphs have appeared in prior works, namely~\cite{Yankovitz24, KothariM24}. However, the graphs in those works can be converted to non-bipartite graphs via a straightforward application of the Cauchy--Schwarz inequality, and so they are not ``inherently bipartite''. To our knowledge, our work is the first work to use such a graph that is \emph{inherently} bipartite, meaning that no easy conversion to a non-bipartite graph via the Cauchy--Schwarz inequality appears to exist.

\parhead{Concurrent work.} In concurrent work, \cite{BasuHKL25} also proves a $n \geq \tilde{\Omega}(k^{\frac{q}{q-2}})$ lower bound for $q$-query locally decodable codes for odd $q$. Their bound is slightly weaker compared to \cref{mthm:oddqldc}, as it has a worse $\log n$ dependence. Namely, for constant $\eps, \delta$, \cite{BasuHKL25} proves that $k \leq O(n^{1 - 2/q} (\log n)^{4})$ for nonlinear codes and $k \leq O(n^{1 - 2/q} (\log n)^{2})$ for linear codes, whereas \cref{mthm:oddqldc} proves that $k \leq O(n^{1 - 2/q} \log n)$ for both nonlinear and linear codes, which is a stronger bound in both cases.

\section{Proof Overview}
\label{sec:overview}
In this section, we give an overview of our proof and the techniques that we use. We will start with a thorough summary of the approach of~\cite{AlrabiahGKM23}, first for the (easier) case of even $q$, and then for the more involved case of $q = 3$. Then, we will explain our approach using bipartite Kikuchi graphs.

For the purpose of this overview, we will assume for simplicity that the code $\Code$ is linear, although we note that the proof for nonlinear codes does not change in any meaningful way.

\subsection{The approach of~\cite{AlrabiahGKM23} for even $q$}
\label{sec:agkmeven}
By standard reductions (\cref{fact:LDCnormalform}), for any linear $q$-LDC $\Code \colon \Bits^k \to \Bits^n$, there exist $q$-uniform hypergraph matchings (\cref{def:hypergraphs}) $H_1, \dots, H_k$ on the vertex set $[n]$, each with $\abs{H_i} = \delta n$ hyperedges, such that for each $i \in [k]$ and each hyperedge $C \in H_i$, it holds that $\sum_{v \in C} x_v = b_i$ when $x = \Code(b)$. One should think of the hyperedges $H_i$ as the set of query sets that the decoder may query on input $i$. That is, the decoder, when given input $i \in [k]$, simply chooses a random $C \gets H_i$, reads $x \vert_C$, and then outputs $\sum_{v \in C} x_v$. The linear constraints $\sum_{v \in C} x_v = b_i$ that are satisfied by $x = \Code(b)$ for all $b \in \Bits^k$ imply that the decoder succeeds in correctly recovering $b_i$ with probability $1$ on an uncorrupted codeword.

Switching from $\Bits$-notation to $\Fits$-notation via the map $0 \mapsto 1$ and $1 \mapsto -1$, the above implies that for any $b \in \Fits^k$, the system of constraints given by $\prod_{v \in C} x_v = b_i$ for each $i \in [k]$ and $C \in H_i$ is satisfiable, with $x = \Code(b)$ being a satisfying assignment. This implies that the degree-$q$ polynomial $\Phi_b(x) \defeq \sum_{i = 1}^k b_i \sum_{C \in H_i} \prod_{v \in C} x_v$ has value $\val(\Phi_b) \defeq \max_{x \in \Fits^n} \Phi_b(x) = \sum_{i = 1}^k \abs{H_i} = \delta n k$ for all $b \in \Fits^k$. Indeed, by setting $x = \Code(b)$, we have that $\prod_{v \in C} x_v = b_i$ for each $i \in [k]$ and $C \in H_i$, and so $\Phi_b(\Code(b)) = \sum_{i = 1}^k \abs{H_i} = \delta nk$. Thus, to prove a lower bound on $n$, it suffices to show that for any $H_1, \dots, H_k$ of size $\delta n$, if $n$ is too small, then there exists $b \in \Fits^k$ such that $\val(\Phi_b) < \delta nk$. 

We do this by bounding $\E_{b \gets \Fits^k}[\val(\Phi_b)]$ using a \emph{spectral certificate}. The certificate is as follows. First, we define the Kikuchi matrix/graph $A_C$.
\begin{definition}[Basic Kikuchi matrix/graph for $q$ even]
\label{def:basickikuchi}
Let $\ell \geq q$ be a positive integer (which we will set to $n^{1 - 2/q}$ eventually), and let $C \in {[n] \choose q}$. Let $A_C$ be the matrix with rows and columns indexed by sets $S,T \in {[n] \choose \ell}$ where $A_C(S,T) = 1$ if $S \oplus T = C$ and $A_C(S,T) = 0$ otherwise. Here, $S \oplus T$ denotes the symmetric difference of $S$ and $T$, which is $\{u : (u \in S \wedge u \notin T) \vee (u \notin S \wedge u \in T)\}$. We will at times refer to the matrix $A_C$ as a graph (where we identify $A_C$ with the graph with adjacency matrix $A_C$), and then we will refer to the nonzero entries $(S,T)$ as edges.
\end{definition}
Notice that the condition that $S \oplus T = C$ is equivalent to the existence of a partition $C = C_1 \cup C_2$ into two sets of size $\frac{q}{2}$ such that $S \cap C = C_1$, $T \cap C = C_2$, and $S \setminus C_1 = T \setminus C_2$. Namely, $S \oplus T =C$ if and only if we can split the hyperedge $C$ \emph{evenly} across $S$ and $T$ --- notice that here we crucially require that $q$ is even for the matrix $A_C$ to have a single nonzero entry!

The matrix $A_C$ is a Kikuchi matrix, first introduced in the work of~\cite{WeinAM19} for the problem of tensor PCA, and has the following nice properties: \begin{inparaenum}[(1)] \item the matrix $A_C$ has exactly $D = {q \choose q/2} {n - q \choose \ell - q/2}$ nonzero entries, and \item for each $x \in \Fits^n$, letting $z \in \Fits^{{n \choose \ell}}$ denote the vector where $z_S \defeq \prod_{v \in S} x_v$, we have $z^{\top} A_C z = D \prod_{v \in C} x_v$\end{inparaenum}. These two properties allow us to use the $A_C$'s as a ``basis'' to express any homogeneous degree-$q$ polynomial in variables $x \in \Fits^n$ as a quadratic form on a linear combination of $A_C$'s. Namely, if we let $A_i \defeq \sum_{C \in H_i} A_C$ and $A = A_b \defeq \sum_{i = 1}^k b_i A_i$ (mimicking the definition of $\Phi_b$), then we have $z^{\top} A z = D \Phi_b(x)$ for every $x \in \Fits^n$, where $z$ is defined as before. We can thus express $\Phi_b(x)$ as a quadratic form on the matrix $A$, and so we have shown that $\delta n k \leq \val(\Phi_b) \leq \norm{A}_2 \cdot {n \choose \ell}$. So, to finish the proof, it remains to bound $\E_b[\norm{A}_2]$.

As each $b_i$ is chosen independently from $\Fits$, the matrix $A = \sum_{i = 1}^k b_i A_i$ is a Matrix Rademacher series, and so we can bound its spectral norm using the Matrix Khintchine inequality (\cref{fact:matrixkhintchine}). This implies that $\E_b[\norm{A}_2] \leq O(\Delta \sqrt{k \ell \log n})$, where $\Delta$ is the maximum number of $1$'s in a row any of the $A_i$'s. As the $A_i$'s are symmetric matrices with entries in  $\Bits$, we can view them as adjacency matrices of graphs. With this perspective, $\Delta$ is simply the maximum degree of a vertex $S$ in any of the $A_i$'s. 

The maximum degree $\Delta$ can never be smaller than the average degree in an $A_i$, which is $\delta n D / {n \choose \ell}$. Thus, if each $A_i$ is approximately regular, so that the maximum degree is on the same order of magnitude as the average degree, then we would be able to conclude that
\begin{flalign*}
&\delta n k D \leq {n \choose \ell} \E_b[\norm{A}_2] \leq {n \choose \ell} \cdot \frac{\delta n D}{ {n \choose \ell}} \cdot O(\sqrt{k \ell \log n}) \\
&\implies k \leq O(\ell \log n) \mper
\end{flalign*}
Ideally, we would like to take $\ell$ to be as small as possible to now get the best possible bound on $k$. However, the maximum degree $\Delta$ is always at least $1$, and so for $\Delta$ to be on the same order of magnitude as the average degree, we must have average degree $\geq \Omega(1)$. The average degree is $\delta n D/ {n \choose \ell}$, which a simple calculation shows is roughly $\delta n \left(\frac{\ell}{n}\right)^{q/2}$, and so we need to take $\ell \geq n^{1 - 2/q}$. This means that our potential bound is $k \leq O(n^{1 - 2/q} \log n)$, i.e., $n \geq \tilde{\Omega}(k^{\frac{q}{q-2}})$, as desired.

\parhead{Finding an approximately regular subgraph: row pruning.} We have shown that if each graph $A_i$ is approximately regular, meaning that its maximum degree is on the same order of magnitude as its average degree, then we can prove $n \geq \tilde{\Omega}(k^{\frac{q}{q-2}})$. Unfortunately, the graph $A_i$ is not approximately regular, \emph{even though the underlying hypergraph $H_i$ is a matching}, i.e., $H_i$ is as ``regular'' as possible. A naive way to try to enforce this ``approximately regular'' property is to simply remove all vertices $S$ with large degree in $A_i$ (along with their adjacent edges), producing a new graph $B_i$ with max degree bounded by $O(1)$ times the average degree of $A_i$. However, for a general graph, this deletion process may delete most (or all!) of the edges, resulting in a considerable drop in the average degree. So, the resulting graph $B_i$ need not be approximately regular. Crucially, \emph{because $H_i$ is a matching}, we can show that this process in fact only deletes a $o(1)$-fraction of the edges, and so the average degree is essentially unchanged. This means that the graph $B_i$ \emph{is} indeed approximately regular, and so we can use the $B_i$'s in place of the $A_i$'s to finish the proof.

The above vertex/edge deletion step is typically called the ``row pruning'' step, so-named because it prunes rows (and columns) of the matrix $A_i$, and has appeared in many prior works that analyze spectral norms of Kikuchi matrices. While at first glance this step may appear to be a mere technical annoyance, it is in fact the most critical part of the entire proof. In this case of the above proof, we note that this is the only step that uses that the $H_i$'s are matchings, and if the $H_i$'s are not matchings then the lower bound is clearly false. In fact, in the entire proof above, one should view all the steps up until the row pruning step as \emph{generic} and dictated by the polynomial $\Phi_b$ whose value we wish to bound, and the row pruning step is the key part of the proof that determines if the approach succeeds in obtaining a strong enough bound on $\E_b[\val(\Phi_b)]$.

As observed in~\cite[Appendix B]{AlrabiahGKM23}, one can also view the above proof as giving a reduction from a $q$-LDC to a $2$-LDC for even $q$. In this viewpoint, the row pruning step is the crucial part of the proof that shows that the object produced by the reduction is in fact a $2$-LDC.

\subsection{The approach of~\cite{AlrabiahGKM23} for $q = 3$ and why it fails for odd $q \geq 7$}
\label{sec:agkmodd}
We now recall the approach of~\cite{AlrabiahGKM23} for $q = 3$ and explain why its natural generalization to odd $q \geq 7$ fails. As briefly mentioned earlier, the reason the previous proof does not succeed for odd $q$ is because the matrix $A_C$ has no nonzero entries when $\abs{C}$ is odd. This is because the row sets $S$ and the column sets $T$ have exactly the same size $\ell$; the Kikuchi graph would have nonzero entries if we, e.g., simply allowed $\abs{T} = \ell + 1$. Namely, we can make the following definition.
\begin{definition}[Naive imbalanced Kikuchi matrix for odd $q$]
\label{def:naiveoddkikuchi}
Let $q$ be odd and let $C \in {[n] \choose q}$. Let $\ell \geq q$ be a positive integer. Let $A'_C$ be the matrix with rows indexed by sets $S \in {[n] \choose \ell}$ and columns indexed by sets $T \in {[n] \choose \ell+1}$, where $A'_C(S,T) = 1$ if $S \oplus T = C$ and otherwise $A'_C(S,T) = 0$.
\end{definition}
Analogously to \cref{def:basickikuchi}, if the $(S,T)$-th entry of $A'_C$ is nonzero, then $\abs{S \cap C} = \frac{q - 1}{2}$ and $\abs{T \cap C} = \frac{q+1}{2}$. This imbalance causes the average left degree of $A'_i \defeq \sum_{C \in H_i} A'_C$ to be roughly  $\delta n \left(\frac{\ell}{n}\right)^{(q-1)/2}$, while the average right degree is $\delta n \left(\frac{\ell}{n}\right)^{(q+1)/2}$. In order for the row pruning step to have any hope for success, we need both of these quantities to be at least $1$, which requires taking $\ell \geq n^{1 - \frac{2}{q+1}}$. In fact, using this asymmetric matrix precisely reproduces the $n \geq \tilde{\Omega}(k^{\frac{q+1}{q-1}})$ bound.

The first key step in the proof of~\cite{AlrabiahGKM23} for $q = 3$ is to use the ``Cauchy--Schwarz trick'' from the CSP refutation literature: we construct a new system of constraints by first taking two constraints $ x_u x_{v_1} x_{v_2} = b_i$ and $x_u x_{w_1} x_{w_2} = b_j$ \emph{that both contain the same variable $x_u$} and then we multiply them together to derive a new constraint $x_{v_1} x_{v_2} x_{w_1} x_{w_2} = b_i b_j$, using that $x_u^2 = 1$ since $x_u \in \Fits$.
Crucially, since we started with $\Omega(nk)$ constraints, the number of derived constraints is $\Omega(nk^2)$, and the arity of the derived constraint is $4$ (or more generally $2(q-1)$),\footnote{The degree may be smaller if the constraints share at least $2$ variables, but this would reduce the degree further and so it is only ``better'' for us. There are several simple ways to handle this issue, but we will ignore this technicality for the purpose of simplifying this proof overview.} which is now even. We thus define a new polynomial $\Psi_b$ of even degree for the derived instance:
\begin{equation*}
\Psi_b(x) \defeq \sum_{i \ne j} b_i b_j \sum_{u \in [n]} \sum_{\substack{(u, v_1, v_2) \in H_i \\ (u, w_1, w_2) \in H_j}}   x_{v_1} x_{v_2} x_{w_1} x_{w_2} \mper
\end{equation*}
 A simple application of the Cauchy--Schwarz inequality relates $\val(\Phi_b)$ and $\val(\Psi_b)$, and hence this derivation process is typically called the ``Cauchy--Schwarz trick''. The main drawback is that in the derived constraints, the ``right-hand sides'' are products $b_i b_j$, and we have introduced correlations in the right-hand sides.

We can now use the Kikuchi graphs $A_{C}$ (\cref{def:basickikuchi}) for each \emph{derived constraint} $C$, as the derived constraints have even arity. However, we will make one small, but crucial change. For a derived constraint $x_{v_1} x_{v_2} x_{w_1} x_{w_2}$, where $v_1, v_2$ ``come from'' one hyperedge $(u, v_1, v_2)$ and $w_1, w_2$ ``come from'' the other hyperedge $(u, w_1, w_2)$, we view this constraint as two pairs $(\{v_1, v_2\}, \{w_1, w_2\})$, and for an edge $(S, T)$ in the graph $A_{(\{v_1, v_2\}, \{w_1, w_2\})}$, we require that $S$ contains one element from each of $\{v_1, v_2\}$ and $\{w_1, w_2\}$, and that $T$ contains the other element from each. That is to say, we \emph{evenly split} the variables from the underlying (original) hyperedges across the row set $S$ and column set $T$. The fact that we split elements evenly is crucial for the row pruning step that we will discuss shortly.

With the above definition of the matrix $A_{(\{v_1, v_2\}, \{w_1, w_2\})}$, we can then make the following definitions. First, we partition $[k]$ randomly into two sets $L \cup R$, with $\abs{L} \geq \frac{k}{2}$ without loss of generality. Then, we let $A_{i,j} \defeq \sum_{u \in [n]} \sum_{\substack{(u, v_1, v_2) \in H_i \\ (u, w_1, w_2) \in H_j}} A_{(\{v_1, v_2\}, \{w_1, w_2\})}$, $A_i \defeq \sum_{j \in R} b_j A_{i,j}$ and $A = \sum_{i \in L} b_i A_i$. The random partition of $[k]$ into $L \cup R$ is a nice trick used in~\cite{AlrabiahGKM23} that makes the matrix $A$ be the sum of mean $0$ independent random matrices. At this point, we can now take $\ell = n^{1 - 2/q} = n^{1/3}$ and apply similar steps as done in the case of $q$ even to finish the proof, provided that the ``approximately regular'' condition can be made to hold for each graph $A_i$, i.e., the row pruning step succeeds.

\parhead{Finding an approximately regular subgraph: row pruning.} Let us now discuss the row pruning step for the matrices $A_i$. Unlike in the even case, the constraint hypergraph that defines the matrix $A_i$ is no longer a matching. Instead, the edges in the graph $A_i$ ``come from'' tuples $(u, \{v_1, v_2\}, \{w_1, w_2\})$ where $(u, v_1, v_2) \in H_i$ and $(u, w_1, w_2) \in \cup_{j \in R} H_j$ --- here, $u$ is the shared variable that is ``canceled'' by multiplying the two constraints together. To find an approximately regular subgraph of $A_i$, intuitively we need to show that a typical vertex has degree roughly equal to the average degree. We can try to understand how concentrated the degrees are in $A_i$ by computing the variance of $\deg_i(S)$, the degree of $S$ in $A_i$, when $S$ is chosen uniformly at random (see \cref{lem:regularmoments} for a formal calculation that is closely related). Here, it is crucial that we have split the uncanceled variables $\{v_1, v_2\}$ of the hyperedge $(u, v_1, v_2)$ evenly across $S$ and $T$ because if we had not, then any set $S$ that contains both $\{v_1, v_2\}$ should\footnote{Formally, it has degree at least the number of hyperedges $(u, w_1, w_2) \in \cup_{j \in R} H_j$ that contain the variable $u$, and this is typically $\Omega(k)$. For example, it is $\Omega(k)$ for every $u$ if the $H_i$'s are random hypergraph matchings.}
 have degree $\Omega(k)$. This is much larger than the average degree, which one can show is $n^{-1/3} k$, and happens with probability $n^{-1/3}$: high enough to dominate the variance.
 
In fact, even when we use the even split, $\Var(\deg_i(S))$ may still be too large. However, from the calculation of $\Var(\deg_i(S))$, we can extract the following natural combinatorial condition that, if satisfied, will make the variance small enough to finish the proof: we require that each \emph{pair} of variables $\{u,w\}$ appears in at most $d_2 \defeq (\ell/n)^{\frac{1}{2}} k = n^{-1/3} k$ hyperedges in $\cup_{j \in R} H_j$.
However, the hypergraph $\cup_{j \in R} H_j$ is a union of matchings --- it is not a matching itself ---  and so it is quite possible that there are pairs of variables $\{u,w\}$ that appear in, say, $\Omega(k)$ hyperedges in $\cup_{j \in R} H_j$.\footnote{Because the $H_j$'s are matchings and $\abs{R} \leq k$, even a single variable $u$ cannot appear in more than $k$ hyperedges. This is why we do not encounter a ``heavy singleton'' condition.} In fact, if many such ``heavy pairs'' $\{u,w\}$ exist, then we are unable to show that the graph $A_i$ has an approximately regular subgraph, and the above proof fails!

Nonetheless, the above proof still accomplishes something nontrivial. For $q = 3$, we obtain a proof that $k \leq \tilde{O}(n^{1/3})$ under the additional assumption that each pair $\{u,w\}$ of variables appears in at most $n^{-1/3} k$ hyperedges in $\cup_{j \in [k]} H_j$.\footnote{As we do not know $R$ in advance, we must impose a global condition on $\cup_{j \in [k]}H_j$ instead of $\cup_{j \in R} H_j$. However, as we expect $\abs{R}$ to be about $k/2$, this is also only off by a constant factor.} More generally, for larger odd $q$, we can show that $k \leq \tilde{O}(n^{1 - 2/q})$ under the additional assumption that for any set $Q$ of size $\abs{Q} = s$ where $2 \leq s \leq \frac{q+1}{2}$, the set $Q$ appears in at most $d_s \defeq (\ell/n)^{s - \frac{3}{2}} k = n^{-\frac{2s}{q} + \frac{3}{q}} k$ hyperedges in $\cup_{j \in [k]} H_j$.

\parhead{Removing the heavy pairs assumption.} The final step in the proof of~\cite{AlrabiahGKM23} is to remove this assumption by using the \emph{hypergraph decomposition} method of~\cite{GuruswamiKM22}. For each heavy pair $\{u,w\}$, we create a new ``big variable'' $p$ and replace all hyperedges $(u, w, v)$ with a new hyperedge $(p, v)$. Then, we create a new set of derived constraints by canceling the heavy pair variables $p$, resulting in a new degree-$2$ polynomial whose value we can then bound.\footnote{Formally, the proof of~\cite{AlrabiahGKM23} proceeds slightly differently and uses a bipartite graph, although it is equivalent to this.} So, if there are many heavy pairs, then we can produce a degree-$2$ polynomial, and otherwise we already win via the degree-$4$ polynomial.

\parhead{The hypergraph decomposition strategy fails for $q \geq 7$.} The above approach to handling heavy pairs suggests a natural strategy to handle larger heavy sets. Namely, let $d_s \defeq n^{-\frac{2s}{q} + \frac{3}{q}} k$ be the ``threshold for heavy sets $Q$ of size $s$'' that we found via the variance calculation. For each $2 \leq s \leq \frac{q+1}{2}$, we let $P_s$ denote the set of heavy $Q$'s of size $s$.
Then, for each heavy set $Q$, we can introduce a new variable $p$ and replace all hyperedges $C \in \cup_{j \in [k]} H_j$ containing $Q$ where $Q$ is the largest heavy set (ties broken arbitrarily) with the hyperedge $(p, C \setminus Q)$. This will produce, for each $i \in [k]$, a hypergraph $H_i^{(s)}$ where each hyperedge in $H^{(s)}_i$ has the form $(p, C')$ where $\abs{C'} = q - s$ and $p \in P_s$ is a new variable.

The derivation strategy of the ``Cauchy--Schwarz trick'' now suggests that we should, for each $s$, group the hypergraphs $H^{(s)}_1, \dots, H^{(s)}_k$ together and derive constraints by canceling the new variables $p$. Namely, we take two hyperedges $(p, C)$ and $(p, C')$ that use the same $p$  and combine them to produce the derived constraint $(C, C')$ that has arity $2(q-s)$. Once again, we can define an analogous Kikuchi matrix and attempt to complete the proof, and the success of this strategy is determined by whether or not the row pruning step goes through.

It turns out that, for $q = 5$, this simple generalization does indeed succeed. We suspect that this was perhaps missed by~\cite{AlrabiahGKM23} because the thresholds $d_s$ are rather delicate, and the proof breaks if we set, e.g., $d_2 = n^{-1/5} \cdot n^{1 - \frac{2}{5}} \log n$ as opposed to $n^{-1/5} k$ (recall that we expect $k$ to be about $n^{1 - \frac{2}{5}} \log n$ as this is the lower bound that we are shooting for).

However, for $q \geq 7$, this proof strategy fails. The first case that breaks is for $q = 7$ and $s = 4$, i.e., we have produced derived constraints $(C,C')$ of arity $6$ by canceling a heavy $4$-tuple. The issue is that the hypergraphs $H^{(4)}_1, \dots, H^{(4)}_k$, which have hyperedges containing $q - s = 7 - 4 = 3$ original variables from $[n]$, may still contain heavy pairs or triples. It turns out that, for any odd $q$, the cases of $s = 2$ and $s = 3$ never break, so this problem does not arise for $q = 5$ (recall that $2 \leq s \leq \frac{q+1}{2}$, which is $3$ when $q = 5$).

\ignore[recursive hypergraph decomp part]{
\subsection{Recursive hypergraph decomposition and Kikuchi graphs for partite XOR}
\label{sec:recursivedecomp}
While the strategy described above fails for $q \geq 7$, there is again a natural next step to try. The problem with, e.g., the case of $q = 7$ and $s = 4$, is that the hypergraphs $H^{(4)}_1, \dots, H^{(4)}_k$ may still contain heavy pairs or triples. So, we can simply recurse and decompose these hypergraphs again to produce hypergraphs $H^{(4,3)}_1, \dots, H^{(4,3)}_k$ and $H^{(4,2)}_1, \dots, H^{(4,2)}_k$. Here, hyperedges in $H^{(4,3)}_i$ have the form $(p^{(4)}, p^{(3)})$ where $p^{(4)} \in P_4$ is a heavy $4$-tuple and $p^{(3)} \in P_3$ is a heavy triple, and hyperedges in $H^{(4,2)}_1, \dots, H^{(4,2)}_k$ have the form $(p^{(4)}, p^{(2)}, v)$, where $p^{(4)} \in P_4$, $p^{(2)} \in P_2$, and $v \in [n]$. (Because $\cup_{i = 1}^k H_i$ is the union of matchings, each variable $v$ appears in at most $k$ hyperedges, so it is not possible to have a heavy singleton.)

For this overview, let us consider the case of $H^{(4,2)}_1, \dots, H^{(4,2)}_k$. We need to bound the value of $\Phi^{(4,2)}_b(x,y)$, defined as
\begin{equation*}
\Phi^{(4,2)}_b(x,y) \defeq \sum_{i = 1}^k b_i \sum_{(p^{(4)}, p^{(2)}, v) \in H^{(4,2)}_i} y_{p^{(4)}} y_{p^{(2)}} x_v \mper
\end{equation*} 
As before, we can now derive constraints using the Cauchy--Schwarz trick. Namely, we can take two hyperedges $(p^{(4)}, p^{(2)}_1, v_1) \in H^{(4,2)}_i$ and $(p^{(4)}, p^{(2)}_2, v_2) \in H^{(4,2)}_j$ that share the same heavy $4$-tuple $p^{(4)}$, and then form the derived hyperedge $((p^{(2)}_1, v_1), (p^{(2)}_2, v_2))$. 

Now, the approach of~\cite{AlrabiahGKM23} breaks down. To use their Kikuchi graph, we need to be able to derive constraints that only use the original variables $[n]$. But, the above derived constraints still use ``heavy pair'' variables. One could try to, e.g., combine the derived constraint $((p^{(2)}_1, v_1), (p^{(2)}_2, v_2))$ with some constraint in $H^{(4,2)}_{j'}$ that also contains the heavy pair $p^{(2)}_1$, but such a constraint will be of the form $(p^{(4)}_2, p^{(2)}_2, v_3)$, i.e., it will have a new heavy $4$-tuple. So, the new derived constraint will be $((p^{(2)}_1, v_1), v_2, (p^{(4)}_2, v_3))$, and we have the same problem again. Furthermore, we cannot try to combine different ``types'' of hypergraphs, e.g., combine constraints in $H^{(2)}_j$ for some $j$ with constraints in $H^{(4,2)}_i$ for some $i$, as it could be the case that after the hypergraph decomposition step, \emph{most} (or all) of the original $\delta n k$ hyperedges are placed in, e.g., $H^{(4,2)}_i$ for some $i$, and so all hypergraphs of a different ``type'' are empty.

Let us now explain our approach to handle this problem. We need to design a Kikuchi matrix for hyperedges that are \emph{partite}: each hyperedge in $\cup_{i = 1}^k H^{(4,2)}_i$ has two vertices from the vertex set $P_2$ and $2$ vertices from the vertex set $[n]$. We introduce the following Kikuchi graph in this work for partite hypergraphs, which is defined as follows. For a derived constraint $((p^{(2)}_1, v_1), (p^{(2)}_2, v_2))$, we let the matrix $A_{((p^{(2)}_1, v_1), (p^{(2)}_2, v_2))}$ be the matrix indexed by \emph{pairs} of sets $S_1$ and $S_2$, where $S_1 \subseteq [n]$ has size $\ell$ and $S_2 \subseteq P_2$ also has size $\ell$. That is, each row has a set for each distinct set of variables, which are $[n]$ and $P_2$. By analogy to the earlier definition of $A_C$, we should set $A_{((p^{(2)}_1, v_1), (p^{(2)}_2, v_2))}((S_1, S_2), (T_1, T_2)) = 1$ if $S_1 \oplus T_1 = \{v, v'\}$ and $S_2 \oplus T_2 = \{p^{(2)}_1, p^{(2)}_2\}$, and furthermore we require that the variables ``coming from'' the underlying original hyperedges in $H^{(4,2)}$ are split \emph{as evenly as possible} across the rows and columns. Namely, we require that either $v_1 \in S_1, p^{(2)}_2 \in S_2$ and $v_2 \in T_1, p^{(2)}_1 \in T_2$, or vice versa, so that the variables $(v_1, p^{(2)}_1)$ ``coming from'' the first underlying hyperedge are split across the row and column, and likewise for the second underlying hyperedge.

Analogously to the definitions in \cref{sec:agkmodd}, we can randomly partition $[k]$ into $L \cup R$ and let $A_{i,j} \defeq \sum_{p^{(4)} \in P_4} \sum_{\substack{(p^{(4)}, p^{(2)}_1, v_1) \in H^{(4,2)}_i \\ (p^{(4)}, p^{(2)}_2, v_2) \in H^{(4,2)}_j}} A_{((p^{(2)}_1, v_1), (p^{(2)}_2, v_2))}$, $A_i \defeq \sum_{j \in R} b_j A_{i,j}$ and $A = \sum_{i \in L} b_i A_i$. A straightforward calculation shows that $\E_b[\norm{A}_2]$ provides an upper bound on $\val(\Phi^{(4,2)}_b)$. Thus, to determine whether or not this matrix $A$ is good enough to prove the desired $n \geq \tilde{\Omega}(k^{\frac{q}{q-2}})$ bound, we need to argue that the row pruning step holds, i.e., each $A_i$ can be made approximately regular.

It turns out that, while the calculations are substantially more complicated that those appearing in \cite{AlrabiahGKM23}, this approach does in fact work, provided that we adjust the thresholds $d_s$ slightly. We need to set $d_s = n^{-\frac{2s}{q} + \frac{2}{q}} k$ (instead of $n^{-\frac{2s}{q} + \frac{3}{q}} k$) for $2 \leq s \leq \frac{q-1}{2}$, while keeping $d_{\frac{q+1}{2}} = n^{\frac{2}{q} - 1} k$ the same. This allows us to prove the $n \geq \tilde{\Omega}(k^{\frac{q}{q-2}})$ bound for $q = 7$.

\parhead{Generalizing our $q = 7$ approach to all odd $q$.} We can now generalize our above approach to all odd $q$ as follows. The output of the recursive hypergraph decomposition step yields hypergraphs $H^{(s, t_1, \dots, t_r)}_1, \dots, H^{(s, t_1, \dots, t_r)}_k$, where $2 \leq s \leq \frac{q+1}{2}$ is an integer, $s \geq t_1 \geq \dots \geq t_r \geq 2$ are all positive integers, and $t_1 + t_2 + \dots + t_r + s \leq q$. This notation means that each hypergraph $H^{(s, t_1, \dots, t_r)}_i$ contains hyperedges of the form $(p^{(s)}, p^{(t_1)}, \dots, p^{(t_r)}, C)$, where each $p^{(t_z)} \in P_{t_z}$, i.e., it is a heavy $t_z$-tuple, $p^{(s)} \in P_s$ is a heavy $s$-tuple, and $C \subseteq [n]$ has size $q - s - t_1 - \dots - t_r$ and is the set of remaining ``original variables'' from $[n]$.
We then apply the Cauchy--Schwarz trick to form derived constraints, which now take the form $((p^{(t_1)}_1, \dots, p^{(t_r)}_1, C_1), (p^{(t_1)}_2, \dots, p^{(t_r)}_2, C_2))$. We define the Kikuchi matrix $A_{((p^{(t_1)}_1, \dots, p^{(t_r)}_1, C_1), (p^{(t_1)}_2, \dots, p^{(t_r)}_2, C_2))}$ analogously to the case of $H^{(4,2)}$ shown above for $q= 7$. Namely, we make the following definition.
\begin{definition}[Kikuchi matrices for partite hypergraphs]
\label{def:partitekikuchi}
Let $((p^{(t_1)}_1, \dots, p^{(t_r)}_1, C_1), (p^{(t_1)}_2, \dots, p^{(t_r)}_2, C_2))$ be a derived constraint where $p^{(t_z)}_1, p^{(t_z)}_2 \in P_{t_z}$ for each $z \in [r]$ and $C_1, C_2 \subseteq [n]$ have size $q - s - \sum_{z = 1}^r t_z$. Let $A_{((p^{(t_1)}_1, \dots, p^{(t_r)}_1, C_1), (p^{(t_1)}_2, \dots, p^{(t_r)}_2, C_2))}$ be the matrix indexed by tuples of sets $(S_1, \dots, S_r, S)$ where for $z \in [r]$, $S_z \subseteq P_z$ has size $\ell$ and $S \subseteq [n]$ has size $\ell$. The matrix has a $1$ in the $((S_1, \dots, S_r, S), (T_1, \dots, T_r, T))$-th entry if for each, $S_z \oplus T_z = \{p^{(t_z)}_1, p^{(t_z)}_2\}$, and $S \oplus T = C_1 \oplus C_2$. Otherwise, the entry is $0$.
\end{definition}
In order for the row pruning step to go through, we need to be a bit more careful in our definition of the matrix $A_{((p^{(t_1)}_1, \dots, p^{(t_r)}_1, C_1), (p^{(t_1)}_2, \dots, p^{(t_r)}_2, C_2))}$. Namely, as done in \cref{sec:agkmodd,sec:recursivedecomp}, we want to split the variables $(p^{(t_1)}_1, \dots, p^{(t_r)}_1, C_1)$ and $(p^{(t_1)}_2, \dots, p^{(t_r)}_2, C_2)$ of the underlying hyperedges \emph{evenly} across the row and column sets $(S_1, \dots, S_r, S)$ and $(T_1, \dots, T_r, T)$. However, each element $p^{(t_z)}$ comes with a ``weight'' of $t_z$ because $p^{(t_z)}$ corresponds to a set of $t_z$ original variables $[n]$. Because we ``canceled out'' a variable in $P_s$, the total weight of the remaining elements in each hyperedge is $q - s$. Ideally, we would like to achieve an even split of $(\frac{q-s}{2}, \frac{q-s}{2})$, but this is not always possible.

Fortunately, for the row pruning step to go through, we do not require an exactly even split: we just need that each side of the split has total weight at most $\frac{q}{2}$. A simple greedy algorithm shows that this is in fact always possible, and so we can make the row pruning step go through. This allows us to prove a lower bound of $n \geq \tilde{\Omega}(k^{\frac{q}{q-2}})$ bound for all odd $q$, finishing the proof.
}

\subsection{A proof from bipartite Kikuchi graphs}
\label{sec:bipartitekikuchiintro}
\ignore[old preamble]{
The above proof for odd $q$ is substantially more complicated compared to the fairly simple proof for even $q$ sketched in~\cref{sec:agkmeven}. Recall that this is the case because the proof in~\cref{sec:agkmodd} uses the ``Cauchy--Schwarz trick'' to derive new constraints of arity $2(q-1)$ that have \emph{correlated} randomness. When we group the derived constraints so that they have independent randomness, i.e., we write $A = \sum_{i \in L} b_i A_i$, where each $b_i$ is an independent bit, the success of the row pruning step is dictated by the structure of the derived constraints that contribute to the matrix $A_i$. However, the derived constraints look like $(C, C')$ where there exists $u \in [n]$ such that $(u, C) \in H_i$ and $(u,C') \in \cup_{j \in R} H_j$, and so not only are they no longer matchings, they can have highly irregular structure. Because of this, we then decompose the original hypergraph matchings $H_1, \dots, H_k$ into new hypergraphs where the union, over $i \in [k]$ of the new hypergraphs in each ``piece'' of the decomposition is regular. In the case of $q = 3, 5$, this can be handled with a simple decomposition step as done (or could have been done for $q = 5$) in~\cite{AlrabiahGKM23}, and for $q \geq 7$ a more involved recursive hypergraph decomposition step along with a partite Kikuchi matrix, as sketched in \cref{sec:recursivedecomp,def:partitekikuchi}, is needed.

The key point here is that original hypergraphs $H_1, \dots, H_k$ are matchings, i.e., they are already ``regular'', and so the additional complexity in the proof for odd $q$ comes from the use of the Cauchy--Schwarz trick, which is the step that derives new constraints that are no longer ``regular''. If we could somehow avoid using the Cauchy--Schwarz trick entirely, we would never lose the ``regularity property'', and so we could potentially obtain a substantially simpler and more direct proof that avoids any hypergraph decomposition steps.
}

As explained in \cref{sec:agkmodd}, the approach of~\cite{AlrabiahGKM23} uses the ``Cauchy--Schwarz trick'' to derive new constraints of arity $2(q-1)$ that have \emph{correlated} randomness. When we group the derived constraints so that they have independent randomness, i.e., we write $A = \sum_{i \in L} b_i A_i$, where each $b_i$ is an independent bit, the success of the row pruning step is dictated by the structure of the derived constraints that contribute to the matrix $A_i$. However, the derived constraints are $(C, C')$ where there exists $u \in [n]$ such that $(u, C) \in H_i$ and $(u,C') \in \cup_{j \in R} H_j$. That is, not only are the derived constraints no longer matchings, they can have highly irregular structure, and so the row pruning step fails. The hypergraph decomposition step is an attempt to form derived constraints so that the derived constraints are regular, but as we have explained, this step breaks for $q \geq 7$.

In contrast, in the case of even $q$ (\cref{sec:agkmeven}), the constraints that contribute to the matrix $A_i$ come from the hypergraph matching $H_i$, i.e., they are already ``regular''. Thus, the additional complexity in the proof for odd $q$ comes from the use of the Cauchy--Schwarz trick, which is the step that derives new constraints that are no longer ``regular''. If we could somehow avoid using the Cauchy--Schwarz trick entirely, we would never lose the ``regularity property'', and so there would be no need for a hypergraph decomposition step.  As the hypergraph decomposition step is the place where the approach in~\cref{sec:agkmodd} fails, this would be a possible route to prove~\cref{mthm:oddqldc}.

Unfortunately, we are not quite able to achieve this goal --- in the ``top level'' case, i.e., when the hypergraph matchings $H_1, \dots, H_k$ already satisfy the global property that $\cup_{i \in [k]} H_i$ has no heavy sets of size $s$ for $2 \leq s \leq \frac{q+1}{2}$, we still do the Cauchy--Schwarz trick to construct the Kikuchi matrix as sketched in~\cref{sec:agkmodd}. This means that we will do \emph{one} step of hypergraph decomposition to produce, for each $2 \leq s \leq \frac{q+1}{2}$, a decomposed instance with hypergraph matchings $H^{(s)}_1, \dots, H^{(s)}_k$.

Now, recall that the approach in~\cref{sec:agkmodd} fails for the following reason: if we were to apply the Cauchy--Schwarz trick again to each decomposed instance $H^{(s)}_1, \dots, H^{(s)}_k$, the resulting derived constraints may again not be regular. However, the hypergraphs $H^{(s)}_1, \dots, H^{(s)}_k$ are themselves still matchings, i.e., they are regular, as this property is inherited from the original hypergraphs $H_1, \dots, H_k$. Our key technical contribution, as we now explain, is the introduction of a \emph{bipartite} Kikuchi graph that allows us to refute each decomposed instance $H^{(s)}_1, \dots, H^{(s)}_k$ \emph{without} using the Cauchy--Schwarz trick at all. Because we do not apply the Cauchy--Schwarz trick, our hypergraphs remain matchings, and so we avoid the proof barrier sketched in~\cref{sec:agkmodd}. The thresholds that we use in the decomposition step are the thresholds $d_s = n^{-\frac{2s}{q} + \frac{3}{q}} k$, the same as in~\cref{sec:agkmodd}.

\parhead{Refuting the decomposed instances with bipartite Kikuchi graphs.} Instead of doing the Cauchy--Schwarz trick, we introduce a bipartite Kikuchi graph that is imbalanced. This is perhaps a counterintuitive approach to try, as typically imbalanced matrices do not give good spectral refutations. For example, as explained at the beginning of~\cref{sec:agkmodd}, one can define an imbalanced matrix (\cref{def:naiveoddkikuchi}) that cleanly handles the case of odd $q$ with no hypergraph decomposition steps at all, but the imbalance of the matrix produces the weaker bound of $n \geq \tilde{\Omega}(k^\frac{q+1}{q-1})$. Intuitively, the reason that the imbalanced matrix of \cref{def:naiveoddkikuchi} fails to prove a bound of $n \geq \tilde{\Omega}(k^\frac{q}{q-2})$ is that the average right degree in the matrix is roughly $n^{-1/q} = o(1)$ when we set $\ell$ to be the desired value of $n^{1 - 2/q}$, whereas in our new bipartite graph that we define shortly, both the left and right degrees will be $\Omega(1)$.

Recall that for each $i \in [k]$, a hyperedge in $H^{(s)}_i$ has the form $(C,p)$ where $p \in P_s$ and $\abs{C} = q - s$. We now define our bipartite Kikuchi graph $A_{C,p}$.
\begin{definition}[Our imbalanced bipartite Kikuchi graph]
\label{def:bipartitekikuchintro}
For a set $C \in {[n] \choose q - s}$ and $p \in P_s$, let $A_{C,p}$ be the adjacency matrix of the following graph. The left vertices are pairs of sets $(S_1, S_2)$ where $S_1 \subseteq [n]$ has size $\ell$ and $S_2 \subseteq P_s$ has size $\ell$ as well. The right vertices are pairs of sets $T_1 \subseteq [n]$ and $T_2 \subseteq P_s$, where $\abs{T_1} = \ell + 1 - s$ and $\abs{T_2} = \ell + 1$. We put an edge $((S_1, S_2), (T_1, T_2))$ if $S_1 \oplus T_1 = C$, which implies that $\abs{S_1 \cap C} = \frac{q - 1}{2}$ and $\abs{T_1 \cap C} = \frac{q+1}{2} - s$, and also $S_2 \oplus T_2 = \{p\}$, which implies that $p \notin S_2$ and $p \in T_2$.
\end{definition}
The idea behind \cref{def:bipartitekikuchintro} is that for each row/column, we have a subset for each ``variable set'', i.e., $[n]$ and $P_s$. However, in order for $A_{C,p}$ to have nonzero entries, we must have $\abs{S_1} \ne \abs{T_1}$ and $\abs{S_2} \ne \abs{T_2}$, and this makes the graph $A_{C,p}$ quite imbalanced. The size of the left vertex set is $N_L = {n \choose \ell} {\abs{P_s} \choose \ell}$, where $\abs{P_s} \leq n k /d_s$. This bound on $\abs{P_s}$ follows because $\cup_{i \in [k]} H_i$ has at most $nk$ hyperedges in total and each $p \in P_s$ is contained in at least $d_s$ hyperedges. On the other hand, the size of the right vertex set is $N_R = {n \choose \ell + 1 - s} {\abs{P_s} \choose \ell + 1} \leq \frac{\abs{P_s}}{\ell} \left(\frac{\ell}{n}\right)^{s - 1} N_L$. Recall that $\ell = n^{1 - \frac{2}{q}}$ and $d_s =  n^{-\frac{2s}{q} + \frac{3}{q}} k$, so $N_R \leq n^{1/q} N_L$. Moreover, if $\abs{H_i^{(s)}} \approx \abs{H_i}$ for most $i$, i.e., most hyperedges contain a heavy set $Q$ of size $s$, and each heavy $Q$ of size $s$ appears in $O(d_s)$ hyperedges, then the above inequalities also approximately hold in the reverse, so that $N_R \approx n^{1/q} N_L$. For this proof overview, we will assume for simplicity both of these assumptions hold.

\begin{remark}
\label{rem:bipartitekikuchi}
We note that $A_{C,p}$ is not the first use of an imbalanced bipartite Kikuchi graph, as imbalanced Kikuchi graphs are used in~\cite{Yankovitz24, KothariM24}. However, in those works one can easily produce an equivalent Kikuchi graph (i.e., non-bipartite and balanced) with essentially the same properties via one application of the Cauchy--Schwarz derivation trick to the underlying hyperedges. However, as discussed in detail in~\cref{sec:agkmodd}, such a strategy fails for $q \geq 7$. We thus view our bipartite Kikuchi graph $A_{C,p}$ as being \emph{inherently} bipartite, as we are unable to construct an equivalent (balanced) Kikuchi graph with analogous properties to it by first deriving constraints on the underlying hypergraph and then forming a balanced Kikuchi graph similar to \cref{def:basickikuchi} using the derived constraints. We admit that the notion of being ``inherently bipartite'' is informal, but we do not see an easy way to formalize this.
\end{remark}

With the graph $A_{C,p}$ defined, we then let $A_i = \sum_{(C,p) \in H^{(s)}_i} A_{C,p}$ and $A = \sum_{i = 1}^k b_i A_i$. Crucially, because we have not used the Cauchy--Schwarz trick, $A_i$ depends only  on $H^{(s)}_i$ and not on any of the other hypergraphs. Because $H^{(s)}_i$ is a matching (over the larger vertex set $[n]$ and $P_s$), the row pruning calculation is much more straightforward.

\parhead{Row pruning for the imbalanced bipartite Kikuchi graph.} Unlike the case of normal Kikuchi graphs, we now need to argue concentration of both the left and right vertices. The case of the left vertices is rather straightforward: because $H^{(s)}_i$ is a matching and for an edge in $A_{C,p}$, the left vertex contains $\frac{q-1}{2}$ elements of $C$, which is the subset of ``original variables'' $[n]$, a similar calculation to the row pruning argument in~\cref{sec:agkmeven} shows concentration of the degrees of the left vertices provided that the average left degree $d_L$ is at least $\Omega(1)$. The average left degree is $d_L \approx \left(\frac{\ell}{n}\right)^{\frac{q-1}{2}} n$, which is roughly $n^{1/q} \gg 1$ since $\ell = n^{1 - 2/q}$.

The calculation for the right vertices is more interesting. We again use that $H^{(s)}_i$ is a matching to argue concentration provided that the average right degree $d_R$ is at least $\Omega(1)$. Now, we have $d_R \approx \left(\frac{\ell}{n}\right)^{\frac{q+1}{2} - s}  \left(\frac{\ell}{\abs{P_s}}\right) \cdot n$. As shown earlier, $\abs{P_s} \leq nk/d_s \approx n^{1+\frac{2s}{q} - \frac{3}{q}}$, using our threshold for $d_s$. Substituting in $\ell = n^{1 - 2/q}$ and the above bound on $\abs{P_s}$, we see that $d_R \geq \Omega(1)$ holds. Thus, the row pruning step goes through, and we are able to prove the $n \geq \tilde{\Omega}(k^{\frac{q}{q-2}})$ bound.

\parhead{Our bipartite Kikuchi graph compared to the naive imbalanced matrix.}
Why does the matrix $A_{C,p}$ succeed in yielding a $k^{\frac{q}{q-2}}$ lower bound, whereas the naive $\ell$ vs.\ $\ell+1$ matrix (\cref{def:naiveoddkikuchi}) only yields a $k^{\frac{q+1}{q-1}}$ lower bound?  Recall that for an edge $((S_1, S_2), (T_1, T_2))$ in the matrix $A_{C,p}$, we have split $C$ across $S_1$ and $T_1$ as $\abs{S_1 \cap C} = \frac{q-1}{2}$ and $\abs{T_1 \cap C} = \frac{q+1}{2} - s$. Because $p \notin S_2$ and $p \in T_2$, this means that the row set contains $\frac{q-1}{2}$ variables from $(C,p)$, and the column set ``effectively'' contains $\frac{q+1}{2}$ variables from $(C,p)$: it has $\frac{q+1}{2} - s$ variables contained in $T_1$ and then an extra $s$ from $T_2$ because $p_s$ as a variable ``represents'' a set of size $s$. Notice that the $\frac{q-1}{2}$ vs.\ $\frac{q+1}{2}$ split is precisely the split used by the $\ell$ vs.\ $\ell+1$ matrix. So, it is reasonable to ask: why is the matrix $A_{C,p}$ performing better?

The reason lies in the fact that the chosen threshold is $d_s =  \left(\frac{\ell}{n}\right)^{s - \frac{1}{2}} k = n^{-\frac{2s}{q} + \frac{3}{q}} k$ has an ``extra factor'' of $n^{1/q}$ when viewed from the following perspective. Recall that the thresholds $d_s$ are used only for $2 \leq s \leq \frac{q+1}{2}$. But, if we substitute in $s = 1$, we get $d_1 = n^{1/q} k$, which we can view as giving us a bound on the maximum degree of a singleton $v$ that we are able to tolerate. However, in the hypergraph $H \defeq \cup_{i = 1}^k H_i$, each singleton $v$ has $\deg_H(v) \leq k$, as $H$ is a union of matchings. So, $d_s$ is a $n^{1/q}$ factor ``larger'' than we might expect. In fact, following intuition from~\cite{GuruswamiKM22}, we would like $\deg(Q)$ to ``drop'' by a factor of $\ell/n$ per additional vertex included into the set $Q$, i.e., we intuitively set $d'_1 = k$ and then take $d'_{s+1} = (\ell/n) d'_s$ for larger $s$. This yields the threshold $d'_s = \left(\frac{\ell}{n}\right)^{s - 1} k = n^{-\frac{2s}{q} + \frac{2}{q}} k$. However, the threshold $d'_s$ is \emph{not} the threshold that arises out of the approach of~\cite{AlrabiahGKM23} (\cref{sec:agkmodd}); we can tolerate an extra factor of $n^{1/q}$.

Let us now explain why this $n^{1/q}$ factor is critical.
We expect the left/right degrees to be about $\left(\frac{\ell}{n}\right)^{\text{\# variables in split}} \cdot n$, i.e., $\left(\frac{\ell}{n}\right)^{\frac{q-1}{2}} \cdot n$ for the left degree and $\left(\frac{\ell}{n}\right)^{\frac{q+1}{2}} \cdot n$ for the right degree. If this happens, then we must take $\ell = n^{1 - 2/(q+1)}$ rather than $n^{1 - 2/q}$. This results in the weaker $k^{\frac{q+1}{q-1}}$ bound, and is precisely what happens with the naive matrix from \cref{def:naiveoddkikuchi}. However, for the matrix $A_{C,p}$, the fact that $d_s$ has an extra factor of $n^{1/q}$ boosts the average right degree by a factor of $n^{1/q}$. This means that our right degree is roughly $\left(\frac{\ell}{n}\right)^{\frac{q+1}{2}} \cdot n^{1 + 1/q}$, and this allows us to still take $\ell = n^{1 - 2/q}$. Notice that when we computed the sizes of the left and right vertex sets  for $A_{C,p}$, we had $N_R \approx n^{1/q} N_L$, whereas in the naive imbalanced matrix of \cref{def:naiveoddkikuchi} one has $N_R \approx n^{2/q} N_L$.

\parhead{Postmortem: the power of the bipartite Kikuchi matrix.} The proof of the $n \geq \tilde{\Omega}(k^{\frac{q}{q-2}})$ lower bound using the bipartite Kikuchi matrices sketched above is substantially simpler than a proof one might expect to obtain by following the more well-trodden path of ``hypergraph decomposition + Cauchy--Schwarz'' used in prior works (\cite{GuruswamiKM22,HsiehKM23,AlrabiahGKM23,KothariM23,Yankovitz24,KothariM24}). The success of the bipartite matrix in this setting comes as quite a surprise to us, as it is contrary to the conventional wisdom that imbalanced matrices yield poorer spectral certificates compared to balanced matrices. Moreover, the simplicity of the analysis is not just nice for aesthetic reasons: \cref{mthm:oddqldc} obtains a better dependence on $\log k$, $\delta$, and $\eps$ in the case of $q = 3$ as compared to the lower bound of~\cite{AlrabiahGKM23}. In fact, the $\log k$ dependence in \cref{mthm:oddqldc} is exactly the same as the dependence obtained for even $q$, and the loss in the $\delta$ and $\eps$ factors comes from the ``top level'' instance where we still use the Cauchy--Schwarz trick.

\parhead{Roadmap.} The full proof of \cref{mthm:oddqldc} is presented in \cref{sec:setup,sec:regularref,sec:bipartiteref}; we give preliminaries and notation in \cref{sec:prelims}. In \cref{sec:setup}, we handle the setup and the hypergraph decomposition step (\cref{sec:decomp}). In \cref{sec:regularref}, we refute the ``top level'' instance using the Cauchy--Schwarz trick, and in \cref{sec:bipartiteref} we use our new bipartite Kikuchi matrices to refute the subinstances $H^{(s)}_1, \dots, H^{(s)}_k$ for all $2 \leq s \leq \frac{q+1}{2}$.

\section{Preliminaries}
\label{sec:prelims}

\subsection{Basic notation and hypergraphs}
We let $[n]$ denote the set $\{1, \dots, n\}$. For two subsets $S, T \subseteq [n]$, we let $S \oplus T$ denote the symmetric difference of $S$ and $T$, i.e., $S \oplus T \coloneqq \{i : (i \in S \wedge i \notin T) \vee (i \notin S \wedge i \in T)\}$. For a natural number $t \in \N$, we let ${[n] \choose t}$ be the collection of subsets of $[n]$ of size exactly $t$.

For a rectangular matrix $A \in \R^{m \times n}$, we let $\norm{A}_2 \coloneqq \max_{x \in \R^m, y \in \R^n: \norm{x}_2 = \norm{y}_2 = 1} x^{\top} A y$ denote the spectral norm of $A$.

\begin{definition}[Hypergraphs and hypergraph matchings]
\label{def:hypergraphs}
A hypergraph $H$ with vertices $[n]$ is a collection of subsets $C \subseteq [n]$ called hyperedges. We say that a hypergraph $H$ is \emph{$q$-uniform} if $\abs{C}= q$ for all $C \in H$, and we say that $H$ is a \emph{matching} if all the hyperedges in $H$ are disjoint. For a subset $Q \subseteq [n]$, we define the degree of $Q$ in $H$, denoted $\deg_{H}(Q)$, to be $\abs{\{C \in H : Q \subseteq C\}}$.
\end{definition}

\begin{definition}[Bipartite hypergraphs]
\label{def:bipartitehypergraphs}
A bipartite hypergraph $H$ has two vertex sets $[n]$ and $P$ and is a collection of pairs $(C,p)$ with $C \subseteq [n]$ and $p \in P$ called hyperedges. We say that a bipartite hypergraph $H$ is \emph{$q$-uniform} if $\abs{C}= q - 1$ for all $(C,p) \in H$, and we say that $H$ is a \emph{matching} if all the hyperedges in $H$ are disjoint. That is, for $(C,p)$ and $(C',p')$ in $H$, it holds that $C \cap C' = \emptyset$ and $p \ne p'$.
\end{definition}

\subsection{Locally decodable codes}
We refer the reader to the survey~\cite{Yekhanin12} for background. 

A code is typically defined as a map $\Code \colon \Bits^k \to \Bits^n$. However, for our proofs it will be more convenient to view a code as taking values in $\Fits$ rather than $\Bits$; we switch between the two notations via the map $0 \mapsto 1$ and $1 \mapsto -1$. For a code $\Code \colon \Fits^k \to \Fits^n$, we will write $x \in \Code$ to denote an $x = \Code(b)$ for some $b \in \Bits^k$.

A locally decodable code is a code where one can recover any bit $b_i$ of the original message $b$ with good confidence while only reading a few bits of the encoded string in the presence of errors.
\begin{definition}[Locally decodable code]
\label{def:LDC}
A code $\Code \colon \Fits^k \to \Fits^n$ is $(q, \delta, \eps)$-locally decodable if there exists a randomized decoding algorithm $\Dec(\cdot)$ with the following properties. The algorithm $\Dec(\cdot)$ is given oracle access to some $y \in \Fits^n$, takes an $i \in [k]$ as input, and satisfies the following: \begin{inparaenum}[(1)] \item the algorithm $\Dec$ makes at most $q$ queries to the string $y$, and \item for all $b \in \Fits^k$, $i \in [k]$, and all $y \in \Fits^n$ such that $\Delta(y, \Code(b)) \leq \delta n$, $\Pr[\Dec^{y}(i) = b_i] \geq \frac{1}{2} + \eps$. Here, $\Delta(x,y)$ denotes the Hamming distance between $x$ and $y$, i.e., the number of indices $v \in [n]$ where $x_v \ne y_v$.\end{inparaenum}
\end{definition}

Following known reductions \cite{Yekhanin12}, locally decodable codes can be reduced to the following normal form, which is more convenient to work with.
\begin{definition}[Normal LDC]
\label{def:normalLDC}
A code $\Code \colon \Fits^k \to \Fits^n$ is $(q, \delta, \eps)$-normally decodable if for each $i \in [k]$, there is a $q$-uniform hypergraph matching $H_i$ with at least $\delta n$ hyperedges such that for every $C \in H_i$, it holds that $\Pr_{b \gets \Fits^k}[b_i = \prod_{v \in C} \Code(b)_v] \geq \frac{1}{2} + \eps$.
\end{definition}

\begin{fact}[Reduction to LDC normal form, Lemma 6.2 in~\cite{Yekhanin12}]\label{fact:LDCnormalform}
Let $\Code \colon \Fits^k \to \Fits^n$ be a code that is $(q, \delta, \eps)$-locally decodable. Then, there is a code $\Code' \colon \Fits^k \to \Fits^{O(n)}$ that is  $(q, \delta', \eps')$ normally decodable, with $\delta' \geq \eps \delta/3q^2 2^{q-1}$ and $\eps' \geq \eps/2^{2q}$.
\end{fact}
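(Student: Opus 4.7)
The plan is to implement the three-stage reduction of Yekhanin~\cite{Yekhanin12}: smoothing, Fourier reduction to parities, and matching extraction. Each stage progressively regularizes the decoder until, on input $i$, it samples a uniformly random hyperedge from a fixed matching $H_i$ and outputs the product of the queried bits.

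First I would apply the Katz--Trevisan smoothing reduction: starting from the $(q,\delta,\eps)$-local decoder and averaging over uniformly chosen error patterns of Hamming weight $\delta n$, one obtains a (possibly randomized) decoder that succeeds on uncorrupted codewords with advantage $\Omega(\eps\delta)$ and is \emph{smooth}, meaning that each coordinate $v \in [n]$ is queried on input $i$ with probability $O(q/\delta n)$. The smoothness is the property that enables matching extraction later.

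Next, for each $i$, the smoothed decoder samples a query set $Q \subseteq [n]$ with $|Q| \leq q$ together with a decision function $f_Q \colon \Fits^Q \to \Fits$. I would Fourier-expand $f_Q(x) = \sum_{S \subseteq Q} \widehat{f_Q}(S) \prod_{v \in S} x_v$ and then average over $b \gets \Fits^k$: the inequality $\E_{(Q, f_Q)} \E_b [b_i f_Q(\Code(b)|_Q)] \geq \Omega(\eps\delta)$ combined with $|\widehat{f_Q}(S)| \leq 1$ and pigeonhole over the $\leq 2^q$ Fourier levels produces, for a constant fraction of $(Q, f_Q)$, a signed parity $(S, \sigma)$ with $S \subseteq Q$ and $\sigma \in \Fits$ satisfying $\sigma \E_b[b_i \prod_{v \in S} \Code(b)_v] \geq \Omega(\eps/2^q)$. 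The sign $\sigma$ is absorbed by appending a pair of auxiliary bits always set to $+1$ and $-1$ to the codeword (inflating $n$ by a constant), and each $S$ is padded to size exactly $q$ by adjoining auxiliary constant-$+1$ bits, yielding unsigned $q$-parities.

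Finally, for each $i \in [k]$ the surviving good parities form a $q$-uniform hypergraph $\wwt{H}_i$. Smoothness forces $|\wwt{H}_i| = \Omega(\eps\delta n / q)$ with max-degree $O(|\wwt{H}_i|/n)$, so a greedy matching extraction that deletes $O(q^2/\delta)$ blocked edges per selected hyperedge produces a matching $H_i \subseteq \wwt{H}_i$ of size $\Omega(\eps\delta n / q^2)$. Correlation $\Omega(\eps/2^q)$ between $b_i$ and $\prod_{v \in C} \Code'(b)_v$ then converts to the pointwise bound $\Pr_b[b_i = \prod_{v \in C} \Code'(b)_v] \geq \frac{1}{2} + \Omega(\eps/2^q)$ required by normal form. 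The only real obstacle is tight bookkeeping of constants: the Fourier step loses a factor $2^q$ in $\eps$, sign-absorption loses another factor $2$, and smoothing plus greedy matching loses $q^2$ and one factor of $\eps$ in $\delta$, composing to the stated bounds $\delta' \geq \eps\delta/3q^2 2^{q-1}$ and $\eps' \geq \eps/2^{2q}$. Since the result is quoted verbatim from~\cite{Yekhanin12}, no new idea is needed beyond composing these known reductions in the correct order.
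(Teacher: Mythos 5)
The paper does not prove this fact; it is imported verbatim from Yekhanin's survey (Lemma 6.2 there), so the only meaningful comparison is with that standard argument. Your outline --- smoothing, Fourier reduction to signed parities, then matching extraction --- is exactly that argument, and the overall structure is right. Two details are off, though, and one of them would change the final parameters if taken literally.

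First, the smoothing step. You describe it as averaging over random weight-$\delta n$ error patterns and claim the resulting smooth decoder has advantage only $\Omega(\eps\delta)$ on uncorrupted codewords. The standard Katz--Trevisan smoothing does something different: for each $i$ one identifies the at most $\delta n$ coordinates queried with probability exceeding $q/(\delta n)$ and modifies the decoder to never read them (treating them as adversarially corrupted, which the original decoder tolerates by assumption). This preserves the full advantage $\eps$ while giving per-coordinate query probability at most $q/(\delta n)$; the factor $\delta$ is paid only in the smoothness parameter, and hence only in the matching size $\delta'$, never in $\eps'$. If the advantage really dropped to $\Omega(\eps\delta)$ before the Fourier step, your pigeonhole would yield parities with correlation $\Omega(\eps\delta/2^q)$ rather than $\Omega(\eps/2^q)$, and you could not recover the claimed $\eps' \geq \eps/2^{2q}$ (your own accounting is internally inconsistent here: an average of $\Omega(\eps\delta)$ cannot give both a constant fraction of good sets and correlation $\Omega(\eps/2^q)$ for each). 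Second, the padding. Absorbing the sign $\sigma$ and padding each parity $S$ up to size exactly $q$ cannot be done with $O(1)$ auxiliary coordinates: if many hyperedges of $H_i$ were padded with the same constant coordinate, $H_i$ would no longer be a matching, which is the whole point of the normal form. One must append $\Theta(n)$ fresh copies of the constant coordinates and pad distinct hyperedges with distinct copies --- this is precisely why $\Code'$ has blocklength $O(n)$ rather than $n+O(1)$. With those two corrections your sketch composes to the stated bounds.
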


\subsection{Matrix concentration inequalities}
We will make use of the following non-commutative Khintchine inequality~\cite{LustPiquardG91}.
\begin{fact}[Rectangular Matrix Khintchine Inequality, Theorem 4.1.1 of \cite{Tropp15}]
\label{fact:matrixkhintchine}
Let $X_1, \dots, X_k$ be fixed $d_1 \times d_2$ matrices and $b_1, \dots , b_k$ be i.i.d.\ from $\Fits$. Let $\sigma^2 \geq \max(\norm{\sum_{i = 1}^k X_i X_i^{\top}]}_2, \norm{\sum_{i = 1}^k X_i^{\top} X_i]}_2)$. Then
\begin{equation*}
\E\Bigl[\ \norm{\sum_{i = 1}^k b_i X_i}_2\ \Bigr] \leq \sqrt{2\sigma^2 \log(d_1 + d_2)} \mper
\end{equation*}
\end{fact}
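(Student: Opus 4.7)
The plan is a two-step reduction: first symmetrize the problem using the Hermitian dilation trick, then bound the largest eigenvalue of the resulting symmetric Matrix Rademacher series via the matrix moment generating function (MGF) method built on Lieb's concavity theorem.

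For the symmetrization step, I would associate to each $X_i$ its Hermitian dilation
\[
Z_i \defeq \begin{pmatrix} 0 & X_i \\ X_i^{\top} & 0 \end{pmatrix},
\]
a symmetric $(d_1 + d_2) \times (d_1 + d_2)$ matrix whose nonzero eigenvalues are exactly the signed singular values of $X_i$. Setting $Y \defeq \sum_i b_i Z_i$, this $Y$ is itself the dilation of $\sum_i b_i X_i$, so $\lambda_{\max}(Y) = \norm{\sum_i b_i X_i}_2$. A block computation also gives $Z_i^2 = \begin{pmatrix} X_i X_i^{\top} & 0 \\ 0 & X_i^{\top} X_i \end{pmatrix}$, so
\[
\Bignorm{\sum_i Z_i^2}_2 = \max\Paren{\Bignorm{\sum_i X_i X_i^{\top}}_2, \Bignorm{\sum_i X_i^{\top} X_i}_2} \leq \sigma^2.
\]
Thus it suffices to prove the symmetric bound $\E[\lambda_{\max}(Y)] \leq \sqrt{2\sigma^2 \log(d_1 + d_2)}$ for the $(d_1+d_2)$-dimensional Matrix Rademacher series $Y$.

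For the symmetric case, I would invoke the matrix Laplace transform. For any $\theta > 0$, the bound $\lambda_{\max}(Y) \leq \tfrac{1}{\theta} \log \Tr e^{\theta Y}$ combined with Markov's inequality yields
\[
\Pr[\lambda_{\max}(Y) \geq t] \leq e^{-\theta t} \E[\Tr e^{\theta Y}].
\]
Lieb's concavity theorem, applied iteratively over $i = 1, \dots, k$, then gives the matrix MGF subadditivity bound
\[
\E[\Tr e^{\theta Y}] \leq \Tr \exp\Paren{\sum_i \log \E[e^{\theta b_i Z_i}]}.
\]
Since $b_i$ is Rademacher, $\E[e^{\theta b_i Z_i}] = \cosh(\theta Z_i) \preceq \exp(\theta^2 Z_i^2 / 2)$ in the positive semidefinite order, and monotonicity of the trace exponential bounds the right-hand side by $\Tr \exp(\tfrac{\theta^2}{2} \sum_i Z_i^2) \leq (d_1 + d_2) \exp(\theta^2 \sigma^2 / 2)$. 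Optimizing $\theta = t/\sigma^2$ gives the sub-Gaussian tail $\Pr[\lambda_{\max}(Y) \geq t] \leq (d_1+d_2) \exp(-t^2/(2\sigma^2))$, and integrating the tail yields $\E[\lambda_{\max}(Y)] \leq \sqrt{2\sigma^2 \log(d_1+d_2)}$, as claimed.

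The main obstacle is Lieb's concavity theorem itself --- the nontrivial fact that $A \mapsto \Tr \exp(H + \log A)$ is concave on positive definite matrices for any Hermitian $H$, which is what enables the iterated subadditivity step for matrix MGFs. Granted this deep ingredient, the scalar inequality $\cosh(x) \leq e^{x^2/2}$ lifts to matrices by spectral calculus, and the remaining steps are routine manipulations of trace inequalities, positive semidefinite ordering, and sub-Gaussian tail integration. An alternative route that sidesteps Lieb is the moment method, where one directly bounds $\E[\Tr(Y^{2p})]$ by a combinatorial sum over length-$2p$ walks and then takes $p \approx \log(d_1 + d_2)$; this avoids Lieb but is substantially more intricate combinatorially and yields the same bound up to absolute constants.
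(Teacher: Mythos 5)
This fact is imported verbatim from Tropp's monograph (Theorem 4.1.1 of \cite{Tropp15}); the paper offers no proof of its own, so the only question is whether your argument is sound. It is, and it is in fact the standard proof from that source: Hermitian dilation to reduce to the symmetric case, the matrix Laplace transform method, subadditivity of matrix cumulant generating functions via Lieb's concavity theorem, and the semidefinite bound $\cosh(\theta Z_i) \preceq e^{\theta^2 Z_i^2/2}$. The one step I would tighten is the last: integrating the sub-Gaussian tail $\Pr[\lambda_{\max}(Y) \geq t] \leq (d_1+d_2)e^{-t^2/2\sigma^2}$ does \emph{not} give the clean constant $\sqrt{2\sigma^2\log(d_1+d_2)}$ --- it leaves an additive slack of order $\sigma/\sqrt{\log(d_1+d_2)}$ from the Gaussian integral beyond the threshold. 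To get the stated bound exactly, skip the tail bound and apply the expectation version of the matrix Laplace transform directly: by Jensen's inequality,
\begin{equation*}
\E[\lambda_{\max}(Y)] \leq \frac{1}{\theta}\,\E\bigl[\log \Tr e^{\theta Y}\bigr] \leq \frac{1}{\theta}\log \E\bigl[\Tr e^{\theta Y}\bigr] \leq \frac{\log(d_1+d_2)}{\theta} + \frac{\theta\sigma^2}{2} \mcom
\end{equation*}
and optimizing $\theta = \sqrt{2\log(d_1+d_2)}/\sigma$ yields $\sqrt{2\sigma^2\log(d_1+d_2)}$ on the nose. This is a one-line repair; everything else in your outline is correct.
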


\subsection{Binomial coefficient inequalities}
In this section, we state and prove the following fact about binomial coefficients that we will use.
\begin{fact}
\label{fact:binomest}
Let $n, \ell, q$ be positive integers with $\ell \leq n$. Let $q$ be constant and $\ell, n$ be asymptotically large with $\ell = o(n)$. Then, 
\begin{flalign*}
&\frac{ {n \choose \ell - q}}{ {n \choose \ell }} = \Theta\left(\left(\frac{\ell}{n}\right)^q\right) \mcom \\
&\frac{ {n - q \choose \ell} }{{n \choose \ell}} = \Theta(1) \mper
\end{flalign*}
\end{fact}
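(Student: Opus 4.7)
The plan is to expand each ratio of binomial coefficients into a product of exactly $q$ factors, since $q$ is constant, and then estimate each factor using the asymptotic assumption $\ell = o(n)$.

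For the first estimate, I would write
\begin{equation*}
\frac{{n \choose \ell - q}}{{n \choose \ell}} = \frac{\ell!\,(n-\ell)!}{(\ell-q)!\,(n-\ell+q)!} = \prod_{i=0}^{q-1} \frac{\ell - i}{n - \ell + q - i} \mper
\end{equation*}
Since $q$ is constant and $\ell \to \infty$, each numerator satisfies $\ell - i = \Theta(\ell)$. Since $\ell = o(n)$ and $q$ is constant, $n - \ell + q - i = n(1 - o(1)) = \Theta(n)$. Multiplying the $q$ factors gives $\Theta((\ell/n)^q)$, as claimed.

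For the second estimate, I would similarly expand
\begin{equation*}
\frac{{n-q \choose \ell}}{{n \choose \ell}} = \frac{(n-q)!\,(n-\ell)!}{n!\,(n-\ell-q)!} = \prod_{i=0}^{q-1} \frac{n - \ell - i}{n - i} \mper
\end{equation*}
Under $\ell = o(n)$ and $q$ constant, the numerator of each factor is $n(1 - o(1)) = \Theta(n)$ and the denominator is likewise $\Theta(n)$, so each factor is $\Theta(1)$ and the product of $q$ such factors is still $\Theta(1)$.

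There is no substantive obstacle here; both bounds follow from a single algebraic identity for ratios of binomial coefficients plus the observation that $\ell = o(n)$ makes $n - \ell = \Theta(n)$. The only minor point worth checking is that the constants hidden in the $\Theta(\cdot)$ depend on $q$, but since $q$ is fixed this is immaterial. (For definiteness, a simple explicit pair of constants is $\frac{1}{2} \leq \prod_{i=0}^{q-1}(1 - (\ell + i)/n)/(1 - i/n) \leq 1$ for $\ell + q \leq n/2$, which is the only regime we actually care about.)
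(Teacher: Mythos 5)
Your proof is correct and follows essentially the same route as the paper: the second identity is verified by the identical telescoping product, and for the first you expand the ratio directly into $q$ linear factors where the paper instead rewrites it as ${\ell \choose q}/{n-\ell+q \choose q}$ and applies the standard bounds $(a/b)^b \leq {a \choose b} \leq (ea/b)^b$ --- a purely cosmetic difference. Both arguments reduce to the observation that $\ell = o(n)$ makes $n - \ell = \Theta(n)$, so nothing further is needed.
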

\begin{proof}
We have that
\begin{flalign*}
&\frac{ {n \choose \ell - q}}{ {n \choose \ell }} = \frac{ { \ell \choose q}}{ {n - \ell + q \choose q}} \mper
\end{flalign*}
Using that $\left(\frac{a}{b} \right)^{b} \leq {a \choose b} \leq \left(\frac{e a}{b} \right)^{b}$ finishes the proof of the first equation.

We also have that
\begin{flalign*}
&\frac{ {n - q \choose \ell} }{{n \choose \ell}} = \frac{(n - q)! (n - \ell)!}{n! (n - \ell - q)!} = \prod_{i = 0}^{q -1 } \frac{n - \ell - i}{n - i} = \prod_{i = 0}^{q-1} \left(1 - \frac{\ell}{n - i}\right) \mcom
\end{flalign*}
and this is $\Theta(1)$ since $\ell = o(n)$ and $q$ is constant.
\end{proof}

\section{Proof of \cref{mthm:oddqldc}}
\label{sec:setup}
In this section, we begin the proof of \cref{mthm:oddqldc}. By \cref{fact:LDCnormalform}, we may assume that we start with a code $\Code$ in normal form. Namely, $\Code$ is map $\Code \colon \Fits^k \to \Fits^n$, and there exist $q$-uniform hypergraphs $H_1, \dots, H_k$ of size exactly $\delta n$ such that for every $C \in H_i$, it holds that $\E_{b \gets \Fits^k}[b_i \prod_{v \in C} \Code(b)_v] \geq \eps$. We will show that $k \leq O(n^{1 - \frac{2}{q}} \delta^{-2 - \frac{2}{q}} \eps^{-4})$ when $\Code$ is in normal form (\cref{def:normalLDC}), which implies \cref{mthm:oddqldc} for standard LDCs (\cref{def:LDC}).

To begin, we let $\Phi_b(x)$ denote the following polynomial:
\begin{equation*}
\Phi_b(x) \defeq \sum_{i = 1}^k \sum_{C \in H_i} b_i \prod_{v \in C} x_v \mper
\end{equation*}
Because $\E_{b \gets \Fits^k}[b_i \prod_{v \in C} \Code(b)_v] \geq \eps$, it follows that $\E_b[\Phi_b(\Code_b(x))] \geq \eps \sum_{i = 1}^k \abs{H_i} \geq \eps \delta nk$. Hence, we have that $\E_b[\val(\Phi_b)] \geq \eps \delta n k$, where $\val(\Phi_b) \defeq \max_{x \in \Fits^n} \Phi_b(x)$.

\parhead{Overview: refuting the $q$-XOR instance $\Phi_b$.} It thus remains to bound $\E_b[\val(\Phi_b)]$. We will do this by building on the spectral methods of~\cite{GuruswamiKM22,AlrabiahGKM23}. As discussed in \cref{sec:bipartitekikuchiintro}, the argument proceeds in two steps.
\begin{enumerate}[(1)]
\item \textbf{Hypergraph decomposition:} First, we decompose the hypergraphs $H_1, \dots, H_k$ informally as follows. For $2 \leq t \leq \frac{q + 1}{2}$, we define ``degree thresholds'' $d_t$ where $d_t \coloneqq \left(\frac{\ell}{n}\right)^{t - \frac{3}{2}} k$. 
Then, for every $Q \subseteq [n]$ of size $s$ with $2 \leq s \leq \frac{q + 1}{2}$, we call $Q$ ``heavy'' if $Q$ is contained in more than $d_s$ hyperedges in the multiset $\cup_{i  = 1}^k H_i$. For each heavy $Q$, we introduce a new variable $y_{p_Q}$ and let $P_s$ be the set of the labels $p_Q$ corresponding to $\abs{Q} = s$. Then, for each hyperedge $C$, if $Q \subseteq C$ is the largest heavy $Q$ contained in $C$, we replace the hyperedge $C$ with $(C \setminus Q, p_Q)$, where $p_Q \in P_s$. We thus produce, for each $i \in [k]$, hypergraphs $H_i^{(s)}$ for each $2 \leq s \leq \frac{q+1}{2}$ where a hyperedge in $H_i^{(s)}$ has the form $(C', p)$ for some $p \in P_s$ and $C' \subseteq [n]$ with $\abs{C'} = q - s$, along with the hypergraph $H'_i$ of ``leftover edges''.
\item \textbf{Refutation:} With the decomposition in hand, we then produce polynomials $\Psi_b^{(s)}$ for each $2 \leq s \leq \frac{q+1}{2}$, along with a polynomial $\Psi_b$, such that $\Psi_b + \sum_{s = 2}^{\frac{q+1}{2}} \Psi_b^{(s)} = \Phi_b$. We then produce upper bound $\E_b[\val(\Psi_b)]$ as well as $\E_b[\val(\Psi_b^{(s)})]$ for each polynomial $\Psi_b^{(s)}$ in the decomposition. Combining these bounds allows us to upper bound $\E_b[\val(\Phi_b)]$ and finishes the proof. As discussed in \cref{sec:bipartitekikuchiintro}, our key technical contribution is designing the matrix whose spectral norm upper bounds $\E_b[\val(\Psi_b^{(s)})]$ for each of the ``decomposed'' instances $\Psi^{(s)}$ for $2 \leq s \leq \frac{q+1}{2}$.
\end{enumerate}

We now formally describe the decomposition process.

\begin{lemma}[Hypergraph Decomposition]
\label{lem:decomp}
Let $H_1, \dots, H_k$ be $q$-uniform hypergraphs on $n$ vertices, and let $H$ be the multiset $H \coloneqq \cup_{i = 1}^k H_i$. 

For each $2 \leq s \leq \frac{q+1}{2}$, let $d_s$ be a positive integer such that $d_2 \geq d_s \geq \dots \geq d_{\frac{q+1}{2}} \geq 1$, and let $P_s \coloneqq \{Q \in {[n] \choose s} : \deg_{H}(Q) > d_s\}$. Then, there are $q$-uniform hypergraphs $H'_1, \dots, H'_k$ and, for each $2 \leq s \leq \frac{q+1}{2}$, bipartite hypergraphs $H_1^{(s)}, \dots, H_k^{(s)}$, with the following properties.
\begin{enumerate}[(1)]
\item Each $H^{(s)}_i$ is a bipartite hypergraph where each hyperedge contains $q - s$ left vertices in $[n]$ and one right vertex $p \in P_s$. Furthermore, $\abs{P_s} \leq O(\abs{H} / d_s)$.
\item Each $H'_i$ is a subset of $H_i$.
\item For each $i \in [k]$, there is a one-to-one correspondence between hyperedges $C \in H_i$ and the hyperedges in $H'_i, H_i^{(2)}, \dots, H_i^{(\frac{q+1}{2})}$ given by $(C, p) \in H_i^{(s)} \mapsto C \cup p \in H_i$ and $C \in H'_i \mapsto C \in H_i$.
\item Let $H' \coloneqq \cup_{i = 1}^k H'_i$. Then, for any $Q \in {[n] \choose s}$ with $2 \leq s \leq \frac{q+1}{2}$, it holds that $\deg_{H'}(Q) \leq d_s$.
\item If $H_i$ is a matching, then $H'_i$ and $H^{(s)}_i$ for $2 \leq s \leq \frac{q+1}{2}$ are also matchings.
\end{enumerate}
\end{lemma}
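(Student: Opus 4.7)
The plan is to construct the decomposition greedily, processing one original hyperedge $C \in H_i$ at a time. First I would fix, for each $2 \leq s \leq \frac{q+1}{2}$, the set $P_s \defeq \{Q \in {[n] \choose s} : \deg_H(Q) > d_s\}$ of heavy $s$-sets and introduce a distinct label $p_Q$ for each $Q \in P_s$. Then for each $i \in [k]$ and each $C \in H_i$: if $C$ contains no heavy set of size in $\{2, \dots, \frac{q+1}{2}\}$, put $C$ into $H'_i$; otherwise let $s^\star$ be the largest $s$ in this range for which $C$ contains some $Q \in P_s$, choose any such $Q$ (ties broken arbitrarily), and put $(C \setminus Q, p_Q)$ into $H_i^{(s^\star)}$. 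By construction each $C \in H_i$ is assigned to exactly one of $H'_i, H_i^{(2)}, \dots, H_i^{((q+1)/2)}$, which immediately yields items (2) and (3), and the arity description in (1) is also immediate.

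For the size bound $\abs{P_s} = O(\abs{H}/d_s)$ in (1), I would use a standard double counting argument: the sum $\sum_{Q \in P_s} \deg_H(Q)$ is at most the total number of (hyperedge, $s$-subset) incidences in $H$, which equals ${q \choose s} \abs{H}$; since each summand on the left exceeds $d_s$, we obtain $\abs{P_s}\, d_s < {q \choose s} \abs{H}$, giving the claim after absorbing ${q \choose s}$ into $O(\cdot)$.

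For (4), I would argue by contradiction. Suppose some $Q$ with $\abs{Q} = s$ and $2 \leq s \leq \frac{q+1}{2}$ has $\deg_{H'}(Q) > d_s$. Then $\deg_H(Q) \geq \deg_{H'}(Q) > d_s$, so $Q \in P_s$; but the greedy rule guarantees that \emph{every} $C \in H$ containing the heavy set $Q$ is removed from $H_i$ and placed into some $H_i^{(s')}$ with $s' \geq s$, hence $\deg_{H'}(Q) = 0$, a contradiction. For non-heavy $Q$ of size $s$ in the same range, $\deg_{H'}(Q) \leq \deg_H(Q) \leq d_s$ is immediate from the definition of heaviness. This is the step requiring the most care, since one must be certain that the greedy placement rule never leaves behind a heavy subset of size in $\{2, \dots, \frac{q+1}{2}\}$; apart from this verification the argument is essentially mechanical.

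Finally, for (5), assume $H_i$ is a matching. Then $H'_i \subseteq H_i$ is automatically a matching. For $H_i^{(s)}$, two distinct hyperedges $(C \setminus Q, p_Q)$ and $(C' \setminus Q', p_{Q'})$ come from distinct $C, C' \in H_i$ with $C \cap C' = \emptyset$; this forces $Q \cap Q' = \emptyset$, so $Q \neq Q'$ (whence $p_Q \neq p_{Q'}$), and also $(C \setminus Q) \cap (C' \setminus Q') = \emptyset$, verifying the bipartite matching condition of \cref{def:bipartitehypergraphs}. Overall, no step presents a serious obstacle; the whole argument is a routine greedy decomposition paired with one counting inequality.
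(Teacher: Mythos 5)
Your proof is correct and follows essentially the same greedy decomposition as the paper, including the same double-counting argument for the bound $\abs{P_s} \leq O(\abs{H}/d_s)$. The only (immaterial) difference is that you assign every hyperedge in a single pass according to heaviness in the original $H$, whereas the paper's algorithm repeatedly extracts $d_t+1$ hyperedges at a time and recomputes heaviness in the residual hypergraph $H'$; both variants satisfy all five properties, and yours in fact yields the slightly stronger conclusion $\deg_{H'}(Q)=0$ for every heavy $Q$.
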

The proof of \cref{lem:decomp} follows by using a simple greedy algorithm, and is given in \cref{sec:decomp}.

Given the decomposition, the two main technical parts of the proof are given by the following two theorems. In the first theorem, we refute the $q$-XOR instance resulting from the hypergraph $H'$, and in the second theorem we refute the bipartite $(q-s+1)$-XOR instances from the hypergraphs $H^{(s)}$ for each $2 \leq s \leq \frac{q+1}{2}$.
\begin{restatable}[Refuting the regular $q$-XOR instance]{theorem}{regularref}
\label{thm:regularref}
Let $q \geq 3$ be an odd integer. Let $k, n$ be positive integers and $\delta \in (0,1)$. Let $\ell = \floor{n^{1 - 2/q} \cdot \delta^{-2/q}}$, and suppose that $k \geq 4\ell$. For $2 \leq t \leq \frac{q+1}{2}$, let $d_t \defeq \left(\frac{\ell}{n}\right)^{t - \frac{3}{2}} k$.

Let $H_1, \dots, H_k$ be $q$-uniform hypergraph matchings on $[n]$ of size $\leq \delta n$, and suppose that for every $Q \subseteq [n]$ with $2 \leq \abs{Q} \leq \frac{q+1}{2}$, it holds that $\deg_H(Q) \leq d_{\abs{Q}}$, where $H \defeq \cup_{i = 1}^k H_i$. Let $\Psi_b(x)$ be the polynomial in the variable $x_1, \dots, x_n$ defined as
\begin{equation*}
\Psi_b(x) = \sum_{i = 1}^k \sum_{C \in H_i} b_i \prod_{v \in C} x_v \mper
\end{equation*}
Then, $\E_{b \gets \Fits^k}[\val(\Psi_b)] \leq  O(n\sqrt{\delta k}) \cdot (k \ell \log n)^{1/4} $.
\end{restatable}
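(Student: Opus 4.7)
The plan is to follow the approach of \cite{AlrabiahGKM23} sketched in \cref{sec:agkmodd}: apply a Cauchy--Schwarz trick to pass from the odd-arity polynomial $\Psi_b$ to an auxiliary $2(q-1)$-XOR polynomial $\tilde{\Psi}_b$, encode the latter as a quadratic form in a balanced Kikuchi matrix, and bound the matrix's spectral norm via the Matrix Khintchine inequality after a random bipartition of $[k]$. The hypothesis $\deg_H(Q) \leq d_{|Q|}$ enters exclusively in the row-pruning step, which is the main obstacle to overcome.

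Assign each hyperedge $C$ a pivot vertex $v(C) \in C$ and let $i(C)$ denote the unique index with $C \in H_{i(C)}$. Writing $\Psi_b(x) = \sum_{v} x_v \cdot g_v(x)$ with $g_v(x) \defeq \sum_{C : v(C) = v} b_{i(C)} \prod_{w \in C \setminus \{v\}} x_w$, Cauchy--Schwarz gives $\val(\Psi_b)^2 \leq n \cdot \val(\tilde{\Psi}_b)$ where $\tilde{\Psi}_b(x) \defeq \sum_v g_v(x)^2$ is a polynomial of degree $\leq 2(q-1)$ (after using $x_w^2 = 1$); by Jensen, it suffices to show $\E_b[\val(\tilde{\Psi}_b)] \leq O(\delta n k)\sqrt{k \ell \log n}$. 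Next, for each ``derived'' pair $(C, C')$ sharing pivot $v$, with $D = C \setminus \{v\}$ and $D' = C' \setminus \{v\}$ of size $q-1$ (which is even), introduce a Kikuchi matrix $A_{D,D'}$ on $\binom{[n]}{\ell}$ with $A_{D,D'}(S,T) = 1$ iff $S \oplus T = D \oplus D'$ and the even-split condition $|S \cap D| = |S \cap D'| = (q-1)/2$ holds. A standard computation gives $z^\top A_{D,D'} z = M \cdot \prod_{w \in D \oplus D'} x_w$ with $z_S = \prod_{u \in S} x_u$ and $M$ a combinatorial normalization, so $\tilde{\Psi}_b$ is a quadratic form in $z$ applied to the random matrix $\sum_{i,j} b_i b_j A_{i,j}$, where $A_{i,j} \defeq \sum_{v} \sum_{\substack{C \in H_i,\, v(C)=v \\ C' \in H_j,\, v(C')=v}} A_{C \setminus \{v\},\, C' \setminus \{v\}}$.

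Randomly partition $[k] = L \sqcup R$ with $|L| \geq k/2$ WLOG, and consider the bipartite cross-term $A_{L,R} \defeq \sum_{i \in L} b_i \Paren{\sum_{j \in R} b_j A_{i,j}}$, which in expectation carries a constant fraction of the off-diagonal mass; the $i=j$ diagonal together with the $L$--$L$ and $R$--$R$ same-side parts are handled separately by easier arguments. Conditioning on $R$ and $\{b_j\}_{j \in R}$ gives $A_{L,R} = \sum_{i \in L} b_i A_i$ with fixed $A_i$, so Matrix Khintchine (\cref{fact:matrixkhintchine}) yields $\E[\norm{A_{L,R}}_2] \leq O(\sqrt{\norm{\sum_{i \in L} A_i^2}_2 \log n})$. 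The main step is then to show, via row pruning, that $\norm{\sum_{i \in L} A_i^2}_2$ is on the order of $(M \cdot \delta k) \cdot |L|$ up to lower-order factors. The even split is essential: any row $S$ with much-larger-than-average degree in some $A_i$ must contain a set $Q \subseteq [n]$ of size between $2$ and $(q-1)/2$ of original variables that appears in many hyperedges of $H$, and the assumption $\deg_H(Q) \leq d_{|Q|} = (\ell/n)^{|Q|-3/2} k$ for $2 \leq |Q| \leq (q+1)/2$ is calibrated exactly so that a variance-of-degree moment calculation (analogous to the $q=3$ argument in \cite{AlrabiahGKM23}) forces heavy rows to account for only an $o(1)$ fraction of the edges, so that their deletion leaves the average essentially unchanged while capping the maximum row sum. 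Aggregating over $i \in L$ to bound $\sum_i A_i^2$ requires tracking how size-$s$ patterns from each scale $s \in \{2, \ldots, (q+1)/2\}$ contribute to entries of the product, with the threshold $d_s$ absorbing each contribution in turn; this bookkeeping-heavy calculation is where essentially all of the work sits. Plugging the resulting $\sigma^2$ back through Matrix Khintchine and translating via the Kikuchi quadratic form and the initial Cauchy--Schwarz step delivers the target bound $\E_b[\val(\Psi_b)] \leq O(n\sqrt{\delta k})(k \ell \log n)^{1/4}$.
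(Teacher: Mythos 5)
Your proposal follows essentially the same route as the paper's proof: the Cauchy--Schwarz trick to pass to a degree-$2(q-1)$ polynomial, an even-split Kikuchi matrix for the derived constraints, a random bipartition $[k]=L\cup R$ to restore independence, Matrix Khintchine conditioned on $\{b_j\}_{j\in R}$, and row pruning calibrated by the thresholds $d_s$; your final assembly of the bounds is also the same. The cosmetic differences (a single assigned pivot $v(C)$ rather than summing over all $q$ decompositions; a single index set $S\in\binom{[n]}{\ell}$ with the even-split condition rather than the paper's pair $(S_1,S_2)$; treating the same-side $L$--$L$, $R$--$R$ terms separately rather than absorbing all off-diagonal mass into $4\,\E_{(L,R)}[f_{L,R}]$) do not change the argument, though the paper's pair-of-sets matrix handles overlapping $C_1,C_2$ more cleanly and its expectation identity makes the same-side terms vanish rather than requiring a separate bound.

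The one substantive caveat is that the entire content of the theorem sits in the row-pruning lemma, which you describe but do not execute. The paper proves it not by a variance (second-moment) computation as in \cite{AlrabiahGKM23}, but by the conditional \emph{first}-moment method of \cite{Yankovitz24}: for each constraint $(u,C_1,C_2)$ one bounds $\E_{(S_1,S_2)\sim\mu_{u,C_1,C_2}}[\deg_i(S_1,S_2)]\leq 1+O(1)(\ell/n)^{q-1}\delta nk$ via a case analysis on the overlap patterns $(Q_1,Q_2)$ of another constraint with $(S_1,S_2)$, using the matching property for $\abs{Q}\leq 1$ and the hypothesis $\deg_H(Q)\leq d_{\abs{Q}}$ for $\abs{Q}\geq 2$; Markov's inequality then shows heavy rows touch at most an $O(1/\Gamma)$ fraction of each constraint's edges. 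Your sketch correctly identifies what must be shown and why the thresholds are calibrated for it, but until that case analysis (or an equivalent variance computation) is carried out, the proof is an outline rather than a complete argument.
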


\begin{restatable}[Refuting the bipartite instances]{theorem}{bipartiteref}
\label{thm:bipartiteref}
Let $q \geq 3$ be an odd integer, and let $2 \leq s \leq \frac{q+1}{2}$. Let $k, n$ be positive integers and $\delta \in (0,1)$. Let $\ell = \floor{n^{1 - 2/q} \cdot \delta^{-2/q}}$, and suppose that $k \geq 4\ell$. For $2 \leq t \leq \frac{q+1}{2}$, let $d_t \defeq \left(\frac{\ell}{n}\right)^{t - \frac{3}{2}} k$. Let $P_s \subseteq {[n] \choose s}$ be a set with $4 \ell \leq \abs{P_s} \leq O\left(\frac{n k}{d_s}\right)$.

Let $H^{(s)}_1, \dots, H^{(s)}_k$ be bipartite $(q-s+1)$-uniform hypergraph matchings on ${[n] \choose q - s} \times P_s$ of size at most $\delta n$. Let $\Psi^{(s)}_b(x,y)$ be the polynomial in the variable $x_1, \dots, x_n$ and $\{y_p\}_{p \in P_s}$ defined as
\begin{equation*}
\Psi^{(s)}_b(x,y) = \sum_{i = 1}^k \sum_{(C, p) \in H_i} b_i y_p \prod_{v \in C} x_v \mper
\end{equation*}
Then, $\E_{b \gets \Fits^k}[\val(\Psi^{(s)}_b)] \leq  \delta n O(\sqrt{k \ell \log n})$.
\end{restatable}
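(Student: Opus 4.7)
The plan is to carry out the strategy sketched in \cref{sec:bipartitekikuchiintro}: instantiate the bipartite Kikuchi graph of \cref{def:bipartitekikuchintro} for each constraint, reduce $\val(\Psi_b^{(s)})$ to a spectral norm estimate, perform a row pruning step using that each $H_i^{(s)}$ is a matching, and finish with the rectangular matrix Khintchine inequality (\cref{fact:matrixkhintchine}).

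Concretely, for each $(C,p)$ with $C \in \binom{[n]}{q-s}$ and $p \in P_s$, let $A_{C,p}$ be the bipartite Kikuchi matrix from \cref{def:bipartitekikuchintro}. A direct count shows that $A_{C,p}$ has a number $D$ of nonzero entries that is independent of the choice of $(C,p)$, and that for any $x \in \Fits^n, y \in \Fits^{P_s}$, defining the ``lift'' vectors $z_{(S_1,S_2)} \defeq \prod_{v \in S_1} x_v \prod_{p' \in S_2} y_{p'}$ and $w_{(T_1,T_2)}$ analogously, one has $z^\top A_{C,p} w = D \cdot y_p \prod_{v \in C} x_v$. Setting $A_i \defeq \sum_{(C,p) \in H^{(s)}_i} A_{C,p}$ and $A \defeq \sum_{i=1}^k b_i A_i$ then gives $D \cdot \Psi_b^{(s)}(x,y) = z^\top A w$, so $\val(\Psi_b^{(s)}) \leq D^{-1} \|A\|_2 \sqrt{N_L N_R}$ where $N_L, N_R$ are the sizes of the left and right vertex sets. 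The remaining task is to upper bound $\E_b[\|A\|_2]$.

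The key step is row pruning: I will find large subsets $L^\star$ of left vertices and $R^\star$ of right vertices so that, simultaneously for every $i$, the restriction of $A_i$ to $L^\star \times R^\star$ has max left and right degrees $\widetilde{d}_L, \widetilde{d}_R$ of the same order as the average left and right degrees $d_L, d_R$ of $A_i$, up to $\polylog(n)$ factors. The matching property of each $H_i^{(s)}$ is crucial here: since distinct hyperedges $(C,p) \ne (C',p')$ in $H_i^{(s)}$ satisfy $C \cap C' = \emptyset$ and $p \ne p'$, the indicator events that a fixed $(S_1, S_2)$ (resp.\ $(T_1, T_2)$) is adjacent via different hyperedges become approximately independent, and suitably high moments of $\deg_i^L, \deg_i^R$ concentrate around their means. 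The parameter choice $\ell = \floor{n^{1-2/q}\delta^{-2/q}}$ together with the assumed bound $|P_s| \leq O(nk/d_s)$ for $d_s = (\ell/n)^{s-3/2} k$ ensures that $d_L, d_R = \Omega(1)$, which is exactly the calculation in \cref{sec:bipartitekikuchiintro} and is what allows the pruning to preserve almost all of the edges. Markov's inequality applied to the high-moment bounds then delivers the desired $L^\star, R^\star$. Applying \cref{fact:matrixkhintchine} to the pruned matrix $\widetilde{A} = \sum_i b_i \widetilde{A}_i$ gives an upper bound on $\E_b[\|\widetilde{A}\|_2]$ in terms of $\max(\widetilde{d}_L, \widetilde{d}_R)$ and the total number of nonzeros $\sum_i |H_i^{(s)}| D \leq \delta n k D$; substituting into $\val(\Psi_b^{(s)}) \leq D^{-1} \|\widetilde{A}\|_2 \sqrt{N_L N_R}$ and simplifying using \cref{fact:binomest} and the parameter choices yields the claimed bound $\E_b[\val(\Psi_b^{(s)})] \leq \delta n \cdot O(\sqrt{k \ell \log n})$.

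I expect the row pruning to be the main obstacle. Concentration on the \emph{right} side of the bipartite Kikuchi graph is the most delicate point: it makes essential use of the tight upper bound $|P_s| \leq O(nk/d_s)$, since any weakening of this bound would shrink $d_R$ below $1$, force a larger choice of $\ell$, and weaken the lower bound from $k^{q/(q-2)}$ toward the naive $k^{(q+1)/(q-1)}$ regime discussed at the start of \cref{sec:agkmodd}. Carrying out the $\polylog(n)$-th moment computations for both $\deg_i^L$ and $\deg_i^R$, and verifying that every cross-term between distinct matching hyperedges contributes the correct factor of $(\ell/n)$ per additional vertex from $[n]$ and $(\ell/|P_s|)$ per additional vertex from $P_s$, is where the bulk of the technical work lies.
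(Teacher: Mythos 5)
Your overall architecture matches the paper's: the same inherently bipartite Kikuchi matrix $A_{C,p}$, the same quadratic-form identity $z^{\top}A w = D\,\Psi^{(s)}_b(x,y)$ giving $\val(\Psi^{(s)}_b) \leq D^{-1}\norm{A}_2\sqrt{N_L N_R}$, a row-pruning step exploiting that each $H^{(s)}_i$ is a matching, and Matrix Khintchine with variance proxy $O(k\, d_L d_R)$. Where you diverge is in how the pruning is executed. You propose computing $\polylog(n)$-th moments of $\deg^L_i$ and $\deg^R_i$ over \emph{uniformly random} vertices and applying Markov. The paper instead uses the conditional \emph{first}-moment technique of~\cite{Yankovitz24}: for each hyperedge $(C,p)$ it bounds $\E[\deg_i(\cdot)]$ where the vertex is a random endpoint of a random edge \emph{labeled by} $(C,p)$ (\cref{lem:bipartitemoments}), which is an edge-weighted average. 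This distinction matters: to preserve almost all edges you must bound the number of deleted \emph{edges}, not deleted vertices, and high-degree vertices carry disproportionately many edges. Your high-moment route can recover this (bound $\E[\deg\cdot\mathbf{1}[\deg>\Gamma d]]$ via $\E[\deg^{t}]/(\Gamma d)^{t-1}$), but it is more work and, with the usual $\E[\deg^t]\le (Ct)^t\max(d,1)^t$ bounds, forces the degree cap $\Gamma$ to be $\Theta(\log n)$ rather than $O(1)$, so you would prove the theorem only up to an extra $\polylog(n)$ factor --- losing exactly the quantitative improvement over~\cite{AlrabiahGKM23} that the first-moment method buys.

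Two further points you should not gloss over. First, after restricting to low-degree vertices, different hyperedges $(C,p)$ retain different numbers of edges, which breaks the exact identity $z^{\top}\widetilde{A}w = D'\,\Psi^{(s)}_b(x,y)$; the paper fixes this by trimming each $B_{i,C,p}$ to \emph{exactly} $D'$ edges (\cref{lem:bipartiterowpruning}), and you need some equivalent step (signs prevent you from simply arguing monotonically). Second, your pruned vertex sets $L^\star, R^\star$ are required to work simultaneously for all $i$, which is stronger than necessary and harder to certify; the paper's $V'_L, V'_R$ are chosen separately for each $i$, which suffices because the Khintchine bound only needs a per-$i$ degree bound on each $B_i$. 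Your identification of the right-degree calculation and the role of $\abs{P_s} \leq O(nk/d_s)$ as the crux is exactly right and matches \cref{claim:bipartitedegbound}.
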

We prove \cref{thm:regularref} in \cref{sec:regularref}, and we prove \cref{thm:bipartiteref} in \cref{sec:bipartiteref}.

With the above ingredients, we can now finish the proof of \cref{mthm:oddqldc}. 
\begin{proof}[Proof of \cref{mthm:oddqldc}]
By \cref{fact:LDCnormalform}, we may assume that our code $\Code$ is in LDC normal form, and our goal is to show that $k \leq O(\eps^{-4} \delta^{-2} \ell \log n)$ holds, where $\ell = \floor{n^{1 - 2/q} \cdot \delta^{-2/q}}$, which implies \cref{mthm:oddqldc}. 
 We will assume that $\delta$ satisfies $\delta \geq n^{-\frac{2}{q+2}}$, as otherwise $\delta^{-2} \ell \geq n$ holds, and so $k \leq n \leq O(\eps^{-4} \delta^{-2} \ell \log n)$ trivially holds.
 
For each $2 \leq s \leq \frac{q+1}{2}$, define $d_s \coloneqq \left(\frac{\ell}{n}\right)^{s - \frac{3}{2}} k$.  We will assume that $k \geq 4\ell$, as otherwise we are already done. Because of this assumption, we have $\left(\frac{\ell}{n}\right)^{\frac{q}{2} - 1} k \geq 1$, which implies that $d_s \geq 1$ for all $2 \leq s \leq \frac{q+1}{2}$.
We can thus apply \cref{lem:decomp} with these thresholds, which decomposes each $H_i$ into $H'_i$ and $H^{(s)}_i$ for $2 \leq s \leq \frac{q+1}{2}$. Note that $\abs{P_s} \leq O(\delta nk/d_s) = O(1) \cdot \left(\frac{n}{\ell}\right)^{s - \frac{3}{2}} \cdot \delta n$.

The one-to-one correspondence property in \cref{lem:decomp} implies that for each $b \in \Fits^k$ and every $x \in \Fits^n$, if we set $y_p = \prod_{v \in p} x_v$ for each $p \in P_s$ and $2 \leq s \leq \frac{q+1}{2}$, then it holds that $\Phi_b(x) = \Psi_b(x) + \sum_{s = 2}^{\frac{q+1}{2}} \Psi^{(s)}_{b}(x,y)$.

We can now apply \cref{thm:regularref,thm:bipartiteref} to bound $\E[\val(\Psi_b)]$ and $\E[\val(\Psi^{(s)}_b)]$. However, it is possible that the condition that $\abs{P_s} \geq 4 \ell$ does not hold. But, if $\abs{P_s} \leq 4 \ell$, then the conclusion of \cref{thm:bipartiteref} still holds. This is because for any $b$, $\val(\Psi^{(s)}_b) \leq \sum_{i = 1}^k \abs{H^{(s)}_i}$ trivially holds, and we also have $\sum_{i = 1}^k \abs{H^{(s)}_i} \leq \abs{P_s} d_s$, as each $p \in P_s$ contributes at most $d_s$ hyperedges to $\cup_{i = 1}^k H^{(s)}_i$. Hence, $\val(\Psi^{(s)}_b) \leq \ell d_s$ in this case, which is at most $\delta n O(\sqrt{k \ell \log n})$ when $\delta = \Omega(n^{-\frac{2}{q+2}})$.

We thus have that
\begin{flalign*}
\eps \delta n k \leq \E[\val(\Phi_b)] \leq \E[\val(\Psi_b)] + \sum_{s = 2}^{\frac{q + 1}{2}} \E[\val(\Psi^{(s)}_b)] \leq O(n\sqrt{\delta k}) \cdot (k \ell \log n)^{1/4}  + \delta n O(\sqrt{k \ell \log n}) \mper
\end{flalign*}
We have two cases. If $O(n\sqrt{\delta k}) \cdot (k \ell \log n)^{1/4}$ is larger than $\delta n O(\sqrt{k \ell \log n})$, then we conclude that
\begin{flalign*}
\eps \delta n k \leq O(n\sqrt{\delta k}) \cdot (k \ell \log n)^{1/4}  \implies  k \leq O(\eps^{-4} \delta^{-2} \ell \log n) \mcom
\end{flalign*}
and if $O(n\sqrt{\delta k}) \cdot (k \ell \log n)^{1/4}$ is smaller than $\delta n O(\sqrt{k \ell \log n})$, we conclude that
\begin{flalign*}
\eps \delta n k \leq \delta n O(\sqrt{k \ell \log n}) \implies k \leq O(\eps^{-1} \ell \log n) \mper
\end{flalign*}
Thus, we have $k \leq O(\eps^{-4} \delta^{-2} \ell \log n) = O(n^{1 - \frac{2}{q}} \delta^{-2- \frac{2}{q}}\eps^{-4} \log n)$, which finishes the proof.
\end{proof}

\subsection{Hypergraph decomposition: proof of \cref{lem:decomp}}
\label{sec:decomp}

We prove \cref{lem:decomp} by analyzing the following greedy algorithm.
\begin{mdframed}
  \begin{algorithm}
    \label{alg:decomp}\mbox{}
    \begin{description}
    \item[Given:]
       $q$-uniform hypergraphs $H_1, \dots, H_k$ and parameters $d_2 \geq d_3 \geq \dots \geq d_{\frac{q+1}{2}} \geq 1$.
    \item[Output:]
       $q$-uniform hypergraphs $H'_1, \dots, H'_k$ and for $2 \leq s \leq \frac{q+1}{2}$, bipartite $(q - s + 1)$-uniform hypergraphs $H^{(s)}_1, \dots, H^{(s)}_k$ over the left vertex set $[n]$ and right vertex set $P_s \subseteq {[n] \choose s}$.
           \item[Operation:]\mbox{}
    \begin{enumerate}
    	\item \textbf{Initialize:} $H'_i = H_i$ for all $i \in [k]$, $P_s = \emptyset$, and $P'_s = \{Q \in {[n] \choose s} : \deg_{H'}(Q) > d_s\}$, where $H' = \cup_{i \in [k]} H'_i$.
	\item \textbf{For $t = \frac{q+1}{2}, \dots, 1$:}
	\begin{enumerate}[(1)]
	    \item \textbf{While $P'_t$ is nonempty:}
			\begin{enumerate}[(a)]
    			\item Choose $p \in P'_t$ arbitrarily.
			\item Choose an arbitrary set of $d_t+1$ hyperedges in $H'$ containing the set $p$. Namely, let $C_{i_1}, \dots, C_{i_{d_t+1}}$ be hyperedges in $H'$ where $C_{i_1} \in H'_{i_1}$, \dots, $C_{i_{d_t+1}} \in H'_{i_{d_t+1}}$.	
    			\item Add $p$ to $P_t$ and for each $r \in [d_{t} + 1]$, remove $C_{i_r}$ from $H'_{i_r}$ and add the hyperedge $(C_{i_r} \setminus p, p)$ to $H^{(t)}_{i_r}$. 
			\item Recompute $P'_t = \{Q \in {[n] \choose t} : \deg_{H'}(Q) > d_t\}$.
    		\end{enumerate}
		\end{enumerate}
		\item Output $H'_1, \dots, H'_k$ and $H^{(s)}_1, \dots, H^{(s)}_k$ for all $2 \leq s \leq \frac{q+1}{2}$.
		\end{enumerate}
    \end{description}
  \end{algorithm}
  \end{mdframed}
  We now need to show that the output of \cref{alg:decomp} has the desired properties.
  
  Item (1) holds by construction, as each $p \in P_s$ has size $s$ so when the hyperedge $C$ is split into $(C \setminus p, p)$, $\abs{C \setminus p} = q -s$. We have that $\abs{P_s} \leq O(\abs{H} / d_s)$, as each hyperedge $C \in H$ has (crudely) at most $2^q = O(1)$ subsets of size exactly $s$, and each $p \in P_s$ must appear at least $d_s + 1$ times across hyperedges in $H$.
  
  Item (2) holds by construction, as we start with $H'_i = H_i$ and only remove edges from $H'_i$.
  
  Item (3) holds because each hyperedge $C \in H_i$ is either never removed (in which case it appears in $H'_i$), or it is removed exactly once. If it is removed by choosing some $p \in P_s$, then it appears in $H^{(s)}_i$ as the hyperedge $(C \setminus p, p)$.
  
  Item (4) holds because otherwise the algorithm would not have terminated.
  
  Item (5) holds because the operations done by \cref{alg:decomp} do not affect the matching property. This finishes the proof.
  
\section{Refuting the Regular $q$-XOR Instance}
\label{sec:regularref}
In this section, we prove \cref{thm:regularref}, which we recall below.

\regularref*

The proof of \cref{thm:regularref} follows the overall blueprint outlined in the work of~\cite{AlrabiahGKM23}, as explained in \cref{sec:agkmodd}.

\parhead{Step 1: the Cauchy-Schwarz trick.} First, we show that we can relate $\Psi(x)$ to a certain ``Cauchy-Schwarzed'' polynomial $f_{L,R}(x)$.
\begin{lemma}[Cauchy-Schwarz Trick]
\label{lem:cauchyschwarz}
Let $\Psi$ be as in \cref{thm:regularref} and let $L,R \subseteq [k]$ be a random partition of $[k]$, i.e., each $i \in [k]$ appears in $L$ with probability $1/2$, independently, and $R = [k] \setminus L$. Let $f_{L,R}(x)$ be the polynomial defined as
\begin{equation*}
f_{L,R}(x) \defeq \sum_{i \in L, j \in R} \sum_{u \in [n]} \sum_{(u,C_1) \in H_i, (u,C_2) \in H_j} b_i b_j \prod_{v \in C_1} x_v \prod_{v \in C_2} x_v \mper
\end{equation*}
Then, it holds that $(q \val(\Psi))^2 \leq  q \delta n^2 + 4n \E_{(L,R)} \val(f_{L,R})$. In particular, $\E_{b \in \Fits^k} [q^2 \cdot \val(\Psi)^2 ] \leq q \delta n^2 + 4 n \E_{(L,R)} \E_{b \in \Fits^k} [\val(f_{L,R})]$.
\end{lemma}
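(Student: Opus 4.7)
The plan is to execute the standard Cauchy--Schwarz trick, coupled with the random-partition device that separates the indices $i$ and $j$ into independent buckets so as to make the resulting quadratic form analyzable by spectral methods. I would proceed in three steps.

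First, since every hyperedge has exactly $q$ vertices, I can pull one variable out of each monomial to write
\[ q\,\Psi_b(x) \;=\; \sum_{u \in [n]} x_u \cdot h_u(x), \qquad h_u(x) \;\defeq\; \sum_{i=1}^k b_i \sum_{(u,C') \in H_i} \prod_{v \in C'} x_v, \]
where the inner sum ranges over $C' = C \setminus \{u\}$ for those $C \in H_i$ containing $u$. Fixing $x^\star \in \Fits^n$ that attains $\val(\Psi)$, the Cauchy--Schwarz inequality applied to the vectors $(x^\star_u)_u$ and $(h_u(x^\star))_u$ gives
\[ (q\,\val(\Psi))^2 \;=\; \Bigl(\sum_u x^\star_u\, h_u(x^\star)\Bigr)^2 \;\le\; n \cdot \sum_{u \in [n]} h_u(x^\star)^2. \]

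Second, I would expand $h_u(x^\star)^2$ and use $x^2_v = 1$ to split into diagonal ($i=j$) and off-diagonal ($i \ne j$) contributions. The diagonal is where the \emph{matching} hypothesis on each $H_i$ is crucial: for every fixed $u$ there is at most one $C \in H_i$ containing $u$, so the $i=j$ terms collapse, giving the deterministic quantity $q \sum_i |H_i|$, which accounts for the first summand in the claimed bound. The off-diagonal contribution is the random quadratic form
\[ \sum_{i \ne j} b_i b_j \sum_{u \in [n]} \sum_{\substack{(u, C_1) \in H_i \\ (u, C_2) \in H_j}} \prod_{v \in C_1} x^\star_v \prod_{v \in C_2} x^\star_v. \]

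Finally, I would introduce independence via the random partition. Because $\Pr_{(L,R)}[i \in L,\ j \in R] = \tfrac{1}{4}$ for every distinct pair $i \ne j$, the elementary identity $\sum_{i \ne j} T_{ij} = 4\,\E_{(L,R)}\!\sum_{i \in L,\ j \in R} T_{ij}$ rewrites the off-diagonal part as $4\,\E_{(L,R)} f_{L,R}(x^\star)$, which is at most $4\,\E_{(L,R)} \val(f_{L,R})$. Combining the diagonal and off-diagonal bounds and multiplying by $n$ yields the pointwise inequality at $x^\star$, which is the stated inequality for $\val(\Psi)$. The ``in particular'' statement follows immediately by taking $\E_b$ of both sides, since the diagonal contribution is independent of $b$.

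I do not expect a serious obstacle here; the only subtlety is the order of operations. One must freeze $x^\star = \arg\max_x \Psi_b(x)$ \emph{before} invoking the partition identity, so that the pointwise bound $f_{L,R}(x^\star) \le \val(f_{L,R})$ is valid for each $(L,R)$ separately and survives the expectation. Everything else is bookkeeping once the Cauchy--Schwarz step, the matching collapse of the diagonal, and the $1/4$ partition identity are in place.
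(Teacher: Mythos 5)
Your proposal is correct and follows essentially the same route as the paper: pull one vertex out of each hyperedge to write $q\Psi = \sum_{u} x_u h_u(x)$, apply Cauchy--Schwarz, collapse the diagonal using the matching property of each $H_i$, and convert the off-diagonal sum into $4\,\E_{(L,R)} f_{L,R}$ via the $\Pr[i \in L, j \in R] = \tfrac{1}{4}$ identity, taking care to fix the maximizer $x^\star$ before averaging over partitions. One shared quibble: the diagonal equals $q\sum_i \abs{H_i} \leq q k \delta n$, so after multiplying by $n$ it is $q\delta n^2 k$ rather than the stated $q\delta n^2$ --- a factor-$k$ slip that the paper's own proof also commits (it silently equates $\sum_i \abs{H_i}$ with $\delta n$) and which is harmless downstream since this term is dominated by the off-diagonal contribution in any case.
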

\begin{proof}
Fix any assignment to $x \in \Fits^n$. We have that
\begin{flalign*}
&(q\Psi(x))^2 = \left(\sum_{u \in [n]} x_u \sum_{i \in [k]} \sum_{(u, C) \in H_i}  b_i x_C\right)^2 \leq \left(\sum_{u \in [n]} x_u^2\right) \left( \sum_{u \in [n]} \left(\sum_{i \in [k]} \sum_{(u, C) \in H_i} b_i x_C\right)^2 \right) \\
&= n \sum_{u \in [n]} \sum_{i, j \in [k]} \sum_{\substack{(u, C_1) \in H_i \\ (u, C_2) \in H_j}} b_i b_j x_{C_1} x_{C_2} = n\left(q\sum_{i \in [k]} \abs{H_i} + \sum_{u \in [n]}  \sum_{i, j \in [k], i \ne j}\sum_{\substack{(u, C_1) \in H_i \\ (u, C_2) \in H_j}} b_i b_j x_{C_1} x_{C_2} \right) \\
&= q n \cdot \delta n + 4 n \cdot \E_{(L, R)} f_{L,R}(x) \enspace,
\end{flalign*}
where the first equality is because there are $q$ ways to decompose a set $C_i \in H_i$ with $\abs{C_i} = q$ into a pair $(u, C)$ with $\abs{C} = q - 1$, the inequality follows by the Cauchy-Schwarz inequality, and the last equality follows because for a pair of hypergraphs $H_i$ and $H_j$, we have $i \in L$ and $j \in R$ with probability $1/4$. Finally, $\max_{x \in \{-1,1\}^n} \E_{(L,R)} f_{L,R}(x) \leq \E_{(L,R)} [\max_{x \in \{-1,1\}^n} f_{L,R}(x)] = \E_{(L,R)} \val(f_{L,R})$. Thus, we have that $q^2 \cdot \val(\Psi)^2 \leq q \delta n^2 + 4 n \cdot  \E_{(L,R)} \val(f_{L,R})$. 
\end{proof}

\parhead{Step 2: defining the Kikuchi matrices.}
Next, we define the Kikuchi matrices that we will use and relate them to the polynomial $f_{L,R}$.

\begin{definition}
\label{def:regularkikuchi}
Let $q \geq 3$ be an odd integer and let $\ell = \floor{n^{1 - 2/q} \cdot \delta^{-2/q}}$. Let $(u,C_1)$ be a hyperedge with $\abs{C_1} = q - 1$ and let $(u,C_2)$ be a hyperedge with $\abs{C_2} = q - 1$. We define the matrix $A_{u, C_1, C_2}$ to be the matrix indexed by pairs of sets $(S_1, S_2)$ where $S_1, S_2 \subseteq [n]$ and $\abs{S_1} = \abs{S_2} = \ell$, where $A_{u, C_1, C_2}((S_1, S_2), (T_1, T_2)) = 1$ if $S_1 \oplus T_1 = C_1$ and $S_2 \oplus T_2 = C_2$, and $0$ otherwise. We note that this is equivalent to $\abs{S_1 \cap C_1} = \abs{T_1 \cap C_1} = \frac{q - 1}{2}$ and $\abs{S_2 \cap C_2} = \abs{T_2 \cap C_2} = \frac{q - 1}{2}$.

We will also view the matrix $A_{u,C_1, C_2}$ as the adjacency matrix of a graph $G_{u, C_1, C_2}$.

For $i \ne j \in [k]$, we define $A_{i,j} \defeq \sum_{u \in [n]} \sum_{(u,C_1) \in H_i, (u,C_2) \in H_j} A_{u, C_1, C_2}$. We also define $A_{i} \defeq \sum_{j \in R} b_j A_{i,j}$ and $A \defeq \sum_{i \in L} b_i A_i$.
\end{definition}

\begin{claim}
\label{claim:regularsoundness}
For each $(u,C_1, C_2)$, the matrix $A_{u,C_1,C_2}$ defined in \cref{def:regularkikuchi} satisfies the following properties.
\begin{enumerate}[(1)]
\item The matrix $A_{u, C_1, C_2}$ has exactly $D = {q - 1 \choose \frac{q-1}{2}}^2 {n - (q - 1) \choose \ell - \frac{q - 1}{2}}^2$ nonzero entries.
\item For each $x \in \Fits^n$, let $z \in \Fits^{{n \choose \ell}^2}$ be defined as: $z_{S_1, S_2} := \prod_{v \in S_1} x_v \prod_{v \in S_2} x_v$. Then, $z^{\top} A_{u, C_1, C_2} z = D \prod_{v \in C_1} x_v \prod_{v \in C_2} x_v$.
\end{enumerate}
In particular, $z^{\top} A z = D f_{L,R}(x)$. 

As additional notation, we let $N \defeq {n \choose \ell}^2$ and $d = \frac{\delta n k D}{N}$.
\end{claim}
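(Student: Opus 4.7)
The claim is a direct counting and algebraic verification, so my plan is straightforward: prove (1) by enumeration of edges, prove (2) by expanding the quadratic form, and obtain the final statement by linearity.

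For part (1), I would use the characterization noted in the claim itself: the entry $A_{u,C_1,C_2}((S_1,S_2),(T_1,T_2))$ is nonzero iff $S_1 \oplus T_1 = C_1$ and $S_2 \oplus T_2 = C_2$. Since $|S_1| = |T_1| = \ell$ and $|C_1| = q - 1$, the equality $S_1 \oplus T_1 = C_1$ forces $S_1$ and $T_1$ to agree outside $C_1$ and to split $C_1$ into complementary subsets of equal size $(q-1)/2$ (which is an integer since $q$ is odd). Thus choosing the pair $(S_1, T_1)$ amounts to independently choosing $S_1 \cap C_1 \subseteq C_1$ of size $(q-1)/2$, which determines $T_1 \cap C_1$ as its complement in $C_1$, and then choosing the common set $S_1 \setminus C_1 = T_1 \setminus C_1$, a subset of $[n] \setminus C_1$ of size $\ell - (q-1)/2$. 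This gives $\binom{q-1}{(q-1)/2}\binom{n-(q-1)}{\ell - (q-1)/2}$ choices for $(S_1, T_1)$, and an identical count for $(S_2, T_2)$, so the total number of nonzero entries is exactly $D$.

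For part (2), I would expand
\begin{equation*}
z^\top A_{u,C_1,C_2} z = \sum_{(S_1,S_2),(T_1,T_2)} A_{u,C_1,C_2}((S_1,S_2),(T_1,T_2))\, z_{S_1,S_2}\, z_{T_1,T_2}\mper
\end{equation*}
Using $x_v^2 = 1$, the product $z_{S_1,S_2} z_{T_1,T_2} = \prod_{v \in S_1 \oplus T_1} x_v \prod_{v \in S_2 \oplus T_2} x_v$. For each nonzero entry this simplifies to $\prod_{v \in C_1} x_v \prod_{v \in C_2} x_v$, which is independent of the particular $(S_1,S_2,T_1,T_2)$. So the sum equals (number of nonzero entries) times $\prod_{v \in C_1} x_v \prod_{v \in C_2} x_v$, which is $D \prod_{v \in C_1} x_v \prod_{v \in C_2} x_v$ by part (1).

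The ``in particular'' statement then follows by linearity: unfolding $A = \sum_{i \in L} b_i \sum_{j \in R} b_j A_{i,j}$ and $A_{i,j} = \sum_{u} \sum_{(u,C_1) \in H_i, (u,C_2)\in H_j} A_{u,C_1,C_2}$, we get
\begin{equation*}
z^\top A z = D \sum_{i \in L, j \in R} b_i b_j \sum_{u \in [n]} \sum_{\substack{(u,C_1) \in H_i \\ (u,C_2)\in H_j}} \prod_{v \in C_1} x_v \prod_{v \in C_2} x_v = D\, f_{L,R}(x)\mper
\end{equation*}
There is really no obstacle here: the only subtlety is checking that $(q-1)/2$ is an integer (which uses that $q$ is odd) and that all the cardinality constraints on $S_1, T_1, S_2, T_2$ are consistent. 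Everything else is bookkeeping.
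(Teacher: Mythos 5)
Your proof is correct and follows essentially the same route as the paper: count the edges of $A_{u,C_1,C_2}$ by choosing $S_1\cap C_1$ and the common part $S_1\setminus C_1=T_1\setminus C_1$ (and likewise for the second coordinate), then expand the quadratic form using $x_v^2=1$ so each edge contributes $\prod_{v\in C_1}x_v\prod_{v\in C_2}x_v$, and conclude by linearity. Your version is slightly more explicit about the bijection between nonzero entries and the choices $(S_1\cap C_1,\ S_1\setminus C_1)$, but the argument is the same.
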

\begin{proof}
To prove Item (1), we will count the number of edges. By definition, we have an edge $((S_1, S_2), (T_1, T_2))$ in the graph with adjacency matrix $A_{u,C_1,C_2}$ iff $\abs{S_1 \cap C_1} = \frac{q-1}{2}$ and $\abs{S_2 \cap C'_2} = \frac{q - 1}{2}$. The number of such $S_1$ is ${q - 1 \choose \frac{q-1}{2}} {n - (q - 1) \choose \ell - \frac{q - 1}{2}}$, and the number of such $S_2$ is the same. Hence, Item (1) holds.

To prove Item (2), we observe that
\begin{flalign*}
&z^{\top} A_{u, C_1, C_2} z = \sum_{((S_1, S_2), (T_1, T_2)) \in E(G_{u, C_1, C_2})} z_{S_1, S_2} w_{T_1, T_2} = \sum_{((S_1, S_2), (T_1, T_2)) \in E(G_{u, C_1, C_2})} \prod_{v \in S_1} x_v \prod_{v \in S_2} x_v \prod_{v \in T_1} x_v \prod_{v \in T_2} x_v \\
&=\sum_{((S_1, S_2), (T_1, T_2)) \in E(G_{u, C_1, C_2})} \prod_{v \in S_1 \oplus T_1} x_v \prod_{v \in S_2 \oplus T_2} x_v = \sum_{((S_1, S_2), (T_1, T_2)) \in E(G_{u, C_1, C_2})} \prod_{v \in {C_1}} x_v  \prod_{v \in {C_2}} x_v = D \prod_{v \in {C_1}} x_v  \prod_{v \in {C_2}} x_v \mper
\end{flalign*}

The ``in particular'' follows immediately from Item (2) and the definition of $A$.
\end{proof}

\parhead{Step 3: finding an approximately regular submatrix.}
The key technical lemma, which we shall prove in \cref{sec:regularrowpruning}, shows that we can find an approximately biregular subgraph of $A_i$ for each $i \in L$.
\begin{lemma}[Approximately regular submatrix]
\label{lem:regularrowpruning}
For $i \in L$, let $A_i$ be defined as in \cref{def:regularkikuchi}. There exists a positive integer $D'$ with $D \geq D' \geq \frac{D}{2}$ such that the following holds. For each $i \in L$, $(u, C_1) \in H_i$, $j \in R$, and $(u,C_2) \in H_j$, there exists a matrix $B_{i, u, C_1, C_2} \in \Bits^{{n \choose \ell}^2 \times {n \choose \ell}^2}$ with the following properties:
\begin{enumerate}[(1)]
\item $B_{i, u, C_1, C_2}$ is a ``subgraph'' of $A_{u, C_1, C_2}$. Namely, $B_{i, u, C_1, C_2} = B_{i, u, C_1, C_2}^{\top}$ and if $B_{i, u, C_1, C_2}((S_1, S_2), (T_1, T_2)) = 1$, then we also have $A_{u, C_1, C_2}((S_1, S_2), (T_1, T_2)) = 1$.
\item $B_{i,u, C_1, C_2}$ has exactly $D'$ nonzero entries.
\item The matrix $B_i \defeq \sum_{j \in R} b_j \sum_{u \in [n]} \sum_{(u, C_1) \in H_i, (u, C_2) \in H_j} B_{i, u, C_1, C_2}$ has at most $O(d)$ nonzero entries per row or column, where $d = \frac{\delta nk D}{N}$.
\end{enumerate}
\end{lemma}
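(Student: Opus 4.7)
The plan is to implement the standard ``row pruning'' argument: identify a small set $\cB_i$ of vertices with abnormally high degree in $A_i$, remove them from every constituent matrix $A_{u,C_1,C_2}$, and show that each matrix retains at least half of its edges, yielding a $B_i$ with maximum degree $O(d)$. As a warm-up, the average row/column degree of $A_i$ is $\Theta(d)$: since $H_i$ and each $H_j$ are matchings, each $u \in [n]$ appears in at most one hyperedge per matching, so the number of triples $(u, C_1, C_2)$ with $(u,C_1) \in H_i$ and $(u,C_2) \in H_j$ for some $j \in R$ is at most $q\abs{H_i} \cdot k \leq q\delta nk$. Multiplying by $D$ gives total edges at most $O(\delta nk D)$, and hence average degree $O(d)$.

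The core technical step is a second-moment bound $\E_{v=(S_1,S_2)}[\deg_{A_i}(v)^2] \leq O(d^2)$ over a uniformly random $v$. Expanding, this is a sum over pairs $((u,C_1,j,C_2),(u',C_1',j',C_2'))$ of tuples, weighted by the probability that $\abs{S_1\cap C_1} = \abs{S_1\cap C_1'} = \abs{S_2\cap C_2} = \abs{S_2\cap C_2'} = (q-1)/2$. By independence of $S_1$ and $S_2$ this probability factors into two binomial ratios, each evaluated via \cref{fact:binomest} to a quantity determined by $\abs{C_1 \cup C_1'}$ (resp.\ $\abs{C_2 \cup C_2'}$). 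The number of contributing tuple pairs with a given overlap pattern is then bounded using the matching property of $H_i$ (which forces $C_1$ and $C_1'$ to be disjoint whenever $(u,C_1) \neq (u',C_1')$), the matching property of each $H_j$, and the degree threshold $\deg_H(Q) \leq d_{\abs{Q}}$ for $2 \leq \abs{Q} \leq (q+1)/2$ (which controls cross-matching overlaps of $C_2$ and $C_2'$). Crucially, the specific choice $d_s = (\ell/n)^{s - 3/2} k$ causes the increase in ``collision probability'' from a larger overlap to be exactly cancelled by the reduction in the number of compatible tuples, so every case contributes at most $O(d^2)$.

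Given the moment bound, Chebyshev yields $\abs{\cB_i} \leq O(N/K^2)$ for $\cB_i \defeq \{v : \deg_{A_i}(v) > Kd\}$. A short calculation via \cref{fact:binomest} shows $D/N = \Theta(N \cdot (\ell/n)^{q-1})$, which vastly exceeds $1$ for $\ell = n^{1-2/q}$; hence $\abs{\cB_i} \leq D/2$ for $K$ a sufficiently large absolute constant. Since each $A_{u,C_1,C_2}$ is a matching (every vertex has degree $0$ or $1$ in it), removing the vertices of $\cB_i$ deletes at most $\abs{\cB_i} \leq D/2$ edges from it. Setting $D' \defeq \min_{u,C_1,C_2}(\text{edges of } A_{u,C_1,C_2} \text{ surviving in } V \setminus \cB_i) \geq D/2$ and defining $B_{i,u,C_1,C_2}$ to be any $D'$ of the surviving edges gives properties (1) and (2); every vertex outside $\cB_i$ has degree at most $Kd = O(d)$ in $B_i$ (it had degree at most $Kd$ in $A_i$, and truncation only decreases degree), while every vertex in $\cB_i$ has degree $0$ in $B_i$ by construction, giving (3). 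The main obstacle is the second-moment calculation in the middle paragraph: the intersection structure of $(C_1, C_1')$ and $(C_2, C_2')$, combined with whether $u = u'$ and $j = j'$, produces several subcases, and each requires careful bookkeeping to balance the count of contributing tuples (controlled by the $d_s$ thresholds) against the binomial incidence probability (controlled by \cref{fact:binomest}); this trade-off is precisely what dictates the exact choice of thresholds $d_s$.
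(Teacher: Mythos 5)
Your overall architecture (prune high-degree vertices; use that each $A_{u,C_1,C_2}$ is a matching, so deleting a set of vertices deletes at most that many of its edges) is sound, and the second-moment casework you describe is essentially the right combinatorial content. The gap is in the step that converts the moment bound into the per-constraint conclusion. Chebyshev over a \emph{uniformly random} vertex gives $\abs{\cB_i} \leq O(1/K^2)\cdot\binom{n}{\ell}^2$, since the vertex set consists of pairs $(S_1,S_2)$ and has size $\binom{n}{\ell}^2$, whereas $D = \Theta\bigl(\binom{n}{\ell}^2(\ell/n)^{q-1}\bigr)$ is only an $(\ell/n)^{q-1}$-fraction of the vertex count. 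Hence $\abs{\cB_i} \leq D/2$ forces $K \geq \Omega\bigl((n/\ell)^{(q-1)/2}\bigr) = n^{\Omega(1)}$, not a constant, and then property (3) only yields maximum degree $O(Kd)$, which ruins the final spectral bound. (Your line ``$D/N = \Theta(N(\ell/n)^{q-1})$ vastly exceeds $1$'' conflates $N = \binom{n}{\ell}$ with the actual number of vertices $\binom{n}{\ell}^2$; under either reading the comparison of $\abs{\cB_i}$ against $D/2$ fails for constant $K$.) This is not repairable by sharpening the second moment: $\E_v[\deg^2] \geq (\E_v[\deg])^2 = \Omega(d^2)$, so the Chebyshev tail can never be smaller than $\Omega(N_{\mathrm{tot}}/K^2)$. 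Nor does the alternative of bounding the total number of deleted edges by $\sum_v \deg(v)\1[\deg(v)>Kd] \leq N_{\mathrm{tot}}\E[\deg^2]/(Kd) = O(\abs{E(A_i)}/K)$ help: that is a \emph{global} count which is still $\gg D$ and could in principle be concentrated on a single constraint, while property (2) requires every $B_{i,u,C_1,C_2}$ individually to retain the same number $D' \geq D/2$ of edges.

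The paper circumvents this with a \emph{conditional} first moment (the Yankovitz trick): for each constraint $(u,C_1,C_2)$ it takes the distribution $\mu_{u,C_1,C_2}$ that picks a uniformly random edge of $A_{u,C_1,C_2}$ and outputs an endpoint, and proves $\E_{(S_1,S_2)\sim\mu_{u,C_1,C_2}}[\deg_i(S_1,S_2)] \leq O(d)$ (\cref{lem:regularmoments}). Markov's inequality applied within this conditional distribution shows that at most an $O(1/\Gamma)$ fraction of the $D$ edges of $A_{u,C_1,C_2}$ touch a vertex of degree exceeding $\Gamma d$, which is exactly the per-constraint retention guarantee needed. The overlap casework you outline (intersections of $C_1,C_1'$ and $C_2,C_2'$, the matching properties, and the thresholds $d_s$) transfers almost verbatim to this conditional computation, so the fix is to recondition your moment calculation rather than redo it.
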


\parhead{Step 4: finishing the proof.}
With \cref{lem:regularrowpruning} in hand, we can now finish the proof. Let $B_i$ be the matrix defined in \cref{lem:regularrowpruning}. We observe that Items (1), (2), and (3) in \cref{lem:regularrowpruning}, along with \cref{claim:regularsoundness}, imply that for each $x \in \Fits^n$, there exists $z \in \Fits^{N}$ such that $z^{\top} B z = D' f_{L,R}(x)$. Hence, for any $x \in \Fits^n$, it holds that $\val(f_{L,R}) \leq \norm{B}_2 \cdot N$. We thus have that $\E_b[\val(f_{L,R})] \leq \frac{N}{D'} \E_b[\norm{B}_2]$.

It remains to bound $\E_b[\norm{B}_2]$, which we do using Matrix Khintchine (\cref{fact:matrixkhintchine}) in the following claim.
\begin{claim}
\label{claim:specnormregular}
Let $B$ be the matrix defined in \cref{lem:regularrowpruning}. Then, $\E_b[\norm{B}_2] \leq O(d \sqrt{k \ell \log n})$.
\end{claim}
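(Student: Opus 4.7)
The plan is to apply the rectangular Matrix Khintchine inequality (\cref{fact:matrixkhintchine}) to $B$, viewed as a Rademacher series in the variables $\{b_i\}_{i \in L}$ after conditioning on the remaining randomness $\{b_j\}_{j \in R}$. Based on the relation $z^{\top} B z = D' f_{L,R}(x)$ quoted just before the claim, the intended definition is
\[
B = \sum_{i \in L} b_i \, Y_i, \qquad Y_i \defeq \sum_{j \in R} b_j M_{i,j}, \qquad M_{i,j} \defeq \sum_{u \in [n]} \sum_{\substack{(u,C_1) \in H_i \\ (u,C_2) \in H_j}} B_{i, u, C_1, C_2}.
\]
Each $B_{i, u, C_1, C_2}$ is symmetric by Property (1) of \cref{lem:regularrowpruning}, so each $Y_i$ is symmetric. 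Fixing an arbitrary realization of $b_R$ and applying \cref{fact:matrixkhintchine} to the $|L|$ symmetric matrices $\{Y_i\}_{i \in L}$ yields $\E_{b_L}\!\left[\norm{B}_2 \mid b_R\right] \leq \sqrt{2\sigma^2(b_R) \log(2N)}$, where $\sigma^2(b_R) = \norm{\sum_{i \in L} Y_i^2}_2$ and $N = \binom{n}{\ell}$.

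Next I would bound $\sigma^2(b_R)$ uniformly in $b_R$ via the max-absolute-row-sum norm $\norm{\cdot}_{\infty}$. For symmetric matrices $\norm{\cdot}_2 \leq \norm{\cdot}_{\infty}$, and this operator norm is submultiplicative and subadditive, so
\[
\sigma^2(b_R) \;\leq\; \Bignorm{\sum_{i \in L} Y_i^2}_\infty \;\leq\; \sum_{i \in L} \norm{Y_i}_\infty^2.
\]
The key combinatorial step is the per-$i$ bound $\norm{Y_i}_\infty \leq O(d)$: since $|b_j| = 1$, the triangle inequality gives $|Y_i(S,S')| \leq \sum_{j \in R} M_{i,j}(S,S')$, and summing row $S$ over all $S'$ yields
\[
\sum_{S'} |Y_i(S,S')| \;\leq\; \sum_{S'} \sum_{j \in R} M_{i,j}(S,S') \;=\; \sum_{S'} B_i(S,S') \;\leq\; O(d),
\]
where the final inequality is exactly Property (3) of \cref{lem:regularrowpruning}. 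Thus $\sigma^2(b_R) \leq |L| \cdot O(d^2) \leq O(k d^2)$, a bound independent of $b_R$.

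Combining the two steps and taking expectation over $b_R$ gives $\E_b[\norm{B}_2] \leq O(d \sqrt{k \log N})$, and then $\log N = \log \binom{n}{\ell} \leq O(\ell \log n)$ yields the claimed $O(d \sqrt{k \ell \log n})$ bound. The potential obstacle would be if handling the $b_R$-correlated randomness in $B$ required a two-sided Khintchine or decoupling argument that loses polynomial factors; this is avoided cleanly because $\norm{\cdot}_{\infty}$ controls $\norm{\cdot}_2$ for PSD matrices and, crucially, Property (3) of the row-pruning lemma already bounds the row sums of $B_i = \sum_{j \in R} M_{i,j}$, which dominates the worst-case row sums of every signed combination $Y_i$ regardless of $b_R$. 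No additional row-pruning work and no second application of Khintchine is needed.
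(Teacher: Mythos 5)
Your proof is correct and follows essentially the same route as the paper: apply Matrix Khintchine to $B$ as a Rademacher series in $\{b_i\}_{i\in L}$, bound $\sigma^2$ by $k\cdot O(d)^2$ via the $O(d)$ row-sum bound from Property (3) of \cref{lem:regularrowpruning}, and use $\log N = O(\ell \log n)$. Your explicit conditioning on $b_R$ and the observation that the row sums of each signed $Y_i$ are dominated by those of the unsigned $B_i$ make rigorous a step the paper glosses over, but the argument is the same.
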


We postpone the proof of \cref{claim:specnormregular} to the end of this section, and finish the proof of \cref{thm:regularref}. We have that
\begin{flalign*}
&\E_b[\val(f_{L,R})] \leq \frac{N}{D'} \E_b[\norm{B}_2]  \leq  \frac{2N}{D}O(d\sqrt{k \ell \log n}) \\
&= O(1) \cdot \frac{N d}{D} \sqrt{ k \ell \log n} = \delta n k \cdot O(\sqrt{k \ell \log n}) \mcom
\end{flalign*}
where we recall that $d = \delta n k D/N$. We note that the above holds for \emph{any} choice of the partition $L \cup R = [k]$. Finally, we recall that by \cref{lem:cauchyschwarz}, we have that
\begin{flalign*}
&(\E_b[q\val(\Psi)])^2 \leq \E_{b \in \Fits^k} [q^2 \cdot \val(\Psi)^2 ] \leq q \delta n^2 + 4 n \E_{(L,R)} \E_{b \in \Fits^k} [\val(f_{L,R})] \leq q \delta n^2 + \delta n^2 k \cdot O(\sqrt{k \ell \log n}) \\
&\implies \E_b[\val(\Psi)] \leq  O(n\sqrt{\delta k}) \cdot (k \ell \log n)^{1/4} \mper
\end{flalign*}

We now finish the proof of \cref{claim:specnormregular}.
\begin{proof}[Proof of \cref{claim:specnormregular}]
 By Matrix Khintchine (\cref{fact:matrixkhintchine}), we have $\E_b[\norm{B}_2] \leq O(\sqrt{\sigma^2 \log N})$, where $\sigma^2 = \norm{\sum_{i = 1}^k B_i^2}$, as $B_i$ is symmetric. Since $B_i$ is symmetric, $\norm{B_i}_2$ is bounded by the maximum $\ell_1$-norm of a row in this matrix. By construction of $B_i$, this is $O(d)$. Hence, $\sigma^2 \leq k \cdot O(d)^2 = O(kd^2)$.
 
 We can thus set $\sigma^2 = O(k d^2)$ and apply \cref{fact:matrixkhintchine} to conclude that $\E_b[\norm{B}_2] \leq O(d\sqrt{k \log N})$. Recall that we have $N = {n \choose \ell} \leq n^{\ell} (nk)^{\ell} \leq n^{O(\ell)}$. Hence, $\log N = O(\ell \log n)$, which finishes the proof.
 \end{proof}

\subsection{Finding an approximately regular subgraph: proof of \cref{lem:regularrowpruning}}
\label{sec:regularrowpruning}
In this section, we prove \cref{lem:regularrowpruning}. We will prove \cref{lem:regularrowpruning} by using the strategy, due to ~\cite{Yankovitz24}, of bounding ``conditional first moments''.  These moment bounds form the main technical component of the argument.
\begin{lemma}[Conditional first moment bounds]
\label{lem:regularmoments}
Fix $i \in L$. For a vertex $(S_1, S_2)$, let $\deg_{i}(S_1, S_2)$ denote the degree of $(S_1, S_2)$ in $A_i$.

Let $(u,C_1) \in H_i$ and $(u,C_2) \in \cup_{j \in R} H_j$. Let $\mu_{u, C_1, C_2}$ denote the distribution over vertices that first chooses a uniformly random edge in $A_{u,C_1, C_2}$ and then outputs a random endpoint. Then, it holds that
\begin{flalign*}
&\E_{(S_1, S_2) \sim \mu_{u, C_1, C_2}}[\deg_{i}(S_1, S_2)] \leq 1 + O(1) \left(\frac{\ell}{n} \right)^{q-1}\delta nk \mper
\end{flalign*}
\end{lemma}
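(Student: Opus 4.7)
The plan is to reduce $\E_\mu[\deg_i(S_1, S_2)]$ to a sum over triples $(u', C_1', C_2')$ and bound it using the matching property of $H_i$ together with the heavy--set bound on $H \defeq \cup_{j=1}^k H_j$. The starting observation is that each $A_{u', C_1', C_2'}$ is a perfect matching on the set $V_{u', C_1', C_2'} \defeq \{(S_1, S_2) \in \binom{[n]}{\ell}^2 : |S_1 \cap C_1'| = |S_2 \cap C_2'| = (q-1)/2\}$, since the unique neighbor of any such $(S_1, S_2)$ in $A_{u', C_1', C_2'}$ is $(S_1 \oplus C_1', S_2 \oplus C_2')$. Hence $\mu_{u, C_1, C_2}$ is uniform over $V_{u, C_1, C_2}$ with $S_1$ and $S_2$ independent, and
\[
\E_\mu[\deg_i(S_1, S_2)] \;=\; \sum_{(u', C_1', C_2')} p_1(C_1')\, p_2(C_2') \mcom
\]
where $p_a(R) \defeq \Pr_{S_a \sim \mu_a}[|S_a \cap R| = (q-1)/2]$ and the sum ranges over $(u', C_1') \in H_i$, $(u', C_2') \in \cup_{j \in R} H_j$. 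A short hypergeometric computation gives $p_a(R) = O\bigl((\ell/n)^{m_a(R)}\bigr)$ with $m_a(R) \defeq (q-1)/2 - \min(|R \cap C_a|, (q-1)/2)$.

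Writing $\tilde{C}_1^* \defeq \{u\} \cup C_1 \in H_i$, the matching property of $H_i$ forces every other hyperedge of $H_i$ to be disjoint from $\tilde{C}_1^*$, so $m_1(C_1') = (q-1)/2$ when $u' \notin \tilde{C}_1^*$ and $m_1(C_1') = 0$ when $u' \in \tilde{C}_1^*$. I therefore split the sum into a ``diagonal'' piece (the $q$ values $u' \in \tilde{C}_1^*$) and an ``off-diagonal'' piece (the remaining $\leq q(|H_i|-1) \leq q\delta n$ values). For the diagonal piece, each $u'$ contributes $O(1) \cdot g(u')$, where $g(u') \defeq \sum_{(u', C_2') \in \cup_{j \in R} H_j} p_2(C_2')$. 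I bound $g(u')$ by partitioning the relevant hyperedges $\tilde{C}_2 \defeq \{u'\} \cup C_2'$ according to $t \defeq |\tilde{C}_2 \cap C_2|$ and controlling each slice via the heavy--set bound $\deg_H(Q) \leq d_{|Q|}$ (together with the trivial singleton bound $\deg_H(\{v\}) \leq k$). The choice $d_s = (\ell/n)^{s - 3/2} k$ is tuned precisely so that every slice $t \geq 1$ contributes $O\bigl(k (\ell/n)^{(q-2)/2}\bigr)$, which, after substituting $\ell = n^{1 - 2/q} \delta^{-2/q}$, equals exactly $d \defeq (\ell/n)^{q-1} \delta nk$. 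Together with the exact contribution $p_1(C_1) p_2(C_2) = 1$ from the specific triple $(u, C_1, C_2)$, this gives a diagonal total of $1 + O(d)$.

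For the off-diagonal piece I use the factor $(\ell/n)^{(q-1)/2}$ from $p_1$ and pass to the global double-count
\[
\sum_{u' \in [n]} g(u') \;=\; \sum_{\tilde{C}_2 \in \cup_{j \in R} H_j}\; \sum_{u' \in \tilde{C}_2} p_2(\tilde{C}_2 \setminus \{u'\}) \mper
\]
Splitting $\tilde{C}_2$ again by $|\tilde{C}_2 \cap C_2|$, the $t = 0$ layer (hyperedges disjoint from $C_2$, numbering at most $\delta nk$) dominates and contributes $O\bigl(\delta nk (\ell/n)^{(q-1)/2}\bigr)$; the heavier slices are bounded by $d_s$ exactly as in the diagonal case. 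Multiplying by the $(\ell/n)^{(q-1)/2}$ factor then yields $O\bigl(\delta nk (\ell/n)^{q-1}\bigr) = O(d)$, and combining with the diagonal total gives $\E_\mu[\deg_i] \leq 1 + O(d)$, as required.

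The main technical obstacle is precisely this off-diagonal step. The naive pointwise bound $\sum_{u' \in U} g(u') \leq |U| \max_{u'} g(u')$, with $|U| = \Theta(\delta n)$ and $\max g = \Theta(d)$, loses a factor of $(n/\ell)^{1/2}$ and would give only $O\bigl(d \cdot (n/\ell)^{1/2}\bigr)$; avoiding this loss requires tracking the global sum $\sum_{u'} g(u')$ above, which is tighter because its mass comes from the $t = 0$ layer (of size only $\delta nk$, rather than $nk$) rather than from the worst-case $u'$. This is precisely where the matching property of each $H_j$ (as opposed to merely a heavy--set bound for singletons, which does not hold) enters the proof.
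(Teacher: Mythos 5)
Your proposal is correct and follows essentially the same route as the paper's proof: both express the conditional expected degree as a sum over contributing triples $(u',C_1',C_2')$ of the probability that a uniformly random admissible $(S_1,S_2)$ meets that triple's edge, bound this probability by $(\ell/n)^{\#\text{forced elements}}$, and control the number of triples at each overlap level using the matching property of $H_i$, the singleton degree bound $\deg_H(\{v\})\leq k$ coming from the $H_j$'s being matchings, and the thresholds $d_s$. Your diagonal/off-diagonal split and slicing by $t=\lvert \tilde{C}_2\cap C_2\rvert$ is just a reorganization of the paper's sum over $Z_1,Z_2,Q_1,Q_2$ and its case analysis of $\mu(Q_1,Q_2)$ (in particular, your double-count $\sum_{u'}g(u')$ is the paper's observation that fixing $(u',C_2')$ determines $(u',C_1')\in H_i$ up to one choice), and the exponents match throughout.
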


\begin{claim}[Degree bound]
\label{claim:regulardegbound}
Let $d = \frac{\delta nk  D}{N}$. Then, we have that $d \geq \Omega(1) \cdot \left(\frac{\ell}{n} \right)^{q-1}\delta nk$ and that $\left(\frac{\ell}{n} \right)^{q-1}\delta nk \geq \Omega(1)$.
\end{claim}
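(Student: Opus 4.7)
The plan is to prove both inequalities by direct computation from the definitions, applying the binomial estimates in \cref{fact:binomest}. First I would handle the first inequality, which reduces to computing the ratio $D/N$, and then I would handle the second inequality by plugging in the formula for $\ell$ together with the hypothesis $k \geq 4\ell$ from \cref{thm:regularref}.

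For the first inequality, I would write
\begin{equation*}
\frac{D}{N} = \binom{q-1}{(q-1)/2}^{2} \cdot \left(\frac{\binom{n - (q-1)}{\ell - (q-1)/2}}{\binom{n}{\ell}}\right)^{2}.
\end{equation*}
The binomial coefficient $\binom{q-1}{(q-1)/2}$ is a constant depending only on $q$ (which is fine since our $\Omega(\cdot)$ hides $q$-dependence). For the remaining ratio, I would split as
\begin{equation*}
\frac{\binom{n - (q-1)}{\ell - (q-1)/2}}{\binom{n}{\ell}} = \frac{\binom{n - (q-1)}{\ell - (q-1)/2}}{\binom{n-(q-1)}{\ell}} \cdot \frac{\binom{n-(q-1)}{\ell}}{\binom{n}{\ell}},
\end{equation*}
and apply the two parts of \cref{fact:binomest} (with $n$ replaced by $n - (q-1)$ in the first factor, and $q$ in the Fact replaced by $(q-1)/2$). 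This yields the first factor as $\Theta((\ell/n)^{(q-1)/2})$ and the second factor as $\Theta(1)$, so $D/N = \Theta((\ell/n)^{q-1})$ and thus $d = \delta nk \cdot D/N = \Omega(1) \cdot (\ell/n)^{q-1}\delta nk$, as required.

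For the second inequality, I would rearrange to
\begin{equation*}
\left(\frac{\ell}{n}\right)^{q-1}\delta n k = \frac{\delta k \ell^{q-1}}{n^{q-2}}.
\end{equation*}
Using $k \geq 4\ell$, this is at least $\frac{4\delta \ell^{q}}{n^{q-2}}$. Now I would plug in $\ell \geq \tfrac{1}{2} n^{1 - 2/q} \delta^{-2/q}$ (valid whenever $n^{1-2/q}\delta^{-2/q} \geq 2$, which we may assume, since otherwise $\ell$ is bounded by a constant and $k \leq O(\ell)$ already implies the main theorem trivially). Raising to the $q$-th power, $\ell^{q} \geq \Omega(1) \cdot n^{q-2}\delta^{-2}$, which gives $\tfrac{4\delta \ell^{q}}{n^{q-2}} \geq \Omega(1) \cdot \delta^{-1} \geq \Omega(1)$ since $\delta \leq 1$.

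I do not expect any real obstacle: the proof is purely a routine calculation combining the asymptotic binomial estimates with the parameter choices. The only minor care needed is to handle the floor in the definition of $\ell$ (trivial, as argued above) and to correctly apply \cref{fact:binomest} with the appropriate substitutions, particularly noting that the hypothesis $\ell = o(n)$ of \cref{fact:binomest} is satisfied since $\ell \leq n^{1-2/q}\delta^{-2/q}$ and we are in the nontrivial regime $\delta \geq n^{-2/(q+2)}$ covered by \cref{mthm:oddqldc}.
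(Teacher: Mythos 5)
Your proposal is correct and follows essentially the same route as the paper: both inequalities are obtained by applying \cref{fact:binomest} to the ratio $D/N$ and then substituting $\ell = \floor{n^{1-2/q}\delta^{-2/q}}$ together with $k \geq 4\ell$. The paper's own proof is just a terser version of the same calculation (it does not spell out the splitting of the binomial ratio or the floor, which you handle correctly).
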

\begin{proof}[Proof of \cref{claim:regulardegbound}]
Applying \cref{fact:binomest}, we have that
\begin{flalign*}
\frac{\delta n D}{N} = \delta n \cdot \left( \frac{{q - 1 \choose \frac{q-1}{2}}{n - (q-1) \choose \ell - \frac{q -1}{2}}}{{n \choose \ell}} \right)^2 \geq \Omega(1) \cdot \delta n \cdot \left(\frac{\ell}{n}\right)^{q - 1} \mper
\end{flalign*}
Because $\ell = \floor{n^{1 - 2/q} \delta^{-2/q}}$ and $k \geq \ell$, we have $\left(\frac{\ell}{n}\right)^{q-1} \delta n k \geq 1$, which finishes the proof.
\end{proof}

We postpone the proof of \cref{lem:regularmoments} to the end of this subsection, and now use it to finish the proof of \cref{lem:regularrowpruning}.

\begin{proof}[Proof of \cref{lem:regularrowpruning} from \cref{lem:regularmoments}]
Fix $i \in [k]$. Let $\Gamma$ be a constant (to be chosen later), and let $V'_i = \{(S_1, S_2) : \deg_{i}(S_1, S_2) \leq \Gamma d\}$.

Let $(u,C_1) \in H_i$ and $(u,C_2) \in \cup_{j \in R} H_j$. We let $A'_{i, u,C_1, C_2}$ be the matrix where $A'_{i, u, C_1, C_2}((S_1, S_2), (T_1, T_2)) = A_{u, C_1, C_2}((S_1, S_2), (T_1, T_2))$ if $(S_1, S_2) \in V'$ and $(T_1, T_2) \in V'$, and otherwise $A'_{i, u, C_1, C_2}((S_1, S_2), (T_1, T_2)) =  0$. Namely, we have ``zeroed out'' all rows and columns of $A_{u, C_1, C_2}$ that are not in $V'$. Notice that $A'_{i, u, C_1, C_2}$ depends on $i \in L$ because $V'$ does.

The conditional moment bound from \cref{lem:regularmoments}, combined with the lower bound on $d$ from \cref{claim:regulardegbound} implies that $\E_{(S_1, S_2) \sim \mu_{u, C_1, C_2}}[\deg_{i}(S_1, S_2)] \leq O(d)$. Hence, applying Markov's inequality, the number of vertices $(S_1, S_2)$ that are adjacent to an edge labeled by $(u, C_1, C_2)$ and have $\deg_{i}(S_1, S_2) > \Gamma d$ is at most $O(D/\Gamma)$. Hence, there must be at least $D (1 - O(1/\Gamma))$ edges, i.e., nonzero entries, in $A'_{i, u,C_1,C_2}$.

Now, we let $B_{i, u, C_1, C_2}$ be any subgraph of $A'_{i, u, C_1, C_2}$ where $B_{i, u, C_1, C_2}$ has \emph{exactly} $D' = \floor{D(1 - O(1/\Gamma))}$ edges. This can be achieved by simply removing edges if there are too many. By choosing $\Gamma$ to be a sufficiently large constant, we ensure that $D' \geq D/2$. Note that because $B_{i, u, C_1, C_2}$ is the adjacency matrix of a graph, it is a symmetric matrix.

To prove the third property, we observe that for any vertex $(S_1, S_2)$, the matrix 
\begin{equation*}
B_i = \sum_{u \in [n]} \sum_{(u, C_1) \in H_i, (u, C_2) \in \cup_{j \in R} H_j} B_{i,u, C_1, C_2}
\end{equation*}
 has at most $\Gamma d = O(d)$ nonzero entries in the $(S_1, S_2)$-th row or column. Indeed, this follows because it is a subgraph of the original graph $A_i$, and if $(S_1, S_2)$ had degree $> \Gamma d_L$ in $A_i$ then it has degree $0$ in $B_i$. This finishes the proof.\end{proof}

It remains to prove \cref{lem:regularmoments}, which we do now.
\begin{proof}[Proof of \cref{lem:regularmoments}]
Let $(u,C_1) \in H_i$ and $(u, C_2) \in H_j$ for some $j \in R$. We observe that for any $(S_1, S_2) \in {[n] \choose \ell} \times {[n] \choose \ell}$, the vertex $(S_1, S_2)$ is adjacent to at most one edge labeled by $(u, C_1, C_2)$. Hence, it follows that $\mu_{u, C_1, C_2}$ is uniform over pairs of sets $(S_1, S_2)$ such that $\abs{S_1 \cap C_1} = \frac{q - 1}{2}$ and $\abs{S_2 \cap C_2} = \frac{q - 1}{2}$. Thus,
\begin{flalign*}
&\E_{(S_1, S_2) \sim \mu_{u, C_1, C_2}}[\deg_{i}(S_1, S_2)] \leq 1 + \frac{1}{D} \sum_{(u', C'_1, C'_2)} \abs{\{(S_1, S_2) :  \abs{S_1 \cap C_1} = \abs{S_1 \cap C'_1} = \abs{S_2 \cap C_2} = \abs{S_2 \cap C'_2} = \frac{q - 1}{2}\}} \\
&\leq 1 + \frac{1}{D}\sum_{\substack{Z_1 \subseteq C_1 \\ Z_2 \subseteq C_2 \\ \abs{Z_1} = \abs{Z_2} = \frac{q-1}{2}}}\sum_{(u', C'_1, C'_2)} \abs{\{(S_1, S_2) : \substack{Z_1 \subseteq S_1 \\ Z_2 \subseteq S_2}, \substack{\abs{S_1 \cap C'_1} =  \frac{q-1}{2} \\ \abs{S_2 \cap C'_2} = \frac{q - 1}{2}}\}} \quad \text{(because $Z_1 \subseteq S_1$ implies $\abs{S_1 \cap C_1} \geq \frac{q-1}{2}$)}\\
&\leq 1 + \frac{1}{D}\sum_{\substack{Z_1 \subseteq C_1 \\ Z_2 \subseteq C_2 \\ \abs{Z_1} = \abs{Z_2} = \frac{q-1}{2}}} \sum_{\substack{Q_1 \subseteq Z_1 \\ Q_2 \subseteq Z_2}} \sum_{\substack{(u', C'_1, C'_2) \\ Q_1 = C'_1 \cap Z_1 \\ Q_2 = C'_2 \cap Z_2}} \abs{\{(R_1, R_2) : \substack{\abs{R_1} = \ell - \abs{Z_1} \\ \abs{R_2} = \ell - \abs{Z_2}}, \substack{\abs{R_1 \cap C'_1} =  \frac{q - 1}{2} - \abs{Q_1} \\ \abs{R_2 \cap C'_2} = \frac{q - 1}{2} - \abs{Q_2}}\}} \quad \text{(by taking $R_1 = S_1 \setminus Z_1$)}  \\
&\leq1 + \frac{1}{D}\sum_{\substack{Z_1 \subseteq C_1 \\ Z_2 \subseteq C_2 \\ \abs{Z_1} = \abs{Z_2} = \frac{q-1}{2}}} \sum_{\substack{Q_1 \subseteq Z_1 \\ Q_2 \subseteq Z_2}} \sum_{\substack{(u', C'_1, C'_2) \\ Q_1 = C'_1 \cap Z_1 \\ Q_2 = C'_2 \cap Z_2}}  {q - 1\choose \frac{q - 1}{2} - \abs{Q_1}} {n - (q-1) \choose \ell - \abs{Z_1} - (\frac{q - 1}{2} - \abs{Q_1})} {q - 1 \choose \frac{q - 1}{2} - \abs{Q_2}} {n - (q-1) \choose \ell - \abs{Z_2} - (\frac{q - 1}{2} - \abs{Q_2})} \\
&\leq 1 + \frac{O(1)}{D}\sum_{\substack{Z_1 \subseteq C_1 \\ Z_2 \subseteq C_2 \\ \abs{Z_1} = \abs{Z_2} = \frac{q-1}{2}}} \sum_{\substack{Q_1 \subseteq Z_1 \\ Q_2 \subseteq Z_2}} \sum_{\substack{(u', C'_1, C'_2) \\ Q_1 = C'_1 \cap Z_1 \\ Q_2 = C'_2 \cap Z_2}}   {n \choose \ell - \abs{Z_1} - (\frac{q - 1}{2} - \abs{Q_1})}{n \choose \ell - \abs{Z_2} - (\frac{q - 1}{2} - \abs{Q_2})} 
\end{flalign*}

Recall that $D \defeq {q - 1 \choose \frac{q-1}{2}}^2 {n - (q-1) \choose \ell - \frac{q -1}{2}}^2$. By \cref{fact:binomest}, we have
\begin{flalign*}
&\frac{1}{D} {n \choose \ell - \abs{Z_1} - (\frac{q - 1}{2} - \abs{Q_1})}{n \choose \ell - \abs{Z_2} - (\frac{q - 1}{2} - \abs{Q_2})} \leq O(1) \cdot \left(\frac{\ell}{n} \right)^{\abs{Z_1} - \abs{Q_1} + \abs{Z_2} - \abs{Q_2}} = O(1) \cdot \left(\frac{\ell}{n} \right)^{(q-1) - \abs{Q_1} - \abs{Q_2}} \mcom
\end{flalign*}
as $\abs{Z_1} = \abs{Z_2} = \frac{q-1}{2}$.

For sets $Q_1, Q_2$, we let $\mu(Q_1, Q_2)$ be the number of $(u', C'_1, C'_2)$ such that $Q_1 \subseteq C'_1$ and $Q_2 \subseteq C'_2$. We then have that
\begin{flalign*}
&\E_{(S_1, S_2) \sim \mu_{u, C_1, C_2}}[\deg_{i}(S_1, S_2)] \leq 1 + O(1) \sum_{\substack{Z_1 \subseteq C_1 \\ Z_2 \subseteq C_2 \\ \abs{Z_1} = \abs{Z_2} = \frac{q-1}{2}}} \sum_{\substack{Q_1 \subseteq Z_1 \\ Q_2 \subseteq Z_2}} \mu(Q_1, Q_2)\left(\frac{\ell}{n} \right)^{(q-1) - \abs{Q_1} - \abs{Q_2}} \mper
\end{flalign*}

We now show that $\mu(Q_1, Q_2) \leq O(1) \left(\frac{\ell}{n} \right)^{\abs{Q_1} + \abs{Q_2}} \delta nk$ where $0 \leq \abs{Q_1}, \abs{Q_2} \leq \frac{q - 1}{2}$. We have several cases.
\begin{enumerate}[(1)]
\item $\abs{Q_1} = 0$. In this case, we know that $(u',C'_2)$ is in $H_j$ for some $j \in R$ with $Q_2 \subseteq C'_2$. We have three subcases.
\begin{enumerate}[(a)]
\item $\abs{Q_2} = 0$. Then, there are at most $q \delta n$ choices for $(u',C'_1) \in H_i$, as $\abs{H_i} = \delta n$ and we have $q$ choices for the special element $u$. Furthermore, given $u$, there are at most $k$ choices for $(u',C'_2)$ with $(u',C'_2) \in \cup_{j \in R} H_j$, as each $H_j$ is a matching and $\abs{R} \leq k$. Thus, $\mu(Q_1, Q_2) \leq O(\delta n k)$ in this case, which satisfies the desired bound as $\abs{Q_1} + \abs{Q_2} = 0$.
\item $\abs{Q_2} = 1$. Then, there are at most $q k$ choices for $(u', C'_2) \in \cup_{j \in R} H_j$ with $Q_2 \subseteq C'_2$. Indeed, this is because each $H_j$ is matching, and $\abs{R} \leq k$. As $H_i$ is a matching, there is at most one choice for $(u',C'_1) \in H_i$. Hence, $\mu(Q_1, Q_2) \leq O(k)$ in this case, which is $\leq O(1) \left(\frac{\ell}{n} \right) \delta nk$.
\item $\abs{Q_2} \geq 2$. Then, there are at most $q d_{\abs{Q_2}}$ choices for $(u',C'_2) \in \cup_{j \in R} H_j$, by regularity. As before, given $(u,C_2)$, there is at most one choice for $C'_1 \in H_i$. Hence, $\mu(Q_1, Q_2) \leq O(d_{\abs{Q_2}})$ in this case. Since $2 \leq \abs{Q_2} \leq \frac{q - 1}{2}$, we have that $d_{\abs{Q_2}} =  \left(\frac{\ell}{n} \right)^{\abs{Q_2} - \frac{3}{2}} k$, which is at most $\left(\frac{\ell}{n} \right)^{\abs{Q_2}} \delta nk$ since $\left(\frac{\ell}{n}\right)^{\frac{3}{2}} \delta n \geq 1$.
\end{enumerate}
\item $\abs{Q_1} \geq 1$. Then, there are at most $q$ choices for $(u',C'_1) \in H_i$. It then follows that we have determined $\abs{Q_2} + 1$ elements of $(u', C'_2) \in \cup_{j \in R} H_j$, namely $u'$ along with $Q_2$. We have two subcases.
\begin{enumerate}[(a)]
\item $\abs{Q_2} = 0$. Then, we have determined one element of $(u', C'_2)$, and so we have at most $k$ choices. Thus, $\mu(Q_1, Q_2) \leq O(k)$ in this case, which is at most $O(1) \left(\frac{\ell}{n} \right)^{\abs{Q_1}} \delta nk$, since $\abs{Q_1} \leq \frac{q-1}{2}$. 
\item $\abs{Q_2} \geq 1$. Then, we have determined at least two elements of $(u', C'_2)$. As $\abs{Q_2} \leq \frac{q-1}{2}$, we have that $\abs{Q_2} + 1 \leq \frac{q+1}{2}$, and so we have at most $d_{\abs{Q_2} + 1}$ choices in this case. Thus, $\mu(Q_1, Q_2) \leq O(d_{\abs{Q_2} + 1})$. As $d_{\abs{Q_2} + 1} = \left(\frac{\ell}{n} \right)^{\abs{Q_2} - \frac{1}{2}} k \leq \left(\frac{\ell}{n} \right)^{\frac{q - 1}{2} + \abs{Q_2}} \delta nk \leq \left(\frac{\ell}{n} \right)^{\abs{Q_1} + \abs{Q_2}}  \delta nk$, where we use that $\abs{Q_1} \leq \frac{q - 1}{2}$, we again have the desired bound on $\mu(Q_1, Q_2)$.
\end{enumerate}
\end{enumerate}
We have thus shown that $\mu(Q_1, Q_2) \leq O(1) \left(\frac{\ell}{n} \right)^{\abs{Q_1} + \abs{Q_2}} \delta nk$. Hence, 
\begin{flalign*}
&\E_{(S_1, S_2) \sim \mu_{u, C_1, C_2}}[\deg_{i}(S_1, S_2)] \leq 1 + O(1) \sum_{\substack{Z_1 \subseteq C_1 \\ Z_2 \subseteq C_2 \\ \abs{Z_1} = \abs{Z_2} = \frac{q-1}{2}}} \sum_{\substack{Q_1 \subseteq Z_1 \\ Q_2 \subseteq Z_2}} \left(\frac{\ell}{n} \right)^{\abs{Q_1} + \abs{Q_2}} \delta nk \cdot \left(\frac{\ell}{n} \right)^{(q-1) - \abs{Q_1} - \abs{Q_2}} \\
&\leq 1 + O(1) \left(\frac{\ell}{n} \right)^{q-1}\delta nk \mcom
\end{flalign*}
which finishes the proof.
\end{proof}

\section{Refuting the Bipartite Instances}
\label{sec:bipartiteref}
In this section, we prove \cref{thm:bipartiteref}, which we recall below.
\bipartiteref*

For notational simplicity, we will assume that $n^{1 - 2/q} \cdot \delta^{-2/q}$ is an integer, so that $\ell = n^{1 - 2/q} \cdot \delta^{-2/q}$. We note that if $n^{1 - 2/q} \cdot \delta^{-2/q}$ is not an integer, then we can set $\ell = \floor{n^{1 - 2/q} \cdot \delta^{-2/q}}$, and this only changes the bounds of the following proof by an $O(1)$-factor.
We will also write $H_i$ instead of $H_i^{(s)}$ to simplify notation.

We begin by defining the following Kikuchi matrix. 
\begin{definition}[Kikuchi matrix for bipartite hypergraphs]
\label{def:bipartitekikuchi}
For each $C \in {[n] \choose q-s}$ and $p \in P_s$, we define the bipartite graph $G_{C,p}$, parametrized by $\ell$, as follows. The left vertex set $V_L$ is the set of pairs of sets $(S_1, S_2)$ where $S_1 \in {[n] \choose \ell}$ and $S_2 \in {P_s \choose \ell}$. The right vertex set $V_R$ is the set of pairs of sets $(T_1, T_2)$ where $T_1 \in {[n] \choose \ell + 1 - s}$ and $T_2 \in {P_s \choose \ell + 1}$. We add an edge $((S_1, S_2), (T_1, T_2))$, which we view as ``labeled'' by $C$, if the following conditions hold:
\begin{enumerate}[(1)]
\item $S_1 \oplus T_1 = C$ and $S_2 \oplus T_2 = \{p\}$;
\item $\abs{S_1 \cap C} = \frac{q-1}{2}$ (and so $\abs{T_1 \cap C} = \frac{q + 1}{2} - s$).
\end{enumerate}
We can naturally view the graph $G_{C,p}$ as corresponding to its bipartite adjacency matrix $A_{C,p} \in \Bits^{V_L \times V_R}$. For each $i \in [k]$, we define the matrix $A_i = \sum_{(C,p) \in H_i} A_{C,p}$. We let $A = \sum_{i = 1}^k b_i A_i$.
\end{definition}

\begin{claim}
\label{claim:bipartitesoundness}
For each $C \in {[n] \choose q - s}$ and $p \in P_s$, the matrix $A_{C,p}$ defined in \cref{def:bipartitekikuchi} satisfies the following properties.
\begin{enumerate}[(1)]
\item The matrix $A_{C,p}$ has exactly $D = {q - s \choose \frac{q-1}{2}} {n - (q - s) \choose \ell - \frac{q - 1}{2}}{\abs{P_s} - 1 \choose \ell}$ nonzero entries.
\item For each $x \in \Fits^n$ and $y \in \Fits^{P_s}$, let $z \in \Fits^{V_L}$ and $w \in \Fits^{V_R}$ be defined as: $z_{S_1, S_2} := \prod_{v \in S_1} x_v \prod_{p \in S_2} y_p$ and $w_{T_1, T_2} := \prod_{v \in T_1} x_v \prod_{p \in T_2} y_p$. Then, $z^{\top} A_{C,p} w = D y_p \prod_{v \in C} x_v$.
\end{enumerate}
In particular, $z^{\top} A w = D \Psi(x,y)$. 

As additional notation, we let $N_L \defeq \abs{V_L}$, $N_R \defeq \abs{V_R}$. Finally, we let $d_L$ and $d_R$ denote (upper bounds on) the average left and right degrees of each $A_i$, i.e., $d_L = \frac{\delta n D}{N_L}$ and $d_R = \frac{\delta n D}{N_R}$.
\end{claim}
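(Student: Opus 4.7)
The plan is to verify the two items by direct counting and algebraic manipulation, mirroring the analogous computation in \cref{claim:regularsoundness} but adapted to the imbalanced bipartite setting of \cref{def:bipartitekikuchi}.

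For Item~(1), I would count the edges of $G_{C,p}$ by handling the ``$[n]$-side'' and the ``$P_s$-side'' separately. On the $P_s$-side, the conditions $|S_2|=\ell$, $|T_2|=\ell+1$, and $S_2\oplus T_2=\{p\}$ force $p\notin S_2$ and $T_2=S_2\cup\{p\}$. Hence the pair $(S_2,T_2)$ is determined by choosing $S_2\in\binom{P_s\setminus\{p\}}{\ell}$, yielding $\binom{|P_s|-1}{\ell}$ options. On the $[n]$-side, the condition $S_1\oplus T_1=C$ together with $|S_1\cap C|=\tfrac{q-1}{2}$ forces $|T_1\cap C|=|C|-\tfrac{q-1}{2}=\tfrac{q+1}{2}-s$, and a standard cardinality check $|S_1|+|T_1|-2|S_1\cap T_1|=|C|$ confirms consistency with the prescribed sizes of $S_1,T_1$ (using that $q$ is odd so $\tfrac{q-1}{2}\in\N$). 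Then $(S_1,T_1)$ is specified by picking $S_1\cap C\subseteq C$ of size $\tfrac{q-1}{2}$ in $\binom{q-s}{(q-1)/2}$ ways and the common part $S_1\setminus C=T_1\setminus C\subseteq[n]\setminus C$ of size $\ell-\tfrac{q-1}{2}$ in $\binom{n-(q-s)}{\ell-(q-1)/2}$ ways. Multiplying these three factors gives exactly $D$.

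For Item~(2), I would expand $z^{\top} A_{C,p} w$ as a sum over edges and use the $\pm 1$ identity $x_v^2=1$, $y_{p'}^2=1$. Each edge $((S_1,S_2),(T_1,T_2))$ contributes
\[
z_{S_1,S_2}\,w_{T_1,T_2}=\prod_{v\in S_1\oplus T_1}x_v\prod_{p'\in S_2\oplus T_2}y_{p'}=\Bigl(\prod_{v\in C}x_v\Bigr)\,y_p,
\]
which is the same value for every edge. Summing over the $D$ edges (by Item~(1)) yields $z^{\top}A_{C,p}w=D\,y_p\prod_{v\in C}x_v$, as claimed.

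The ``in particular'' clause then follows by linearity: since $A=\sum_{i=1}^{k}b_i\sum_{(C,p)\in H_i}A_{C,p}$, summing the identity from Item~(2) over $i$ and over $(C,p)\in H_i$ produces $z^{\top}Aw=D\cdot\sum_{i=1}^{k}b_i\sum_{(C,p)\in H_i}y_p\prod_{v\in C}x_v=D\,\Psi^{(s)}_b(x,y)$. No step here is a real obstacle; the only subtlety worth flagging is ensuring the cardinality bookkeeping of the imbalanced split (in particular that $\tfrac{q-1}{2}$ elements of $C$ go into $S_1$ and the remaining $\tfrac{q+1}{2}-s$ into $T_1$) is done consistently, since this is the source of the asymmetric $N_L$ vs.\ $N_R$ behavior that drives the later row-pruning analysis.
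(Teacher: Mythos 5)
Your proposal is correct and follows essentially the same route as the paper's proof: count edges by noting that each admissible left endpoint $(S_1,S_2)$ determines the edge uniquely (with the $[n]$-side and $P_s$-side factored separately), and evaluate the quadratic form edge-by-edge via $x_v^2=y_{p'}^2=1$ so that every edge contributes $y_p\prod_{v\in C}x_v$. The extra cardinality bookkeeping you flag is sound and only makes explicit what the paper leaves implicit.
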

\begin{proof}
To prove Item (1), we will count the number of edges. Let $C \in {[n] \choose q-s}$, $p \in P_s$. By definition, we have an edge $((S_1, S_2), (T_1, T_2))$ in $G_{C,p}$  iff $\abs{S_1 \cap C} = \frac{q-1}{2}$ and $p \notin S_2$. The number of such $S_1$ is ${q - s \choose \frac{q-1}{2}} {n - (q - s) \choose \ell - \frac{q - 1}{2}}$, and the number of such $S_2$ is ${\abs{P_s} - 1 \choose \ell}$. Hence, Item (1) holds.

To prove Item (2), we observe that
\begin{flalign*}
&z^{\top} A_{C,p} w = \sum_{((S_1, S_2), (T_1, T_2)) \in E(G_{C,p})} z_{S_1, S_2} w_{T_1, T_2} = \sum_{((S_1, S_2), (T_1, T_2)) \in E(G_{C,p})} \prod_{v \in S_1} x_v \prod_{p' \in S_2} y_{p'} \prod_{v \in T_1} x_v \prod_{p' \in T_2} y_{p'} \\
&=\sum_{((S_1, S_2), (T_1, T_2)) \in E(G_{C,p})} \prod_{v \in S_1 \oplus T_1} x_v \prod_{p' \in S_2 \oplus T_2} y_{p'} = \sum_{((S_1, S_2), (T_1, T_2)) \in E(G_{C,p})} y_p \cdot \prod_{v \in C} x_v = D  y_p \cdot \prod_{v \in C} x_v \mper
\end{flalign*}

The ``in particular'' follows immediately from Item (2) and the definition of $A$ and the $A_i$'s.
\end{proof}

The key technical lemma, which we shall prove in \cref{sec:bipartiterowpruning}, shows that we can find an approximately biregular subgraph of $A_i$ for each $i \in [k]$.
\begin{lemma}[Approximately regular submatrix]
\label{lem:bipartiterowpruning}
Let $A_1, \dots, A_k$ be defined as in \cref{def:bipartitekikuchi}. There exists a positive integer $D'$ with $D \geq D' \geq \frac{D}{2}$ such that the following holds. For each $i \in [k]$ and $(C,p) \in H_i$, there exists a matrix $B_{i, C, p} \in \Bits^{V_L \times V_R}$ with the following properties:
\begin{enumerate}[(1)]
\item $B_{i, C, p}$ is a ``subgraph'' of $A_{C,p}$. Namely, if $B_{i, C, p}((S_1, S_2), (T_1, T_2)) = 1$, then $A_{C, p}((S_1, S_2), (T_1, T_2)) = 1$.
\item $B_{i,C,p}$ has exactly $D'$ nonzero entries.
\item The matrix $B_i \defeq \sum_{(C,p) \in H_i} B_{i, C, p}$ has at most $O(d_L)$ nonzero entries per row and $O(d_R)$ nonzero entries per column.
\end{enumerate}
\end{lemma}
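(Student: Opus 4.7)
}
The plan is to mimic the structure of the proof of \cref{lem:regularrowpruning}, but since $G_{C,p}$ is bipartite and imbalanced, I will need two conditional first-moment bounds --- one on the left and one on the right --- rather than one. For each $i \in [k]$ and $(C,p) \in H_i$, let $\mu^L_{C,p}$ (resp.\ $\mu^R_{C,p}$) be the distribution on $V_L$ (resp.\ $V_R$) obtained by sampling a uniform edge of $G_{C,p}$ and projecting to the left (resp.\ right) endpoint. Let $\deg^L_i, \deg^R_i$ denote the left- and right-degrees in $A_i$. The main technical step is to prove
\begin{equation*}
\E_{(S_1,S_2)\sim \mu^L_{C,p}}\bigl[\deg^L_i(S_1,S_2)\bigr] \le 1+O(d_L), \qquad \E_{(T_1,T_2)\sim \mu^R_{C,p}}\bigl[\deg^R_i(T_1,T_2)\bigr] \le 1+O(d_R).
\end{equation*}

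The key structural observation that makes the counting clean is that, given any $(C',p')$ and a left vertex $(S_1,S_2)$ with $|S_1 \cap C'| = \tfrac{q-1}{2}$ and $p' \notin S_2$, the unique right endpoint is $(T_1,T_2) = (S_1 \triangle C',\, S_2 \cup \{p'\})$; dually, given $(C',p')$ and a right vertex $(T_1,T_2)$ with $p' \in T_2$ and $|T_1 \cap C'| = \tfrac{q+1}{2}-s$, the unique left endpoint is $(S_1,S_2) = (T_1 \triangle C',\, T_2 \setminus \{p'\})$. Thus $\deg^L_i(S_1,S_2)$ equals the number of $(C',p') \in H_i$ with $|S_1 \cap C'| = \tfrac{q-1}{2}$ and $p' \notin S_2$, and similarly on the right. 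To bound the left moment, I use that under $\mu^L_{C,p}$ the set $S_1$ contains a fixed $\tfrac{q-1}{2}$-subset of $C$ while the remaining $\ell - \tfrac{q-1}{2}$ elements are uniform over $[n] \setminus C$, and likewise $S_2$ is uniform over $\ell$-subsets of $P_s \setminus \{p\}$. Because $H_i$ is a bipartite matching, every other $(C',p') \in H_i$ has $C' \cap C = \emptyset$ and $p' \ne p$; a binomial estimate via \cref{fact:binomest} gives $\Pr[|S_1 \cap C'| \ge \tfrac{q-1}{2}] = \Theta((\ell/n)^{(q-1)/2})$, and summing the $|H_i| \le \delta n$ contributions yields the $O(d_L)$ bound since $d_L = \delta n D / N_L \asymp \delta n (\ell/n)^{(q-1)/2}$.

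The right-moment calculation is the main obstacle and is where the bipartite graph earns its keep. Under $\mu^R_{C,p}$, one has $p \in T_2$, the $\ell$ remaining entries of $T_2$ are uniform in $P_s \setminus \{p\}$, $T_1$ contains a uniformly random $\tfrac{q+1}{2}-s$ subset of $C$, and its remaining $\ell - \tfrac{q-1}{2}$ entries are uniform over $[n] \setminus C$. For each $(C',p') \in H_i \setminus \{(C,p)\}$, the probability that $p' \in T_2$ is $\ell/(|P_s|-1)$, and the probability that $|T_1 \cap C'| = \tfrac{q+1}{2}-s$ is $\Theta((\ell/n)^{(q+1)/2-s})$ by \cref{fact:binomest}. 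Multiplying and summing gives at most $\delta n \cdot O(\ell/|P_s|) \cdot (\ell/n)^{(q+1)/2-s} = O(d_R)$. The assumption $|P_s| \le O(nk/d_s)$ with $d_s = (\ell/n)^{s-3/2} k$ implies $\ell/|P_s| \ge \Omega((\ell/n)^{s-1/2})$, which then yields $d_R \ge \Omega(\delta n (\ell/n)^{q/2}) = \Omega(1)$ by the choice $\ell = n^{1-2/q}\delta^{-2/q}$; similarly $d_L \ge \Omega((n\delta)^{1/q}) \ge \Omega(1)$. These baseline lower bounds let us absorb the additive $+1$ in the moment bounds into $O(d_L)$ and $O(d_R)$ respectively.

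Given the two moment bounds, the rest of the proof is a direct two-sided Markov argument. Fix a large constant $\Gamma$, let $V_L' = \{(S_1,S_2) : \deg^L_i(S_1,S_2) \le \Gamma d_L\}$ and $V_R' = \{(T_1,T_2) : \deg^R_i(T_1,T_2) \le \Gamma d_R\}$, and define $B_{i,C,p}'$ to be $A_{C,p}$ with all rows outside $V_L'$ and columns outside $V_R'$ zeroed out. By Markov applied to both $\mu^L_{C,p}$ and $\mu^R_{C,p}$, at most $O(D/\Gamma)$ edges of $A_{C,p}$ are removed, so $B_{i,C,p}'$ has at least $D(1 - O(1/\Gamma))$ nonzero entries. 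Taking $\Gamma$ large enough and arbitrarily deleting excess edges so that exactly $D' = \lfloor D(1 - O(1/\Gamma)) \rfloor \ge D/2$ entries remain gives the matrix $B_{i,C,p}$. Finally, for any row index $(S_1,S_2) \in V_L'$, all the entries of $B_i$ in that row come from edges of the original $A_i$, and there are at most $\Gamma d_L = O(d_L)$ such edges; the symmetric argument gives $O(d_R)$ per column. This establishes properties (1)--(3) of the lemma, completing the proof.
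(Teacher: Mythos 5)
Your proposal is correct and follows essentially the same route as the paper: two conditional first-moment bounds (left and right) for the bipartite Kikuchi graph, using that $H_i^{(s)}$ is a bipartite matching together with the binomial estimates of \cref{fact:binomest}, the lower bounds $d_L, d_R \geq \Omega(1)$ from the choice of $\ell$ and the bound on $\abs{P_s}$, and then a two-sided Markov truncation followed by trimming each $B_{i,C,p}$ to exactly $D'$ edges. The computations (including $\ell/\abs{P_s} \geq \Omega((\ell/n)^{s-1/2})$ and $d_R \geq \Omega(\delta n(\ell/n)^{q/2}) = \Omega(1)$) match the paper's \cref{lem:bipartitemoments} and \cref{claim:bipartitedegbound}.
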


With \cref{lem:bipartiterowpruning} in hand, we can now finish the proof. Let $B_i$ be the matrix defined in \cref{lem:bipartiterowpruning}. We observe that Items (1), (2), and (3) in \cref{lem:bipartiterowpruning}, along with \cref{claim:bipartitesoundness}, imply that for each $x \in \Fits^n$ and $y \in \Fits^{P_s}$, there exist $z \in \Fits^{V_L}$ and $w \in \Fits^{V_R}$ such that $z^{\top} B w = D' \Psi(x,y)$. Hence, for any $x \in \Fits^n$ and $y \in \Fits^{P_s}$, it holds that $D' \Psi(x,y) \leq \norm{B}_2 \cdot \sqrt{N_L N_R}$. We thus have that $\E_b[\val(\Psi)] \leq \frac{\sqrt{N_L N_R}}{D'} \E_b[\norm{B}_2]$.

It remains to bound $\E_b[\norm{B}_2]$, which we do using Matrix Khintchine (\cref{fact:matrixkhintchine}) in the following claim.
\begin{claim}
\label{claim:specnormbipartite}
Let $B$ be the matrix defined in \cref{lem:bipartiterowpruning}. Then, $\E_b[\norm{B}_2] \leq O(\sqrt{d_L d_R k \ell \log n})$.
\end{claim}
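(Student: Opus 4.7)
The plan is to apply the rectangular Matrix Khintchine inequality (\cref{fact:matrixkhintchine}) directly to $B = \sum_{i=1}^k b_i B_i$. The task is therefore to produce a suitable variance proxy $\sigma^2$ upper bounding both $\norm{\sum_{i=1}^k B_i B_i^\top}_2$ and $\norm{\sum_{i=1}^k B_i^\top B_i}_2$, and then to estimate $\log(N_L + N_R)$.

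The key input is \cref{lem:bipartiterowpruning}(3), which tells us that each $B_i$ is a $\{0,1\}$-matrix with at most $O(d_L)$ ones per row and at most $O(d_R)$ ones per column. For any nonnegative matrix $M$, the Schur test gives $\norm{M}_2 \leq \sqrt{\norm{M}_{\infty\to\infty}\,\norm{M}_{1\to1}}$, where $\norm{M}_{\infty\to\infty}$ is the largest row $\ell_1$-norm and $\norm{M}_{1\to1}$ is the largest column $\ell_1$-norm. Applied to $B_i$, this yields $\norm{B_i}_2 \leq O(\sqrt{d_L d_R})$, and consequently $\norm{B_i B_i^\top}_2 = \norm{B_i}_2^2 \leq O(d_L d_R)$, and likewise for $B_i^\top B_i$.

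Since $B_i B_i^\top$ and $B_i^\top B_i$ are positive semidefinite, the triangle inequality gives
\begin{equation*}
\Bignorm{\sum_{i=1}^k B_i B_i^\top}_2 \leq \sum_{i=1}^k \norm{B_i B_i^\top}_2 \leq O(k\, d_L d_R),
\end{equation*}
and the identical bound holds for $\norm{\sum_i B_i^\top B_i}_2$. We may therefore set $\sigma^2 = O(k\, d_L d_R)$ in \cref{fact:matrixkhintchine}. The dimensions satisfy $N_L = {n \choose \ell}{|P_s| \choose \ell}$ and $N_R = {n \choose \ell+1-s}{|P_s| \choose \ell+1}$, and since $|P_s| \leq O(nk/d_s) \leq n^{O(1)}$, we have $\log(N_L + N_R) \leq O(\ell \log n)$. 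Plugging into \cref{fact:matrixkhintchine} yields
\begin{equation*}
\E_b\bigl[\norm{B}_2\bigr] \leq \sqrt{2\sigma^2 \log(N_L + N_R)} \leq O\bigl(\sqrt{d_L d_R\, k\, \ell \log n}\bigr),
\end{equation*}
as claimed.

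There is no serious obstacle here: the row-pruning lemma has already done the hard work of controlling both row and column degrees of each $B_i$, so the Schur test bound on $\norm{B_i}_2$ is immediate, and the remaining steps are a routine invocation of Matrix Khintchine together with an estimate of the logarithm of the matrix dimensions.
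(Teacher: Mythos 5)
Your proposal is correct and follows essentially the same route as the paper: bound $\norm{B_i B_i^\top}_2$ and $\norm{B_i^\top B_i}_2$ by $O(d_L d_R)$ using the row/column degree bounds from \cref{lem:bipartiterowpruning}, sum over $i$ to get $\sigma^2 = O(k d_L d_R)$, and invoke \cref{fact:matrixkhintchine} with $\log(N_L+N_R) = O(\ell \log n)$. The only cosmetic difference is that you obtain $\norm{B_i B_i^\top}_2 \leq O(d_L d_R)$ via the Schur test on $B_i$, while the paper bounds the row $\ell_1$-norms of $B_i B_i^\top$ by counting length-$2$ walks — these are the same estimate.
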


We postpone the proof of \cref{claim:specnormbipartite} to the end of this section, and finish the proof of \cref{thm:bipartiteref}. We have that
\begin{flalign*}
&\E_b[\val(\Psi)] \leq \frac{\sqrt{N_L N_R}}{D'} \E_b[\norm{B}_2]  \leq  \frac{2\sqrt{N_L N_R}}{D}O(\sqrt{d_L d_R k \ell \log n}) \\
&= O(1) \cdot \sqrt{ \frac{ N_L d_L N_R d_R k \ell \log n}{D^2}} = \delta n O(\sqrt{k \ell \log n}) \mcom
\end{flalign*}
as required, where we recall that $d_L = \delta n D/N_L$ and $d_R = \delta n D/N_R$.

We now finish the proof of \cref{claim:specnormbipartite}.
\begin{proof}[Proof of \cref{claim:specnormbipartite}]
 By Matrix Khintchine (\cref{fact:matrixkhintchine}), we have $\E_b[\norm{B}_2] \leq O(\sqrt{\sigma^2 \log (N_L + N_R)})$, where $\sigma^2 = \max (\norm{\sum_{i = 1}^k B_i B_i^{\top}}_2, \norm{\sum_{i = 1}^k B_i^{\top} B_i}_2)$. Since $B_i B_i^{\top}$ is symmetric, $\norm{B_i B_i^{\top}}_2$ is bounded by the maximum $\ell_1$-norm of a row in this matrix. We observe that the $\ell_1$-norm of the $(S_1, S_2)$-th row in $B_i B_i^{\top}$ is simply the number of length $2$ walks starting from the left vertex $(S_1, S_2)$ in the bipartite graph with adjacency matrix $B_i$. As this graph has maximum left degree $O(d_L)$ and maximum right degree $O(d_R)$, it follows that this is at most $O(d_L d_R)$. Similarly, the maximum $\ell_1$-norm of a row in $B_i^{\top} B_i$ is the number of length $2$ walks starting from the right vertex $(T_1, T_2)$ in the bipartite graph $B_i$, and this is at most $O(d_R d_L)$. Hence, $\norm{\sum_{i = 1}^k B_i B_i^{\top}}_2 \leq \sum_{i = 1}^k O(d_L d_R) = O(k d_L d_R)$, and similarly $\norm{\sum_{i = 1}^k B_i^{\top} B_i}_2 \leq O(k d_L d_R)$ as well.
 
 We can thus set $\sigma^2 = O(k d_L d_R)$ and apply \cref{fact:matrixkhintchine} to conclude that $\E_b[\norm{B}_2] \leq O(\sqrt{k d_L d_R \log (N_L + N_R)})$. Recall that we have $N_L = {n \choose \ell} {\abs{P_s} \choose \ell} \leq n^{\ell} (nk)^{\ell} \leq n^{O(\ell)}$ and $N_R = {n \choose \ell + 1 - s} {\abs{P_s} \choose \ell + 1} \leq n^{\ell} (nk)^{\ell + 1} \leq n^{O(\ell)}$. Hence, $\log (N_L + N_R) = O(\ell \log n)$, which finishes the proof.
 \end{proof}

\subsection{Finding an approximately regular subgraph: proof of \cref{lem:bipartiterowpruning}}
\label{sec:bipartiterowpruning}

In this section, we prove \cref{lem:bipartiterowpruning}. Similar to \cref{lem:regularrowpruning}, we will prove \cref{lem:bipartiterowpruning} by using the strategy, due to~\cite{Yankovitz24}, of bounding ``conditional first moments''. 
\begin{lemma}[Conditional first moment bounds]
\label{lem:bipartitemoments}
Fix $i \in [k]$. For a left vertex $(S_1, S_2)$, let $\deg_{i,L}(S_1, S_2)$ denote the degree of $(S_1, S_2)$ in $A_i$, and for a right vertex $(T_1, T_2)$, let $\deg_{i,R}(T_1, T_2)$ denote the right degree in $A_i$.

Let $(C,p) \in H_i$, and let $\mu_{L,C,p}$ denote the distribution over left vertices that first chooses a uniformly random edge in $A_{C,p}$ and outputs its left endpoint. Similarly, let $\mu_{R,C,p}$ denote the distribution that outputs the right endpoint. Then, it holds that
\begin{flalign*}
&\E_{(S_1, S_2) \sim \mu_{L, C,p}}[\deg_{i,L}(S_1, S_2)] \leq 1 + O(1) \left(\frac{\ell}{n}\right)^{\frac{q - 1}{2}} \delta n\\
&\E_{(T_1, T_2) \sim \mu_{R, C,p}}[\deg_{i,R}(T_1, T_2)] \leq 1 + O(1) \left( \frac{\ell}{n} \right)^{\frac{q + 1}{2} - s} \frac{\ell}{\abs{P_s}} \delta n\mper
\end{flalign*}
\end{lemma}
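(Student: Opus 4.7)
The plan is to prove both degree bounds by the same basic method: first observe that the conditioning distributions $\mu_{L,C,p}$ and $\mu_{R,C,p}$ are uniform over the valid endpoints of edges labeled $(C,p)$, and then expand the degree as a sum of indicators and count combinatorially using the matching property of $H_i$.

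First, I would verify that $\mu_{L,C,p}$ is the uniform distribution over those $(S_1,S_2)$ with $|S_1 \cap C| = (q-1)/2$ and $p \notin S_2$, and similarly $\mu_{R,C,p}$ is uniform over $(T_1,T_2)$ with $|T_1 \cap C| = (q+1)/2 - s$ and $p \in T_2$. The reason is simply that each such left vertex has exactly one edge labeled $(C,p)$, since the other endpoint is forced to be $T_1 = S_1 \oplus C$, $T_2 = S_2 \cup \{p\}$ (and symmetrically on the right). So the number of such vertices is exactly $D$.

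Next I would write
\begin{equation*}
\deg_{i,L}(S_1,S_2) = \sum_{(C',p') \in H_i} \mathbf{1}\bigl[|S_1 \cap C'| = \tfrac{q-1}{2} \text{ and } p' \notin S_2\bigr]\mper
\end{equation*}
The term $(C',p') = (C,p)$ contributes exactly $1$ under $\mu_{L,C,p}$. For $(C',p') \neq (C,p)$, the matching property gives $C \cap C' = \emptyset$ and $p \neq p'$, so I would count the left vertices $(S_1, S_2)$ that are simultaneously valid endpoints for edges labeled $(C,p)$ and $(C',p')$. The count of $S_1$'s is $\binom{q-s}{(q-1)/2}^2 \binom{n-2(q-s)}{\ell - (q-1)}$ and the count of $S_2$'s is $\binom{|P_s|-2}{\ell}$. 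Dividing by $D$ and applying \cref{fact:binomest} gives $O(1) \cdot (\ell/n)^{(q-1)/2}$, and summing over at most $\delta n$ hyperedges $(C',p') \in H_i$ yields the left bound.

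The right bound is analogous but with two important changes that need careful tracking. The role of $S_1$ is played by $T_1$ of size $\ell+1-s$, and the condition $|T_1 \cap C| = |T_1 \cap C'| = (q+1)/2 - s$ produces a combinatorial factor of $(\ell/n)^{(q+1)/2 - s}$ by \cref{fact:binomest}. The role of $S_2$ is played by $T_2$, but now we need $p, p' \in T_2$, so we replace $\binom{|P_s|-1}{\ell}$ by $\binom{|P_s|-2}{\ell-1}$; this ratio is $\Theta(\ell/|P_s|)$, which gives the extra factor $\ell/|P_s|$ in the right bound. Summing over at most $\delta n$ hyperedges completes the proof. The main obstacle — really more of a bookkeeping point than a conceptual one — is keeping track of the asymmetric sizes of $T_1$ and $T_2$ on the right; the extra $\ell/|P_s|$ factor that emerges here is exactly the saving needed in \cref{sec:bipartiteref} to ultimately take $\ell = n^{1-2/q}$ rather than $n^{1-2/(q+1)}$, as discussed in \cref{sec:bipartitekikuchiintro}.
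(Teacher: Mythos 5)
Your proposal is correct and follows essentially the same route as the paper: identify $\mu_{L,C,p}$ and $\mu_{R,C,p}$ as uniform over the valid endpoints, expand the degree over $(C',p') \in H_i$, use the matching property to count the common endpoints as $\binom{q-s}{(q-1)/2}^2\binom{n-2(q-s)}{\ell-(q-1)}\binom{\abs{P_s}-2}{\ell}$ (and the analogous product on the right), and normalize by $D$ via \cref{fact:binomest}. The binomial ratios you identify, including the $\Theta(\ell/\abs{P_s})$ factor from replacing $\binom{\abs{P_s}-1}{\ell}$ by $\binom{\abs{P_s}-2}{\ell-1}$, match the paper's computation exactly.
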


\begin{claim}[Degree bound]
\label{claim:bipartitedegbound}
Let $d_L$ and $d_R$ be the quantities defined in \cref{def:bipartitekikuchi}. Then, for the choice of parameters given in \cref{thm:bipartiteref}, it holds that
\begin{flalign*}
&d_L \geq \Omega\left(\left(\frac{\ell}{n}\right)^{\frac{q - 1}{2}} \delta n\right) \text { and } \left(\frac{\ell}{n}\right)^{\frac{q - 1}{2}} \delta n \geq 1 \\ 
&d_R \geq \Omega\left( \left( \frac{\ell}{n} \right)^{\frac{q + 1}{2} - s} \frac{\ell}{\abs{P_s}} \delta n\right) \text{ and }  \left( \frac{\ell}{n} \right)^{\frac{q + 1}{2} - s} \frac{\ell}{\abs{P_s}} \delta n \geq 1 \mper
\end{flalign*}
\end{claim}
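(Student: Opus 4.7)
My plan is to unwind the definitions $d_L = \delta n D/N_L$ and $d_R = \delta n D/N_R$ from \cref{def:bipartitekikuchi} and reduce each bound to a direct computation using binomial-coefficient ratios combined with the parametric choice $\ell = \floor{n^{1-2/q} \delta^{-2/q}}$. Throughout, I will apply \cref{fact:binomest} to replace binomial ratios with explicit powers of $\ell/n$, and use the hypothesis $4\ell \leq \abs{P_s} \leq O(nk/d_s)$ to absorb the $P_s$-dependent factors.

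For the left-degree bound, I split $D/N_L$ into three pieces: the constant ${q - s \choose (q-1)/2}$, which is positive since $s \leq (q+1)/2$ forces $q - s \geq (q-1)/2$; the ratio ${n - (q-s) \choose \ell - (q-1)/2} / {n \choose \ell}$, which is $\Theta((\ell/n)^{(q-1)/2})$ after combining the two identities of \cref{fact:binomest}; and the ratio ${|P_s| - 1 \choose \ell} / {|P_s| \choose \ell} = 1 - \ell/|P_s| \geq 3/4$ by the assumption $|P_s| \geq 4\ell$. Multiplying gives $d_L = \Omega(\delta n (\ell/n)^{(q-1)/2})$. Substituting $\ell = n^{1-2/q}\delta^{-2/q}$ then yields $(\ell/n)^{(q-1)/2} \delta n = n^{1/q}\delta^{1/q} \geq 1$, which is a harmless precondition (and implicit in the setup since $\delta n$ is a count of hyperedges).

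For the right-degree bound, the ratio ${n - (q-s) \choose \ell - (q-1)/2} / {n \choose \ell + 1 - s}$ similarly evaluates via \cref{fact:binomest} to $\Theta((\ell/n)^{(q+1)/2 - s})$, while ${|P_s|-1 \choose \ell} / {|P_s| \choose \ell + 1}$ simplifies algebraically (no estimates needed) to $(\ell+1)/|P_s| = \Theta(\ell/|P_s|)$. Combined with the constant factor, this gives $d_R = \Omega(\delta n (\ell/n)^{(q+1)/2 - s} \cdot \ell/|P_s|)$. For the final scalar inequality, I invoke the hypothesis $|P_s| \leq O(nk/d_s)$ with $d_s = (\ell/n)^{s - 3/2} k$, which rearranges to $\ell/|P_s| \geq \Omega((\ell/n)^{s - 1/2})$. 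Therefore
\[
\left(\frac{\ell}{n}\right)^{(q+1)/2 - s} \cdot \frac{\ell}{|P_s|} \cdot \delta n \;\geq\; \Omega\!\left(\left(\frac{\ell}{n}\right)^{q/2} \delta n\right),
\]
and the definition of $\ell$ makes $(\ell/n)^{q/2} \delta n = \Theta(1)$ (the $\floor{\cdot}$ contributes only a constant factor), establishing the required lower bound.

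The only obstacle is pure bookkeeping: the exponents of $\ell/n$ from the $[n]$-side and $P_s$-side ratios must be tracked in parallel, and in the $d_R$ computation it is essential that the extra $n^{1/q}$ factor in the threshold $d_s = (\ell/n)^{s-3/2}k$ (as opposed to the more natural $(\ell/n)^{s-1}k$) is exactly what converts the seemingly imbalanced exponent $(q+1)/2 - s$ into the balanced exponent $q/2$. There is no non-trivial inequality involved; every step is either an application of \cref{fact:binomest}, an elementary simplification of binomial coefficients, or a substitution of the parametric definitions.
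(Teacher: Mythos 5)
Your proposal is correct and follows essentially the same route as the paper: unwind $d_L = \delta n D/N_L$ and $d_R = \delta n D/N_R$, estimate the binomial-coefficient ratios via \cref{fact:binomest} together with $\abs{P_s} \geq 4\ell$, and use $\abs{P_s} \leq O(nk/d_s)$ with $d_s = \left(\frac{\ell}{n}\right)^{s - \frac{3}{2}} k$ to reduce the right-degree scalar to $\left(\frac{\ell}{n}\right)^{q/2}\delta n$. Your $\Omega(1)$ bookkeeping for the final scalar inequalities (accounting for the hidden constant in the bound on $\abs{P_s}$ and for the floor in the definition of $\ell$) is, if anything, slightly more careful than the paper's literal ``$\geq 1$'', and is exactly what the claim's downstream use in \cref{lem:bipartiterowpruning} requires.
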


We postpone the proofs of \cref{lem:bipartitemoments,claim:bipartitedegbound} to the end of this subsection, and now use them to finish the proof of \cref{lem:bipartiterowpruning}.

\begin{proof}[Proof of \cref{lem:bipartiterowpruning} from \cref{lem:bipartitemoments,claim:bipartitedegbound}]
Fix $i \in [k]$. Let $\Gamma$ be a constant (to be chosen later), and let $V'_L = \{(S_1, S_2) : \deg_{i,L}(S_1, S_2) \leq \Gamma d_L\}$, and let $V'_R = \{(T_1, T_2) : \deg_{i,R}(T_1, T_2) \leq \Gamma d_R\}$. 

Let $(C,p) \in H_i$. We let $A'_{i,C,p}$ be the matrix where $A'_{i,C,p}((S_1, S_2), (T_1, T_2)) = A_{C,p}((S_1, S_2), (T_1, T_2))$ if $(S_1, S_2) \in V'_L$ and $(T_1, T_2) \in V'_R$, and otherwise $A'_{i, C, p}((S_1, S_2), (T_1, T_2)) =  0$. Namely, we have ``zeroed out'' all rows of $A_{C,p}$ that are not in $V'_L$ and all columns that are not in $V'_R$.

The left degree conditional moment bound from \cref{lem:bipartitemoments}, combined with the lower bound on $d_L$ from \cref{claim:bipartitedegbound} implies that $\E_{(S_1, S_2) \sim \mu_{L, C,p}}[\deg_{i,L}(S_1, S_2)] \leq O(d_L)$. Similarly, we have $\E_{(T_1, T_2) \sim \mu_{R, C,p}}[\deg_{i,R}(T_1, T_2)] \leq O(d_R)$. Hence, applying Markov's inequality, the number of left vertices $(S_1, S_2)$ that are adjacent to an edge labeled by $(C,p)$ and have $\deg_{i,L}(S_1, S_2) > \Gamma d_L$ is at most $O(D/\Gamma)$. Similarly, the number of right vertices $(T_1, T_2)$ that are adjacent to an edge labeled by $(C,p)$ and have $\deg_{i,R}(T_1, T_2) > \Gamma d_R$ is at most $O(D/\Gamma)$. Hence, there must be at least $D (1 - O(1/\Gamma))$ edges, i.e., nonzero entries, in $A'_{i,C,p}$.

Now, we let $B_{i, C, p}$ be any subgraph of $A'_{i,C,p}$ where $B_{i,C,p}$ has \emph{exactly} $D' = \floor{D(1 - O(1/\Gamma))}$ edges. This can be achieved by simply removing edges if there are too many. By choosing $\Gamma$ to be a sufficiently large constant, we ensure that $D' \geq D/2$.

To prove the third property, we observe that for any left vertex $(S_1, S_2)$, the matrix $B_i = \sum_{(C,p) \in H_i} B_{i,C,p}$ has at most $\Gamma d_L = O(d_L)$ nonzero entries in the $(S_1, S_2)$-th row. Indeed, this follows because it is a subgraph of the original graph $A_i$, and if $(S_1, S_2)$ had degree $> \Gamma d_L$ in $A_i$ then it has degree $0$ in $B_i$. Similarly, any right vertex $(T_1, T_2)$ has degree at most $O(d_R)$ in the matrix $B_i$. This finishes the proof.\end{proof}

It remains to prove \cref{lem:bipartitemoments,claim:bipartitedegbound}, which we do now.
\begin{proof}[Proof of \cref{lem:bipartitemoments}]
Let $(C,p) \in H_i$ be an edge. We will first compute the left degree and then the right degree. We observe that for any $(S_1, S_2) \in V_L$, the vertex $(S_1, S_2)$ is adjacent to at most one edge labeled by $(C,p)$. Hence, it follows that $\mu_{L, C,p}$ is uniform over pairs of sets $(S_1, S_2)$ such that $\abs{S_1 \cap C} = \frac{q - 1}{2}$ and $p \notin S_2$. We have
\begin{flalign*}
&\E_{(S_1, S_2) \sim \mu_{L, C,p}}[\deg_{i,L}(S_1, S_2)] = 1 + \frac{1}{D} \sum_{(C', p') \in H_i \setminus \{(C,p)\}} \abs{\{(S_1, S_2) : \abs{S_1 \cap C} = \abs{S_1 \cap C'} = \frac{q - 1}{2} \text{ and } p, p' \notin S_2\}} \\
&\leq 1 + \frac{\delta n - 1}{D} {q-s \choose \frac{q-1}{2}}^2 {n - 2(q - s) \choose \ell - (q-1)} {\abs{P_s} - 2 \choose \ell} \mcom
\end{flalign*}
where we use that $\abs{H_i} \leq \delta n$.

Applying \cref{fact:binomest} and using that $\abs{P_s} \geq 4 \ell$, we have
\begin{flalign*}
&\frac{1}{D} {q-s \choose \frac{q-1}{2}}^2 {n - 2(q - s) \choose \ell - (q-1)} {\abs{P_s} - 2 \choose \ell} = \frac{1}{{q - s \choose \frac{q-1}{2}} {n - (q - s) \choose \ell - \frac{q - 1}{2}}{\abs{P_s} - 1 \choose \ell}} {q-s \choose \frac{q-1}{2}}^2 {n - 2(q - s) \choose \ell - (q-1)} {\abs{P_s} - 2 \choose \ell} \\
&\leq O(1) \left(\frac{\ell}{n}\right)^{\frac{q - 1}{2}} \frac{1}{ {\abs{P_s} - 1 \choose \ell}} {\abs{P_s} - 2 \choose \ell}  \leq O(1) \left(\frac{\ell}{n}\right)^{\frac{q - 1}{2}} \mper
\end{flalign*}
Hence, $\E_{(S_1, S_2) \sim \mu_{L, C,p}}[\deg_{i,L}(S_1, S_2)] \leq 1 + O(1) \left(\frac{\ell}{n}\right)^{\frac{q - 1}{2}} \delta n$. 

We now compute $\E_{(T_1, T_2) \sim \mu_{R, C,p}}[\deg_{i,R}(T_1, T_2)]$. As before, for any $(T_1, T_2) \in V_R$, the vertex $(T_1, T_2)$ is adjacent to at most one edge labeled by $(C,p)$. So, it follows that $\mu_{R,C,p}$ is uniform over pairs of sets $(T_1, T_2)$ such that $\abs{T_1 \cap C} = \frac{q + 1}{2} - s$ and $p \in T_2$. We have
\begin{flalign*}
&\E_{(T_1, T_2) \sim \mu_{R, C,p}}[\deg_{i,R}(T_1, T_2)] = 1 + \frac{1}{D} \sum_{(C', p') \in H_i \setminus \{(C,p)\}} \abs{\{(T_1, T_2) : \abs{T_1 \cap C} = \abs{T_1 \cap C'} = \frac{q + 1}{2} - s  \text{ and } p, p' \in T_2\}} \\
&\leq 1 + \frac{\delta n - 1}{D} {q-s \choose \frac{q+1}{2} - s}^2 {n - 2(q - s) \choose (\ell + 1 - s) - (q + 1) + 2s} {\abs{P_s} - 2 \choose \ell - 1}
\end{flalign*}
We have
\begin{flalign*}
 &\frac{1}{D} {q-s \choose \frac{q+1}{2} - s}^2 {n - 2(q - s) \choose (\ell + 1 - s) - (q + 1) + 2s} {\abs{P_s} - 2 \choose \ell - 1} \\
 &=  \frac{1}{{q - s \choose \frac{q-1}{2}} {n - (q - s) \choose \ell - \frac{q - 1}{2}}{\abs{P_s} - 1 \choose \ell}}  {q-s \choose \frac{q+1}{2} - s}^2 {n - 2(q - s) \choose (\ell + 1 - s) - (q + 1) + 2s} {\abs{P_s} - 2 \choose \ell - 1} \\
 &= \frac{1}{ {n - (q - s) \choose \ell - \frac{q - 1}{2}}{\abs{P_s} - 1 \choose \ell}}  {q-s \choose \frac{q+1}{2} - s} {n - 2(q - s) \choose \ell - q + s} {\abs{P_s} - 2 \choose \ell - 1} \\
 &\leq O(1) \left( \frac{\ell}{n} \right)^{\frac{q + 1}{2} - s} \frac{\ell}{\abs{P_s}} \mcom
\end{flalign*}
where the last inequality is by \cref{fact:binomest} and uses that $\abs{P_s} \geq 4 \ell$.
Hence, $\E_{(T_1, T_2) \sim \mu_{R, C,p}}[\deg_{i,R}(T_1, T_2)]  \leq 1 + O(1) \left( \frac{\ell}{n} \right)^{\frac{q + 1}{2} - s} \frac{\ell}{\abs{P_s}} \delta n$. 
\end{proof}

\begin{proof}[Proof of \cref{claim:bipartitedegbound}]
We observe that by \cref{fact:binomest} and that $\abs{P_s} \geq 4 \ell$,
\begin{flalign*}
\frac{\delta n D}{N_L} = \delta n \cdot \frac{{q - s \choose \frac{q-1}{2}} {n - (q - s) \choose \ell - \frac{q - 1}{2}}{\abs{P_s} - 1 \choose \ell}} {{n \choose \ell} {\abs{P_s} \choose \ell}} \geq \delta n \cdot \Omega(1) \left(\frac{\ell}{n}\right)^{\frac{q - 1}{2}} \mper
\end{flalign*}
Because $\ell = n^{1 - 2/q} \delta^{-2/q}$, we have $\left(\frac{\ell}{n}\right)^{\frac{q - 1}{2}} \delta n \geq 1$, which finishes the case for the left degree.

We also have that
\begin{flalign*}
\frac{\delta n D}{N_R} = \delta n \cdot \frac{{q - s \choose \frac{q-1}{2}} {n - (q - s) \choose \ell - \frac{q - 1}{2}}{\abs{P_s} - 1 \choose \ell}} {{n \choose \ell + 1 - s} {\abs{P_s} \choose \ell + 1}} \geq \delta n \cdot \Omega(1) \left(\frac{\ell}{n}\right)^{\frac{q + 1}{2} - s} \frac{\ell}{\abs{P_s}} \mper
\end{flalign*}
Now, because $\abs{P_s} \leq nk/d_s$ where $d_s = \left(\frac{\ell}{n}\right)^{s - \frac{3}{2}} k$ and $\ell = n^{1 - 2/q} \delta^{-2/q}$, it follows that $\delta n \left(\frac{\ell}{n}\right)^{\frac{q + 1}{2} - s} \frac{\ell}{\abs{P_s}} \geq \delta n \left(\frac{\ell}{n}\right)^{\frac{q + 1}{2} - s} \frac{\ell}{n} \left(\frac{\ell}{n}\right)^{s - \frac{3}{2}} = \delta n  \left(\frac{\ell}{n}\right)^{\frac{q}{2}} \geq 1$. This finishes the case for the right degree.
\end{proof}
\section*{Acknowledgements}
We thank Shachar Lovett, Raghu Meka, Lisa Sauermann, and Ola Svensson for organizing a wonderful workshop at the Bernoulli Center for Fundamental Studies at EPFL on the synergies of combinatorics and theoretical computer science that led to this paper. We also thank the anonymous FOCS 2025 reviewers for their insightful comments that have improved the presentation of the paper.

\bibliographystyle{alpha}
\bibliography{references.bib}

\end{document}